\documentclass[oneside,12pt]{book}
\usepackage[margin=2.5cm,bindingoffset=1cm]{geometry}
\usepackage{setspace}
\usepackage{microtype}

\usepackage[utf8]{inputenc}

\title{\bf Distributed Optimization for \\ Smart Cyber-Physical Networks}

\date{\today}                           %

\author{
  Notarstefano, Giuseppe\\
  Universit\`a di Bologna, Bologna (Italy)\\
  \texttt{giuseppe.notarstefano@unibo.it}
  \and
  Notarnicola, Ivano\\
  Universit\`a di Bologna, Bologna (Italy)\\
  \texttt{ivano.notarnicola@unibo.it}
  \and
  Camisa, Andrea\\
  Universit\`a di Bologna, Bologna (Italy)\\
  \texttt{a.camisa@unibo.it}
}

\usepackage{amsfonts,amssymb,amsmath,amsthm}
\usepackage{mathrsfs}
\allowdisplaybreaks
\usepackage{bookmark}
\usepackage{algorithm}
\usepackage[noend]{algpseudocode}
\makeatletter
\newcommand{\StatexIndent}[1][3]{%
  \setlength\@tempdima{\algorithmicindent}%
  \Statex\hskip\dimexpr#1\@tempdima\relax}
\makeatother

\newtheorem{theorem}{Theorem}
\newtheorem{assumption}[theorem]{Assumption}
\newtheorem{definition}[theorem]{Definition}
\newtheorem{lemma}[theorem]{Lemma}
\newtheorem{proposition}[theorem]{Proposition}
\newtheorem{remark}[theorem]{Remark}
\newcommand\oprocendsymbol{\hbox{$\square$}}
\newcommand\oprocend{\relax\ifmmode\else\unskip\hfill\fi\oprocendsymbol}
\def\eqoprocend{\tag*{$\square$}}

\usepackage{soul}
\usepackage{accents}
\newcommand \ubar[1]{%
  \underaccent{\bar}{#1}}

\interfootnotelinepenalty=10000

\newcommand{\CC}{\mathcal{C}}
\newcommand{\DD}{\mathcal{D}}
\newcommand{\EE}{\mathcal{E}}
\newcommand{\GG}{\mathcal{G}}
\newcommand{\KK}{\mathcal{K}}
\newcommand{\LL}{\mathcal{L}}
\newcommand{\PP}{\mathcal{P}}
\renewcommand{\SS}{\mathcal{S}}
\newcommand{\XX}{\mathcal{X}}

\newcommand{\avgx}{\bar{\mathbf{x}}}
\newcommand{\avgy}{\bar{\mathbf{y}}}
\newcommand{\avgz}{\bar{\mathbf{z}}}

\newcommand{\bX}{\mathbf{X}}
\newcommand{\bZ}{\mathbf{Z}}

\newcommand{\bg}{\mathbf{g}}
\newcommand{\bh}{\mathbf{h}}
\newcommand{\br}{\mathbf{r}}
\newcommand{\bs}{\mathbf{s}}
\newcommand{\bu}{\mathbf{u}}
\newcommand{\bv}{\mathbf{v}}
\newcommand{\bw}{\mathbf{w}}
\newcommand{\bx}{\mathbf{x}}
\newcommand{\by}{\mathbf{y}}
\newcommand{\bz}{\mathbf{z}}
\newcommand{\1}{\mathbf{1}}
\newcommand{\0}{\mathbf{0}}

\newcommand{\subgrad}{\widetilde{\nabla}}

\newcommand{\bepsilon}{\boldsymbol{\epsilon}}

\newcommand{\bmu}{\boldsymbol{\mu}}

\newcommand{\blambda}{\boldsymbol{\lambda}}
\newcommand{\bLambda}{\boldsymbol{\Lambda}}

\newcommand{\bxi}{\boldsymbol{\xi}}

\newcommand{\real}{{\mathbb{R}}}
\renewcommand{\natural}{{\mathbb{N}}}
\newcommand{\integer}{{\mathbb{Z}}}
\newcommand{\nbrs}{\mathcal{N}}

\newcommand{\until}[1]{\{1,\ldots,#1\}}
\newcommand{\fromto}[2]{\{#1,\ldots,#2\}}
\newcommand{\subj}{\text{subj.\ to}}
\newcommand{\map}[3]{#1: #2 \rightarrow #3}

\newcommand{\lexsucc}{\stackrel{L}{>}}
\newcommand{\lexprec}{\stackrel{L}{<}}
\renewcommand{\inf}{\operatornamewithlimits{inf\vphantom{p}}}
\renewcommand{\liminf}{\operatornamewithlimits{liminf\vphantom{p}}}

\renewcommand{\lim}{\operatornamewithlimits{lim\vphantom{p}}}

\newcommand{\relint}{\mathop{\rm relint}}
\newcommand{\argmin}{\mathop{\rm argmin}}

\newcommand{\lexmin}{\mathop{\rm lexmin}}
\newcommand{\diam}{\mathop{\rm diam}}

\newcommand{\gen}{\texttt{gen}}
\newcommand{\GEN}{\texttt{GEN}}
\newcommand{\stor}{\texttt{stor}}
\newcommand{\STOR}{\texttt{STOR}}
\newcommand{\cload}{\texttt{conl}}
\newcommand{\CLOAD}{\texttt{CONL}}
\newcommand{\des}{\texttt{des}}
\newcommand{\trade}{\texttt{tr}}

\newcommand{\smallsum}{\textstyle\sum\limits}

\usepackage[dvipsnames]{xcolor}

\usepackage{tikz}

\graphicspath{{figs/},{figs/tikz/},{figs/simulations/}}

\definecolor{blue@O4S}{RGB}{0, 41, 69}
\definecolor{emph@O4S}{RGB}{0, 93, 137}
\definecolor{red@O4S}{RGB}{127,0,0}
\definecolor{gray@O4S}{RGB}{112, 112, 112}

\def\DistrGradTrack/{Distributed Gradient Tracking}

\def\DistrSubgr/{Distributed Subgradient}

\def\ConstrCons/{Constraints Consensus}
\def\DistrSimplex/{Distributed Simplex}

\begin{document}
\pagestyle{plain}

\maketitle

\iftrue

\clearpage
\thispagestyle{empty}

\null\vfill

\begin{center}
  \textbf{Abstract}
\end{center}
  The presence of embedded electronics and communication capabilities as well as
  sensing and control in smart devices has given rise to the novel concept of
  cyber-physical networks, in which agents aim at cooperatively solving complex
  tasks by local computation and communication.
  Numerous estimation, learning, decision and control tasks in smart networks
  involve the solution of large-scale, structured optimization problems in which
  network agents have only a partial knowledge of the whole problem.
  Distributed optimization aims at designing local computation and communication
  rules for the network processors allowing them to cooperatively solve the
  global optimization problem without relying on any central unit.
  The purpose of this survey is to provide an introduction to distributed
  optimization methodologies.
  Principal approaches, namely (primal) consensus-based, duality-based and
  constraint exchange methods, are formalized. An analysis of the basic schemes
  is supplied, and state-of-the-art extensions are reviewed.

\vfill\null

\clearpage
\pagenumbering{roman} %
\onehalfspacing

\tableofcontents

\clearpage
\pagenumbering{arabic} %

\chapter*{Introduction}
\label{chap:introduction} 
\addcontentsline{toc}{chapter}{Introduction}

\section*{Motivation}
In recent years, the breakthroughs in embedded electronics
are giving the opportunity to include computation
and communication capabilities in almost any device of several domains as
factories, farms, buildings, grids and cities.
Communication among devices has enabled a number of new challenges along the
direction of turning smart devices into smart (cooperating) systems. The keyword
``cyber-physical networks'' is being adopted to refer to this permeating
reality, whose distinctive feature is that a great advantage can be obtained if
its interconnected, complex nature is exploited.
A novel peer-to-peer \emph{distributed} computational framework is emerging as
a new opportunity in which peer processors, communicating over a network,
cooperatively solve a task without resorting to a unique provider that knows and
owns all the data.

Several challenges arising in cyber-physical networks can be stated as
optimization problems. Examples are estimation, decision, learning and
control applications.
To solve optimization problems over cyber-physical networks, it is not
possible to apply the classical optimization algorithms (that we call
\emph{centralized}), which require the data to be managed by a single entity.
In fact, the problem data are spread over the network, and it is undesirable
(or even impossible) to collect them at a unique node.
To this end, parallel computing serves as a source of inspiration.
In order to speed up the solution of large-scale optimization problems,
several effort has been made in designing \emph{parallel} algorithms
by splitting the computational burden among several processors.
However, for typical parallel optimization algorithms, a central coordinating
node is required and the communication topology is designed ad hoc.
In distributed computation the communication topology cannot
be thought of as a design parameter. Rather, it is a given part of the problem.
Thus, in cyber-physical networks, the goal is to design algorithms, based
on the exchange of information among the processors, that take advantage
of the aggregated computational power. All the agents must be treated as peers and
each of them must perform the same tasks and no ``master'' node must be present.
Moreover, information privacy is often a requirement (i.e., private problem
data at each node must not be shared with the other nodes).
These challenges call for tailored strategies and have given rise to a novel,
growing research branch termed \emph{distributed optimization}.

\section*{Scope of the Monograph}
The purpose of this survey is to give a comprehensive overview of the most
common approaches used to design distributed optimization algorithms, together
with the theoretical analysis of the main schemes in their basic version.
We identify and formalize classes of problem set-ups that arise in motivating 
application scenarios. For each set-up, in order to give the main tools for analysis,
we review tailored distributed algorithms in simplified cases. Extensions and generalizations
of the basic schemes are also discussed at the end of each chapter.
The algorithms have been developed by combining mathematical tools from
optimization theory (e.g., duality) and network control theory (e.g., average
consensus).
For some of the discussed algorithms, we will present also parallel algorithms that serve
as a starting point for the development of distributed methods.

We focus on three main categories of distributed optimization approaches:
\emph{(i)} primal consensus-based methods, i.e., methods combining
classical gradient or subgradient steps with local averaging schemes;
\emph{(ii)} dual methods, i.e., methods which employ the Lagrangian dual of suitable
equivalent formulations of the target problem to obtain a distributed routine;
\emph{(iii)} constraint exchange methods, which are based on the exchange of
(active) constraints among agents to compute a solution of the considered
problem.

Survey papers on distributed optimization have been proposed in the
literature. An early survey paper presenting a broad class of relevant optimization
problems in control is~\cite{necoara2011parallel}.
It also discusses tailored, parallel and distributed optimization algorithms
based on decomposition techniques and including also 
the distributed subgradient method.
Recent surveys analyze thoroughly average consensus~\cite{nedic2015convergence} 
and the distributed subgradient method~\cite{nedic2015convergence,nedic2018distributed,nedic2018network},
with a literature review on other distributed optimization techniques.
The book~\cite{nedic2018multi} provides parallel and distributed asynchronous optimization
algorithms, including gradient tracking techniques.
Some latest advances in distributed optimization are collected in~\cite{giselsson2018large}.

\section*{Organization}

In Chapter~\ref{chap:framework}, we introduce the relevant problem set-ups,
that we call \emph{cost-coupled}, \emph{constraint-coupled} and \emph{common cost},
along with several motivating applications of interest arising in estimation,
learning, decision and control.
In Chapter~\ref{chap:primal} we provide an overview of primal approaches to solve 
cost-coupled problems, namely the distributed subgradient algorithm and the 
gradient tracking algorithm.
In Chapter~\ref{chap:dual}, a discussion on relevant duality forms for distributed
optimization is first provided, and then distributed algorithms relying on Lagrangian
approaches are reviewed. Namely, for cost-coupled
problems, distributed dual decomposition and distributed ADMM algorithms
are considered, while for constraint-coupled problems, a distributed dual subgradient
algorithm and a method based on relaxation and successive distributed
decomposition are presented.
In Chapter~\ref{chap:constraint_exchange}, we focus on constraint exchange
methods. We introduce the Constraints Consensus algorithm applied to
common-cost problems, along with its most relevant extensions.

We also provide illustrative numerical examples to highlight
significant properties of the considered distributed optimization methods.
Since the described algorithms are designed for different problem set-ups,
different, relevant simulation scenarios are considered in each chapter.
 
\fi

\iftrue

\chapter{Distributed Optimization Framework}
\label{chap:framework}

In this chapter we introduce the conceptual framework for distributed
optimization in peer-to-peer networks. First, we describe the network model we
will consider throughout the survey. Then we present and motivate the main
optimization set-ups that are of interest in smart networks.

In a distributed scenario, we consider $N$ units, called \emph{agents} or 
\emph{processors}, that have both communication and computation capabilities.
Communication among agents is modeled by means of graph theory.  Informally,
given a graph $\GG$ with $N$ nodes, one for each agent, an agent $i$ can send
(receive) data to (from) another agent $j$, %
when the graph $\GG$ contains an edge connecting $i$ to $j$ ($j$ to $i$). %
In a distributed algorithm, agents initialize their local states and then 
start an iterative procedure in which communication and computation steps are 
iteratively performed, with all the nodes performing the same actions.  %
In particular, local states are updated by using only information received by in-neighbors.

In this survey we consider a distributed framework in which agents cooperatively
solve an optimization problem.
The basic assumption we make is that each agent $i$ has only a partial knowledge
of the entire problem, e.g., only a portion of the cost and/or a portion of the
constraints is locally available.
In the rest of the chapter, depending on the specific optimization set-up, we
will clarify what do we mean by cooperation among agents for the solution of
a given optimization problem.

\begin{remark}
We point out that, regardless of the optimization problem structure, our standing assumption is 
that the distributed framework is made by cooperative agents. There is another strain of research 
on non-cooperative set-ups with applications to game theoretic problems.
A non-exhaustive list of early references is~\cite{stankovic2011distributed,li2013designing,yang2010distributed}.
\oprocend
\end{remark}

\section{Distributed Computation Model}
\label{sec:network_comm_models}

In this section we formally define the communication model for a distributed algorithm.
A network is modeled as a (possibly time-dependent) directed
graph $\GG^t = (\until{N}, \EE^t)$, where $t \in \natural$ is a universal
(slotted) time, $\until{N}$ is the (fixed) set of agent identifiers and
$\EE^t \subseteq \until{N} \times \until{N}$, for all $t \ge 0$, is the
(time-dependent) set of (directed) edges over the vertices $\until{N}$, which
represents the communication links.  A graphical representation of a time-varying
network is given in Figure~\ref{fig:directed_graph}.
\begin{figure}[htpb]
\centering
  \includegraphics[scale=1]{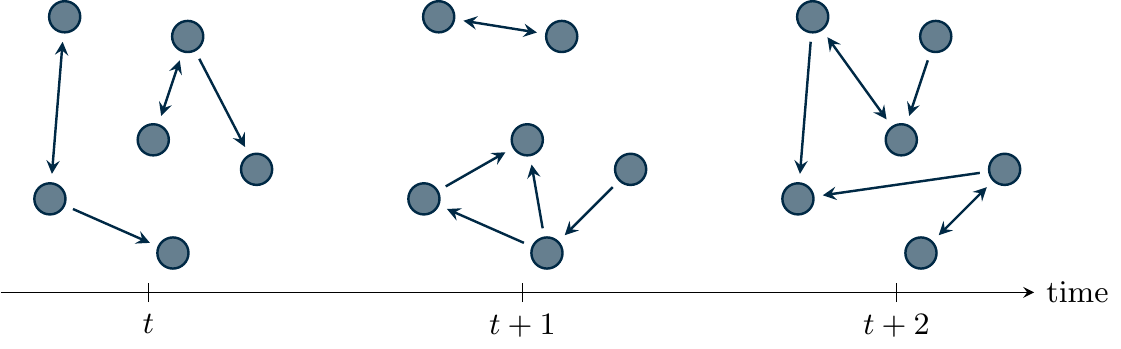}
  \caption{A directed time-varying graph of $N= 6$ nodes.}
  \label{fig:directed_graph}
\end{figure}

At each (universal) time instant $t$, a communication structure, i.e., a graph $\GG^t$, 
is active. The time-varying edge set $\EE^t$ models the communication in the sense that 
at time $t$ there is an edge from node $i$ to node $j$ in $\EE^t$ if and only if processor 
$i$ transmits information to processor $j$ at time $t$.
Given an edge $(i,j)\in \EE^t$, $i$ is called \emph{in-neighbor} of $j$ and $j$ is an 
\emph{out-neighbor} of $i$ at time $t$.
When the edge set $\EE^t$ does not depend on $t$, i.e., $\GG^t \equiv \GG$ for all $t$, we say  
that the network is fixed, otherwise the network is time-varying.
Moreover, when for every pair of nodes $i$ and $j$ in the network the edge $(i,j)$ and
the edge $(j,i)$ are in $\EE^t$, then the graph is undirected. An example of a directed and
of an undirected graph is depicted in Figure~\ref{fig:undirected_graph}.
\begin{figure}[htpb]
\centering
  \includegraphics[scale=1]{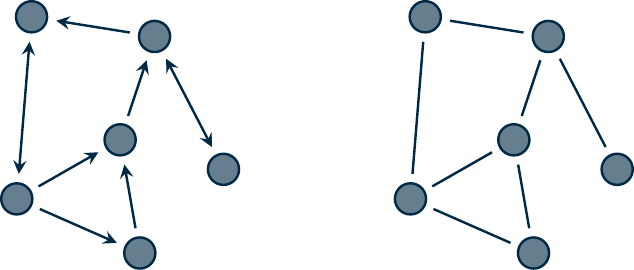}
  \caption{A directed (left) and an undirected (right) graph of $N= 6$ nodes.}
  \label{fig:undirected_graph}
\end{figure}

Given a fixed graph $\GG$, connectivity properties can be stated.
\begin{definition}
\label{def:fixed_graph_connectivity}
  A fixed directed graph $\GG$ is said to be \emph{strongly connected}
  if for every pair of nodes $(i,j)$ there exists a path of directed edges 
  that goes from $i$ to $j$. If $\GG$ is undirected, we say that $\GG$ is
  \emph{connected}.\oprocend
\end{definition}
Connectivity properties can be also stated for time-varying topologies
(we only consider directed graphs).
\begin{definition}
\label{def:time_varying_graph_connectivity}
  A time-varying directed graph $\GG^t$, $t \in \natural$, is said to be
  \begin{itemize}
    \item \emph{jointly strongly connected} if the graph $\GG_\infty^t \triangleq
      (\until{N}, \EE_\infty^t)$, with $\EE_\infty^t = \bigcup_{\tau=t}^\infty \: \EE^\tau$,
      is strongly connected for all $t \ge 0$.
    
    \item \emph{$T$-strongly connected} (or \emph{uniformly jointly strongly connected})
    if there exists a scalar $T > 0$ such that the graph $\GG_T^t \triangleq
    (\until{N}, \EE_T^t)$ with $\EE_T^t = \bigcup_{\tau=0}^{T-1} \EE^{t+\tau}$,
    is strongly connected for every $t\ge 0$.\oprocend
  \end{itemize}
\end{definition}

Given a network topology, agents can run distributed algorithms according to several
communication protocols.
When the steps of the algorithm explicitly depend on the value of $t$, we say
that the algorithm is \emph{synchronous}, i.e., agents must be aware of the current
value of $t$ and, thus, their local operations must be synchronized to a global clock.
We will also consider a communication protocol in which agents are not aware
of any global time information, i.e., their updates do not depend on $t$,
and we term these algorithms \emph{asynchronous}.
In fact, if a distributed algorithm is designed to run over a jointly strongly connected
graph, and the local computation steps do not depend on $t$, then
the algorithm can be also implemented in an asynchronous network.

\section{Optimization Set-ups}
\label{sec:setups}

In this section we describe three general optimization set-ups that comprise
several estimation, learning, decision and control application scenarios in
smart networks.
A distributed optimization algorithm for such classes of problems consists of
an iterative procedure based on the distributed computation model introduced in
Section~\ref{sec:network_comm_models}.
The goal for the agents is to eventually obtain a solution of the investigated
problem. In each considered optimization set-up, this goal translates to
different statements that will be formally specified next.

For an optimization algorithm, the aim is to minimize a scalar objective
function (or cost function), usually denoted as
$f(\bx)$, where $\bx \in \real^d$ is the decision variable. We may need to restrict
the minimizer of $f$ in a given constraint set $X \subseteq \real^d$ (or feasible set).
From now on, we use the symbol $\min$ to denote that we want to minimize $f(\bx)$
subject to the constraints, and we compactly write the overall optimization problem as
\begin{align*}
  \min_\bx \: & \: f(\bx)
  \\
  \subj \: & \: \bx \in X.
\end{align*}
The generic constraint set $X$ can also be expressed by means of equalities or
inequalities as, e.g., $h_j(\bx) = 0$ for $j \in \until{p}$, or $g_k(\bx) \le 0$
for $k \in \until{q}$, for some functions $h_j$ and $g_k$.
The equality and inequality constraints are usually compactly denoted as
$\bh(\bx) = \0$ or $\bg(\bx) \le \0$.
Centralized methods to approach this problem can be found
in~\cite{bertsekas1999nonlinear,beck2017first}.

In the remainder of this section, we introduce three structured versions of the above
general optimization problem.

\subsection{Cost-Coupled Optimization}
\label{sec:setups_cost_coupled}

We start by introducing an optimization set-up in which the cost function is expressed as 
the sum of local contributions $f_i$ and all of them depend on a common optimization variable $\bx$.
Formally, the set-up is
\begin{align}
\begin{split}
  \min_{\bx \in \real^d} \: & \: \smallsum_{i=1}^N f_i(\bx)
  \\
  \subj \: & \: \bx \in X,
\end{split}
\label{setups:cost-coupled_problem}
\end{align}
where $\bx \in \real^d$ and $X \subseteq \real^d$. The global constraint set $X$
is assumed to be common to each agent, while $\map{f_i}{\real^d}{\real}$
is assumed to be known by agent $i$ only, for all $i\in\until{N}$.
Figure~\ref{fig:setups:graph_info_cost_coupled} provides a graphical representation
of how problem information is spread over the network.
\begin{figure}[!htpb]
\centering
  \includegraphics[scale=1]{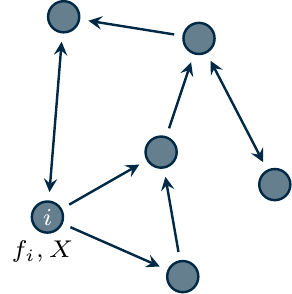}
  \caption{Cost-coupled set-up: each agent $i$ only knows $f_i$ and $X$
  and communicates according to the edges of the graph}.
  \label{fig:setups:graph_info_cost_coupled}
\end{figure}

\noindent More general versions of this optimization set-up assume that the constraint set is
more structured, e.g., $X = \bigcap_{i=1}^N X_i$, where each $X_i$
is known by agent $i$ only.

Let $\bx^\star$ denote an optimal solution of problem~\eqref{setups:cost-coupled_problem}.
For this optimization set-up, the goal is to design a distributed algorithm where
each agent updates a local estimate $\bx_i^t$ that converges (asymptotically
or in finite time) to $\bx^\star$, by means of local computation and neighboring 
communication only.
An illustrative scheme is depicted in Figure~\ref{fig:setups:cost_coupled}.
\begin{figure}[htpb]
\centering
  \includegraphics[scale=1]{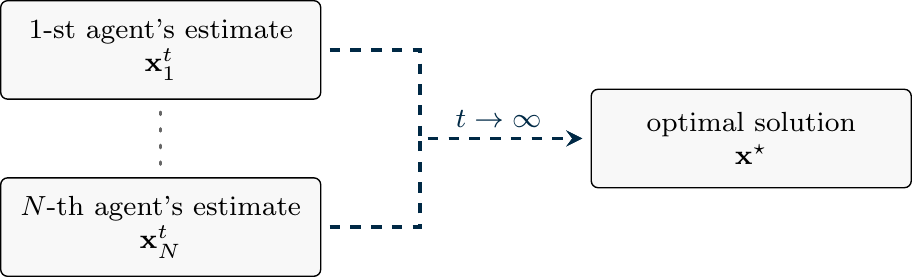}
  \caption{Illustrative scheme of the goal for the cost-coupled set-up and for the common cost set-up.}
  \label{fig:setups:cost_coupled}
\end{figure}

\begin{remark}
\label{setups:partitioned_optimization}
  An interesting optimization set-up arising in several applications is the
  so-called \emph{partitioned}, or \emph{partition-based}, set-up, first introduced in
  \cite{erseghe2012distributed}.
  The problem
  is in the form~\eqref{setups:cost-coupled_problem}, but the cost function and
  the constraints of each agent do not involve all the components of the
  decision variable, but rather they depend only on some of its components.
  This sparsity in the problem can be modeled using a graph.
  Formally, the partitioned optimization set-up is
  \begin{align*}
	  \min_{\bx} \: & \: \smallsum_{i=1}^N f_i \big( \bx_i, \{\bx_j\}_{j \in \nbrs_i} \big)
	  \\
	  \subj \: & \: \big( \bx_i, \{\bx_j\}_{j \in \nbrs_i} \big) \in X_i,
	    \hspace{1cm}
	    i \in \until{N},
  \end{align*}
  where $\bx$ denotes the vector stacking $(\bx_1,\ldots, \bx_N)$, while the notation 
  $f_i ( \bx_i, \{\bx_j\}_{j \in \nbrs_i} )$ highlights the fact that $f_i$ actually 
  depends only on the components of $\bx$ indexed by $\{i\}\cup\nbrs_i$.
  Distributed algorithms have been developed to solve partitioned problems.
  Remark~\ref{dual:partitioned_remark} discusses how to tailor
  algorithms based on dual decomposition in order to take into account the
  partitioned structure.\oprocend
\end{remark}

\subsection{Common Cost Optimization}
\label{sec:setups_common_cost}
Another important set-up arising in several applications is given by
\begin{align}
\begin{split}
  \min_{\bx \in \real^d } \:\: &\: f (\bx )
  \\
  \subj \: &\: \bx \in \bigcap_{i=1}^N X_i,
\end{split}
\label{setups:common_cost_problem}
\end{align}
where $\map{f}{\real^d}{\real}$ is known by all the agents while
each constraint $X_i \subseteq \real^d$ is known by agent $i$ only,
for all $i\in\until{N}$.
Figure~\ref{fig:setups:graph_info_common_cost} provides a graphical representation
of how information is spread over the network.
\begin{figure}[!htpb]
\centering
  \includegraphics[scale=1]{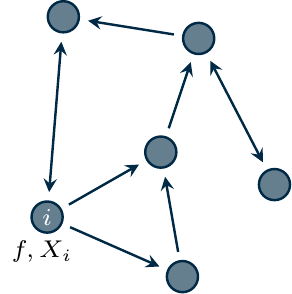}
  \caption{Common cost set-up: each agent $i$ only knows $f$ and $X_i$
  and communicates according to the edges of the graph.}
  \label{fig:setups:graph_info_common_cost}
\end{figure}

The common cost set-up~\eqref{setups:common_cost_problem} is somehow similar to the cost-coupled
set-up~\eqref{setups:cost-coupled_problem}, 
since in both cases the optimization variable is shared among the processors.
However, in the common cost set-up~\eqref{setups:common_cost_problem}, the cost
function is shared, and the coupling among the agents is due to the fact that the
optimization variable must belong to all the local constraint sets.
It is possible to think of problem~\eqref{setups:common_cost_problem}
as a special case of the cost-coupled set-up~\eqref{setups:cost-coupled_problem} (with $X = \bigcap_{i=1}^N X_i$)
by setting each $f_i(\bx) = 1/N \cdot f(\bx)$. However, notice that a commonly known cost
function explicitly allows for tailored distributed optimization algorithms such as, e.g.,
constraint exchange methods (cf. Chapter~\ref{chap:constraint_exchange}).

Let $\bx^\star$ denote an optimal solution of problem~\eqref{setups:common_cost_problem}.
For such optimization set-up, the goal is to design a distributed algorithm where
each agent updates a local estimate $\bx_i^t$ that converges (asymptotically
or in finite time) to $\bx^\star$, by means of local computation and neighboring communication only
(cf. Figure~\ref{fig:setups:cost_coupled}).

\subsection{Constraint-Coupled Optimization}
\label{sec:setups_constraint_coupled}

In this subsection, we present a different set-up %
which we call constraint-coupled.
Agents in a network want to minimize the sum of local cost functions, 
each one depending only on a local vector satisfying local constraints.
The decision vectors are then coupled to each other by means of
separable coupling constraints.
This feature leads easily to the so-called big-data problems having a very highly dimensional decision 
variable that grows with the network size. However, since agents are typically interested in computing 
only their (small) portion of an optimal solution, novel tailored methods need to be developed to 
address these challenges.

Formally, the constraint-coupled optimization problem is
\begin{align}
\begin{split}
  \min_{\bx_1,\ldots,\bx_N} \: & \: \smallsum_{i=1}^N f_i(\bx_i)
  \\
  \subj \: & \: \bx_i \in X_i, \hspace{1cm}  i \in \until{N}
  \\
  & \: \smallsum_{i=1}^N \bg_i (\bx_i) \le \0,
\end{split}
\label{setups:constraint-coupled_problem}
\end{align}
where $(\bx_1,\ldots,\bx_N)$ is the global optimization vector stacking 
all the local variables, $X_i \subseteq \real^{d_i}$, $\map{f_i}{\real^{d_i} }{\real}$ and 
$\map{\bg_i}{\real^{d_i}}{\real^S}$ are known by agent $i$ only, for all $i\in\until{N}$.
Notice that problem~\eqref{setups:constraint-coupled_problem} is challenging
because of the coupling constraints $\sum_{i=1}^N \bg_i (\bx_i) \le \0$.
If there were no coupling constraints, the optimization would trivially split into
$N$ independent problems.
Figure~\ref{fig:setups:graph_info_constraint_coupled} provides a graphical
representation of how information is spread over the network.
\begin{figure}[!htpb]
\centering
  \includegraphics[scale=1]{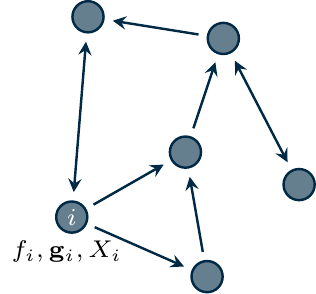}
  \caption{Constraint-coupled set-up: each agent $i$ only knows $f_i$, $X_i$ and $\bg_i$
  and communicates according to the edges of the graph.}
  \label{fig:setups:graph_info_constraint_coupled}
\end{figure}

Let $(\bx_1^\star, \ldots, \bx_N^\star)$ denote an optimal solution of
problem~\eqref{setups:constraint-coupled_problem}.
The goal is to design a distributed algorithm where each agent
updates a local estimate $\bx_i^t$ that converges (asymptotically or
in finite time) to $\bx_i^\star$,
the $i$-th portion of $(\bx_1^\star, \ldots, \bx_N^\star)$, by means 
of local computation and neighboring communication only.
An illustrative scheme is depicted in Figure~\ref{fig:setups:constraint_coupled}.
\begin{figure}[htpb]
\centering
  \includegraphics[scale=1]{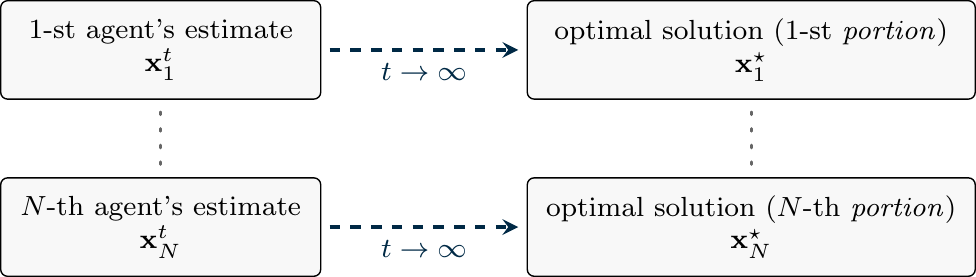}
  \caption{Illustrative scheme of the goal for the constraint-coupled set-up.}
  \label{fig:setups:constraint_coupled}
\end{figure}

A special instance of this set-up has been investigated in the context of resource
allocation, where the coupling constraint is linear, e.g., $\sum_{i=1}^N \bx_i = b$,
and there are no local constraints.
In this survey, we consider more general problems where the 
coupling may be nonlinear and local constraints are explicitly taken into account.

\begin{remark}[Comparison with the cost-coupled set-up]
  We notice that problem~\eqref{setups:cost-coupled_problem}
  can be cast as~\eqref{setups:constraint-coupled_problem} by introducing
  copies $\bx_1, \ldots, \bx_N$ of the decision vector $\bx$ and appropriate
  coherence (coupling) constraints, i.e.,
  \begin{align*}
	\begin{split}
	  \min_{\bx_1, \ldots, \bx_N} \: & \: \smallsum_{i=1}^N f_i(\bx_i)
	  \\
	  \subj \: & \: \bx_i \in X, \hspace{1cm} i \in \until{N}
	  \\
	  & \: \bx_1 = \bx_2
	  \\
	  & \hspace{0.5cm} \vdots
	  \\
	  & \: \bx_{N-1} = \bx_N
	\end{split}
	\end{align*}
  However, it is worth noticing that the coupling constraint of such reformulation enjoys
  a special, sparse structure while the constraints in~\eqref{setups:constraint-coupled_problem}
  are more general (since they involve all the agents in the network).
  \oprocend
\end{remark}

\section{Optimization Set-ups for Learning and Control}
\label{sec:setups_application}

In this section, we motivate the study of the optimization set-ups introduced in
Section~\eqref{sec:setups} by describing important application scenarios that
are of interest in control and robotics as well as communication and signal
processing.

\subsection{Regression for Data Analytics}
\label{sec:regression}

Let us consider an important task for several applications, namely the linear
\emph{regression} problem, in which we assume that a set of points in
a training dataset is used to estimate the parameters of a model
(assumed to be linear in the parameters).
The model can be exploited, e.g., to predict new generated samples.
Figure~\ref{setups:fig_regession_centralized} proposes a pictorial representation
of a simple scenario in $\real^2$.
\begin{figure}[!htpb]
\centering
  \includegraphics[scale=1]{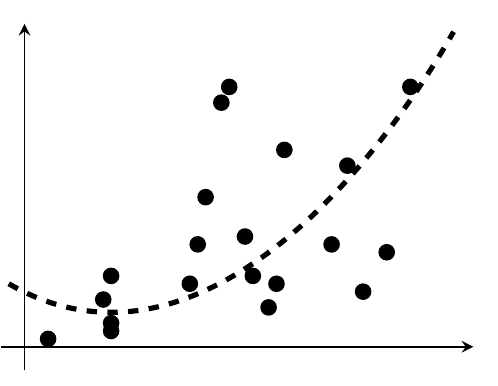}
  \caption{Set of data points in $\real^2$ that can be fit using a polynomial
	  model (i.e., linear in the parameters). The coefficients of the polynomial are
	  obtained with a regression approach.}
  \label{setups:fig_regession_centralized}
\end{figure}

Nowadays, especially in big-data contexts, a natural scenario is to assume that
the training data are not (or cannot be) gathered at a main collection center.
Rather, it is reasonable to assume that the samples are (spatially) distributed in a
network, as shown in Figure~\ref{setups:fig_regression_network}.
\begin{figure}[!htpb]
\centering
  \includegraphics[scale=1]{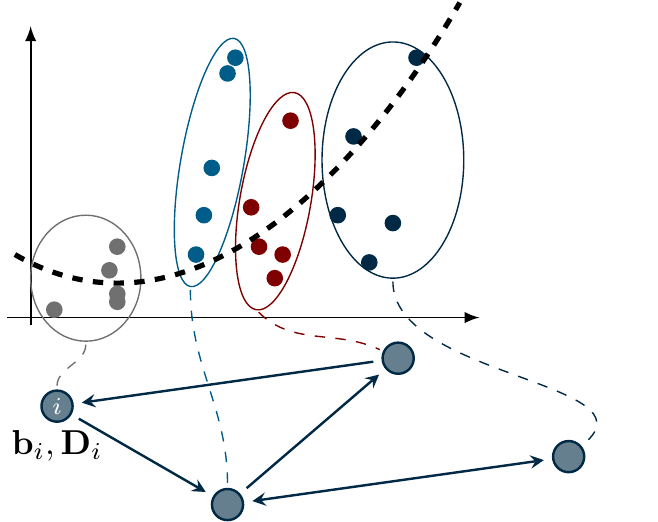}
  \caption{Regression problem over a network of $4$ agents.}
  \label{setups:fig_regression_network}
\end{figure}

\noindent Now, let us focus on Least Squares (LS), a popular regression approach.
Assume that $N$ processors in a network want to solve a
regression problem, where $\bx \in \real^d$ denotes the parameter vector that
has to be estimated, and each agent $i$ has $n_i$ observations. The (unweighted)
LS problem can be formulated as
\begin{align}
  \label{eq:LS}
  \min_{\bx} \:\: \smallsum_{i=1}^N\| \mathbf{D}_i \bx - \mathbf{b}_i \|^2
\end{align}
where, for all $i \in \until{N}$, $\mathbf{D}_i \in \real^{n_i \times d}$ is the
regression matrix and $\mathbf{b}_i \in \real^{n_i}$ is the label vector.

A typical challenge arising in regression problems is due to the fact that
problem~\eqref{eq:LS} may be ill-posed and can easily lead to
over-fitting phenomena.
A viable technique to prevent over-fitting consists in adding a suitable
\emph{regularization} term $r(\bx)$ in the cost function, leading to
\begin{align*}
  \min_{\bx} \:\: \smallsum_{i=1}^N\| \mathbf{D}_i \bx - \mathbf{b}_i \|^2 + r(\bx),
\end{align*}
where $\map{r}{\real^d}{\real}$ is assumed to be known by all the agents
in the network.
Several possibilities for the regularizer $r(\bx)$ can be chosen. For instance, by
using $\ell_1$-norm, we obtain the so-called LASSO (Least Absolute Shrinkage and 
Selection Operator) problem, i.e.,
\begin{align}
  \label{eq:LS_regularized}
  \min_{\bx} \:\: \smallsum_{i=1}^N \| \mathbf{D}_i \bx - \mathbf{b}_i \|^2 + \rho \|\bx\|_1
\end{align}
where $\rho$ is a positive scalar used to strengthen or weaken the 
effects of the regularizer.
Problem~\eqref{eq:LS_regularized} can be classified as cost-coupled, i.e.,
of the form~\eqref{setups:cost-coupled_problem}, with $X = \real^d$ and
local functions given by
$f_i(\bx) = \| \mathbf{D}_i \bx - \mathbf{b}_i \|^2 + \rho/N \cdot \|\bx\|_1$.

This problem will be used to test duality-based methods for cost-coupled problems
and a numerical example is shown in Section~\ref{sec:dual_simulations}.

\subsection{Classification via Logistic Regression}
\label{sec:setups_logistic}
Regression problems can be also set up for a classification scenario. We recall a
set-up in which linear models are trained by minimizing the so-called \emph{logistic
loss functions}.
Suppose each agent has $m_i$ points $p_{i,1}, \ldots, p_{i,m_i} \in \real^d$
(which represent training samples in a feature space) and  suppose they are
associated to binary labels, i.e., each point $p_{i,j}$ is labeled with
$\ell_{i,j} \in \{-1,1\}$, for all $j \in \until{m_i}$ and $i \in \until{N}$.
The problem consists of building a linear classification model from the training
samples by maximizing the a-posteriori probability of each class. 
In particular, we look for a separating hyperplane of the form
$\{ z \in \real^d \mid w^\top z + b = 0 \}$, whose parameters
($w$ and $b$) can be determined by solving the convex optimization problem
\begin{align}
\begin{split}
  \min_{w, b} \:  & \: \smallsum_{i=1}^N \: \smallsum_{j=1}^{m_i}
    \log \! \left[ 1 + e^{-(w^\top p_{i,j} + b) \ell_{i,j}} \right] + \dfrac{C}{2} \|w\|^2,
\end{split}
\label{eq:logistic_regression_problem}
\end{align}
where $C > 0$ is a parameter affecting regularization. We now make some
observations on problem~\eqref{eq:logistic_regression_problem}.
First, we see that it is an unconstrained optimization problem, so that an optimal
solution can always be found (even though it may be meaningless for the
classification problem). Second, we point out that the cost function is strictly
convex, so that the optimal solution is unique. Finally, notice that
the problem is cost-coupled, i.e., it is of the form~\eqref{setups:cost-coupled_problem},
with $X = \real^d$ and each $f_i$ is given by
\begin{align*}
  f_i(w, b) = \smallsum_{j=1}^{m_i} 
    \log \left[ 1 + e^{-(w^\top p_{i,j} + b) \ell_{i,j}} \right] + \dfrac{C}{2N} \|w\|^2,
  \hspace{0.5cm}
  i \in \until{N}.
\end{align*}

In a distributed setting, the goal is to make agents agree on a common
solution $(w^\star, b^\star)$, so that all of them can compute the separating
hyperplane as $\{ z \in \real^d \mid (w^\star)^\top z + b^\star = 0 \}$.

This problem is suited for the application of consensus-based primal
methods (cf. Section~\ref{chap:primal}) and a numerical example is
shown in Section~\ref{sec:primal:simulations}.

\subsection{Classification via Support Vector Machine (SVM)}
\label{sec:setups_SVM}
Support Vector Machines (SVMs) are a popular tool used in (supervised) learning
to build classification models.
Suppose we have $N$ points $p_1, \ldots, p_N \in \real^d$ (which represent training
samples in a feature space) and suppose they are associated to binary labels, i.e.,
each $p_i$ is labeled with $\ell_i \in \{-1,1\}$, for all $i \in \until{N}$.
For simplicity, we consider linear SVM (more complex
set-ups can be handled with appropriate transformations \cite{boser1992training}).
The problem consists of building a classification model from the training samples.
In particular, we look for a separating hyperplane of the form
$\{ z \in \real^d \mid w^\top z + b = 0 \}$ such that it separates all the points
with $\ell_i = -1$ from all the points with $\ell_i = 1$. In symbols:
\begin{align*}
  w^\top p_i + b &> 0, \hspace{0.5cm} \forall i \text{ such that } \ell_i = 1, \text{ and}
  \\
  w^\top p_i + b &< 0, \hspace{0.5cm} \forall i \text{ such that } \ell_i = -1.
\end{align*}
In Figure~\ref{fig:setups_SVM}, a classification example is shown.

\begin{figure}[!htbp]
  \centering
  \includegraphics[scale=1]{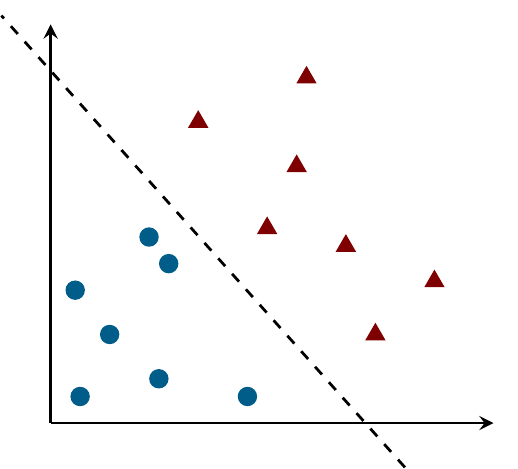}
  \caption{
    Graphical representation of a linear SVM problem in $\real^2$. The triangles and
    the dots represent points with different labels and the goal is to compute a
    separating hyperplane, denoted here by a dashed line.
  }
  \label{fig:setups_SVM}
\end{figure}

In order to maximize the distance of the separating
hyperplane from the training points, one can solve the following
(convex) quadratic program:
\begin{align}
\begin{split}
  \min_{w, b} \:  & \: \frac{1}{2} w^\top w
  \\
  \subj \: & \: \ell_i ( w^\top p_i + b ) \ge 1, \hspace{0.5cm} i \in\until{N}.
\end{split}
\label{eq:svm_problem}
\end{align}
Problem~\eqref{eq:svm_problem} is known in the literature as \emph{hard-margin} SVM problem,
and can be solved only if a separating hyperplane exists.
However, if problem~\eqref{eq:svm_problem} is infeasible (e.g., when there are outliers),
one can solve a \emph{soft-margin} SVM problem in which some of the
training samples are allowed to be on the ``wrong side'' of the hyperplane.
Formally, we consider the following relaxation of problem~\eqref{eq:svm_problem}:
\begin{align}
\begin{split}
  \min_{w, b, \bxi} \:  & \: \frac{1}{2} w^\top w + C \smallsum_{i=1}^N \xi_i
  \\
  \subj \: & \: \ell_i ( w^\top p_i + b ) \ge 1 - \xi_i, \hspace{0.5cm} i \in\until{N},
  \\
  & \: \bxi \ge 0,
\end{split}
\label{eq:soft_svm_problem}
\end{align}
where we denote by $\bxi$ the vector stacking the violations $\xi_1, \ldots, \xi_N$
and $C > 0$ weighs the effect of the relaxation.
Notice that problem~\eqref{eq:soft_svm_problem} can be viewed either as a
cost-coupled problem of the form~\eqref{setups:cost-coupled_problem},
or as a common cost problem of the form~\eqref{setups:common_cost_problem}.

In a distributed set-up, problem~\eqref{eq:soft_svm_problem} must be solved by
agents in a network. We suppose that each agent $i$ is assigned
exactly one training tuple $(p_i,\ell_i)$, so that each agent
knows one constraint of the optimization problem.
Agents eventually agree on an optimal solution
$(w^\star, b^\star, \bxi^\star)$,
so that the separating hyperplane can be computed as
$\{ z \in \real^d \mid (w^\star)^\top z + b^\star = 0 \}$.

This problem is suited, e.g., for the application of constraint exchange methods
(cf. Section~\ref{ce:sec:CP}) and a numerical example is shown
in Section~\ref{ce:sec:simulations}.

\subsection{Target Localization in Sensor Networks}
An interesting application in the field of sensor and robotic networks is the
problem of estimating the position of a target, while having information on the
position of sensors that can detect the unknown target within their field
of sensing. A representational example of the problem is given in
Figure~\ref{fig:setups_target_localization}.
\begin{figure}[!htbp]
  \centering
  \includegraphics[scale=1]{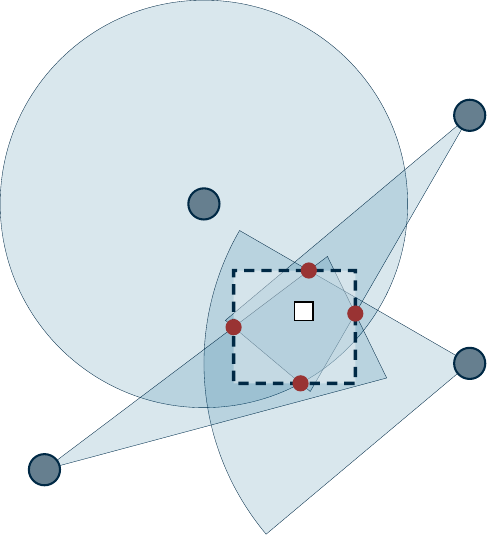}
  \caption{Localization of the target (white square) by set estimates of four sensors (blue nodes).
      The target is confined in the bounding box (dashed rectangle), which is determined
      by four extreme points (in red).
      The extreme points can be found by solving four instances of
      problem~\eqref{eq:convex_position_estimation} with different cost vector $c$.
  }
  \label{fig:setups_target_localization}
\end{figure}
Formally, we suppose that $N$ sensors are used to estimate in a distributed way
the position of a target. Each sensor $i$ knows
its position $\bv_i \in \real^2$ and the unknown target position
is denoted by $\bx \in \real^2$.
We assume that sensors in the network detect the presence of the unknown
target with two sensing mechanisms: \emph{(i)} laser transmitters
which scan through some angle, leading to a bounded cone set that can be
expressed by three linear constraints, two bounding the angle and one
bounding the distance, compactly written as $A_i \bx \le b_i$, with 
$A_i \in \real^{3 \times 2}$ and $b_i \in \real^3$,
and \emph{(ii)} the range of the RF transmitter, leading to circular
constraints of the form $\| \bx - \bv_i \|_2 \le r_i$, where
$r_i$ denotes the maximum sensing distance.
Depending on the sensing mechanisms that each sensor $i$ is equipped with,
it is possible to bound the position of the unknown target to be contained in
the intersection of convex sets $X_i$, each one known only by agent $i$, defined
as $X_i \triangleq \{ \bx \mid \| \bx - \bv_i \|_2 \le r_i \}$ if the constraint is a disk,
$X_i \triangleq \{ \bx \mid A_i \bx \le b_i \}$ if the constraint is a cone,
$X_i \triangleq \{ \bx \mid A_i \bx \le b_i,  \: \| \bx - \bv_i \|_2 \le r_i\}$
if the constraint is a quadrant.

Now, the goal for the agents is to compute the smallest bounding box
$\{ \bx \in \real^2 \mid \bx^L \le \bx \le \bx^U \}$,
for suitable $\bx^L, \bx^U \in  \real^2$, that is guaranteed to contain
the unknown position of the additional target. This can be addressed
by solving four optimization problems, one for each component of $\bx^L, \bx^U$.
For instance, to compute the first component of $\bx^L$, agents define the objective
vector $c = [1, 0]^\top$ and they cooperatively solve the optimization
problem
\begin{align}
\begin{split}
  \min_\bx \: & \: c^\top \bx
  \\
  \subj \: & \: \bx \in \bigcap_{i=1}^N X_i,
\end{split}
\label{eq:convex_position_estimation}
\end{align}
which is in the common cost form~\eqref{setups:common_cost_problem}.
After an optimal solution $\bx^\star$ is found, each agent computes
the first component of $\bx^L$ by using the first component of $\bx^\star$,
and similarly for the other coordinates.

\subsection{Task allocation/assignment}
\label{sec:task_assignment}
Task allocation is a building block for decision making problems in which
a certain number of agents must be assigned given tasks. The goal is to find the
best matching of agents and tasks according to a given performance criterion.
Here, we consider $N$ agents and $N$ tasks and we look for a one-to-one assignment.
Define the variable $x_{i\kappa}$, which is $1$ if agent $i$ is assigned to
task $\kappa$ and $0$ otherwise. Also, define the set $E_A$, which contains
the tuple $(i,\kappa)$ if agent $i$ can be assigned to task $\kappa$. Finally,
let $c_{i\kappa}$ be the cost occurring if agent $i$ is
assigned to task $\kappa$.  In Figure~\ref{fig:setups_task_assignment}, we show
an illustrative example of the set-up.
\begin{figure}[!htbp]
  \centering
  \includegraphics[scale=1]{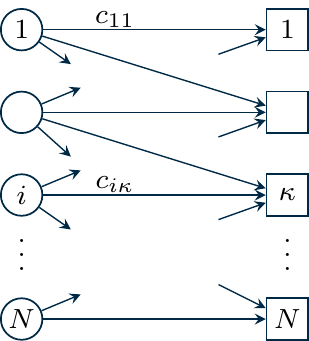}
  \caption{
    Graphical representation of the task assignment problem. Agents are represented by
    circles, while tasks are represented by squares.
    An arrow from agent $i$ to task $\kappa$ means that agent $i$ can perform task
    $\kappa$ (i.e., $(i,\kappa) \in E_A$) with corresponding cost equal to $c_{i\kappa}$.
  }
  \label{fig:setups_task_assignment}
\end{figure}

Since the objective is to minimize the total cost, the task allocation problem can be formulated
as an integer program. However, as pointed out in \cite{bertsekas1998network},
integrality constraints can be dropped to obtain the linear program
\begin{align}
\begin{split}
  \min_\bx \: & \: \smallsum_{(i,\kappa) \in E_A} c_{i\kappa} x_{i\kappa}
  \\
  \subj \: & \: \0 \leq \bx \leq \1,
  \\
  & \: \smallsum_{\{\kappa \mid (i,\kappa) \in E_A\}} x_{i\kappa} = 1 \hspace{0.5cm} \forall \: i \in \until{N},
  \\
  & \: \smallsum_{\{i \mid (i,\kappa) \in E_A\}} x_{i\kappa} = 1 \hspace{0.5cm} \forall \: \kappa \in \until{N},
\end{split}
\label{eq:assignment_problem}
\end{align}
where $\bx$ is the variable stacking all $x_{i\kappa}$.
If problem~\eqref{eq:assignment_problem} is feasible, it can be shown that
it admits an optimal solution such that $x_{i\kappa} \in \{0,1\}$ for all $(i,\kappa) \in E_A$
(see, e.g., \cite{bertsekas1998network}).
Moreover, all the optimal assignments belong to the optimal solution set of
problem~\eqref{eq:assignment_problem}.

Problem~\eqref{eq:assignment_problem} can be cast to the constraint-coupled
form~\eqref{setups:constraint-coupled_problem}.
To see this, let us define $K_i$ as the number of tasks that agent
$i$ can perform (i.e., $K_i = |\{\kappa \mid (i,\kappa) \in E_A\}|$).
We assume that agent $i$ deals with the variable $\bx_i \in \real^{K_i}$, stacking the $x_{i\kappa}$
for all $\kappa$ such that $(i,\kappa) \in E_A$. Then, the local sets $X_i$ can be written as
\begin{align}
  X_i = \{ \bx_i \in \real^{K_i} \mid \0 \leq \bx_i \leq \1 \text{ and } \bx_i^\top \1 = 1 \},
  \hspace{0.5cm}
  i \in \until{N}.
\end{align}
The coupling constraints can be written by defining, for all $i \in \until{N}$,
the matrix $H_i \in \real^{N \times K_i}$, obtained by extracting from the $N \times N$
identity matrix the subset of columns corresponding to the tasks that agent
$i$ can perform.
Problem~\eqref{eq:assignment_problem} becomes
\begin{align*}
\begin{split}
  \min_{\bx_1, \ldots, \bx_N} \: & \: \smallsum_{i=1}^N c_i^\top \bx_i
  \\
  \subj \: & \: \bx_i \in X_i, \hspace{0.5cm} i \in \until{N}
  \\
  & \: \smallsum_{i=1}^N H_i \bx_i = \1,
\end{split}
\end{align*}
where each $c_i$ stacks the costs $c_{i\kappa}$, for all $\kappa$ such that $(i,\kappa) \in E_A$.
Notice that problem~\eqref{eq:assignment_problem} can be also tackled by
resorting to its dual, which can be solved by using distributed optimization
algorithms for common-cost problems.

In a distributed context, the goal for the agents is to find an optimal solution
$\bx^\star$, but each agent $i$ is only interested in computing its portion $\bx_i^\star$
of optimal solution, which contains only one entry $x_{i\kappa} = 1$,
corresponding to the task $\kappa$ that agent $i$ is eventually assigned.

\subsection{Cooperative Distributed Model Predictive Control}
\label{sec:MPC}

Model Predictive Control (MPC) is a widely studied technique in the control
community, and is also used in distributed contexts.
The goal is to design an optimization-based feedback control law for a
(spatially distributed) network of dynamical systems.
The leading idea is the principle of \emph{receding horizon} control, which informally
speaking consists of solving at each time step an optimization problem (usually
termed \emph{optimal control problem}), in which the system model is used to
predict the system trajectory over a fixed time window.
After an optimal solution of the optimal control problem is found,
the input associated to the current time instant is applied
and the process is repeated (for a survey on MPC methods, see, e.g., \cite{rawlings2009model}).

Now, we describe a typical distributed MPC framework applied to a network of linear
systems with linear coupling constraints.
Formally, assume we have $N$ discrete-time linear dynamical systems with
independent dynamics of the form $\bz_i(s+1) = A_i \bz_i(s) + B_i \bu_i(s)$, where
$s \in \integer$ represents time; $\bz_i(s) \in \real^{q_i}$ is the system state at time $s$;
$\bu_i(s) \in \real^{m_i}$ is the input fed to the system at time $s$;
and $A_i, B_i$ are given matrices of appropriate dimensions, for all $i \in \until{N}$.
We suppose that the states and the inputs must satisfy local constraints $\bz_i(s) \in Z_i$
and $\bu_i(s) \in U_i$
for all $i \in \until{N}$, and that the agents' states are coupled to each other
by means of coupling constraints of the form $\sum_{i=1}^N H_i \bz_i(s) \leq h$,
for a given $h \in \real^P$.
Given the initial conditions of the systems $\bz_1^0, \ldots, \bz_N^0$,
the optimal control problem to be solved is
\begin{align}
\begin{split}
  \min_{\substack{\bz_1, \ldots, \bz_N \\ \bu_1, \ldots, \bu_N}} \:
    & \: \smallsum_{i=1}^N \bigg( \smallsum_{s=0}^{S-1} \ell_i( \bz_i(s), \bu_i(s) ) + V_i (\bz_i(S)) \bigg)
  \\
  \subj \: & \: \bz_i(s + 1) = A_i \bz_i(s) + B_i \bu_i(s), \: s \in \fromto{0}{S-1}, \forall \: i,
  \\
  & \: \bz_i(s) \in Z_i, \bu_i(s-1) \in U_i \hspace{1cm} s \in \until{S}, \hspace{0.65cm} \forall \: i,
  \\
  & \: \bz_i(0) = \bz_i^0, \hspace{6.47cm} \forall \: i,
  \\
  & \: \smallsum_{i=1}^N H_i \bz_i(s) \leq h, \hspace{2.45cm} s \in \until{S},
\end{split}
\label{eq:mpc_problem}
\end{align}
where $S$ is the \emph{prediction horizon}, $\bz_i = [ \bz_i(0)^\top, \ldots, \bz_i(S)^\top ]^\top$
and $\bu_i = [ \bu_i(0)^\top, \ldots, \bu_i(S-1)^\top ]^\top$ are the optimization
vectors,  $\map{\ell_i}{\real^{q_i + m_i}}{\real}$ is the \emph{stage cost} and
$\map{V_i}{\real^{q_i}}{\real}$ is the \emph{terminal cost}, for all $i \in \until{N}$.
Problem~\eqref{eq:mpc_problem} can be fit into the constraint-coupled
set-up~\eqref{setups:constraint-coupled_problem} by setting
\begin{align*}
  f_i(\bx_i) &= \smallsum_{s=0}^{S-1} \ell_i( \bz_i(s), \bu_i(s) ) + V_i (\bz_i(S)),
  \\
  \bg_i(\bx_i)
  &=
  \begin{bmatrix}
    H_i \bz_i(1) - \frac{h}{N}
    \\
    \vdots
    \\
    H_i \bz_i(S) - \frac{h}{N}
  \end{bmatrix},
\end{align*}
for all $i \in \until{N}$,
with the local optimization variables being $\bx_i = \big[\bz_i^\top, \bu_i^\top\big]^\top$
and the local constraint set $X_i$ being
\begin{align*}
  X_i
  \triangleq
  \Big\{
    (\bz_i, \bu_i) \in \real^{(S+1) q_i + Sm_i}
    \mid \:
    &\bz_i(s + 1) = A_i \bz_i(s) + B_i \bu_i(s),
  \\
    &\bz_i(s+1) \in Z_i,
    \bu_i(s) \in U_i,
    \:\: \forall \: s %
  \Big\},
\end{align*}
for all $i \in \until{N}$.
Next, we describe an example of microgrid control scenario that can
be fit into our distributed optimization framework.
A microgrid consists of several generators, controllable loads, storage devices
and a connection to the main grid. In the following, we use the notational convention
that energy generation corresponds to positive variables, while energy consumption
corresponds to negative variables. 
Generators are collected in the set $\GEN$. At each time instant $s$ in a
given horizon $[0,S]$, they generate power, denoted by $p_{\gen,i}^{s}$, that
must satisfy magnitude and rate bounds, i.e., for given positive scalars
$\ubar{p}$, $\bar{p}$, $\ubar{r}$ and $\bar{r}$, it must hold, for all
$i \in \GEN$, $\ubar{p} \le p_{\gen,i}^{s} \le \bar{p}$, with
$s \in [0,S]$, and
$\ubar{r} \le p_{\gen,i}^{s+1} - p_{\gen,i}^{s} \le \bar{r}$, with
$s \in [0,S-1]$.
The cost to produce power by a generator is modeled as a quadratic function 
$f_{ \gen,i }^s = \alpha_1 p_{\gen ,i}^s + \alpha_2 (p_{\gen ,i}^s)^2$ with 
$\alpha_1$ and $\alpha_2$ positive scalars.
Storage devices are collected in $\STOR$ and their power is denoted by
$p_{\stor,i}^{s}$ and satisfies bounds and a dynamical constraint given by
$-d_{\stor} \le p_{\stor,i}^{s} \le c_{\stor}$, $s \in [0,S]$,
$ q_{\stor,i}^{s+1} = q_{\stor,i}^{s} + p_{\stor ,i}^{s}$,
$s \in [0,S-1]$, and $0 \le q_{\stor,i}^{s} \le q_{\text{max}}$,
$s \in [0,S]$,
where the initial capacity $q_{\stor ,i}^{0}$ is given and 
$d_{\stor }$, $c_{\stor }$ and $q_{\text{max}}$ are positive scalars.
There are no costs associated with the stored power.
Controllable loads are collected in $\CLOAD$ and their power is denoted by
$p_{\cload,i}^{s}$. The power must satisfy box constraints, i.e.,
$-P \le p^s_{\cload,i} \le P, \:\:\: s \in [0,S]$.
A desired load profile $p_{\des,i}^s$ for $p_{\cload,i}^s$ is given and 
the controllable load incurs in a cost
$f_{ \cload,i }^s =\beta \max \{0, p_{\des,i}^s  - p_{\cload,i}^s \}$, $\beta \ge0$,
if the desired load is not satisfied.
Finally, the device $i=N$ is the connection node with the main grid; its power
is denoted as $p_{\trade}^{s}$ and must satisfy
$| p_{\trade }^{s} | \le E$, $s \in [0,S]$.
The power-trading cost is modeled as $f_{\trade}^s = -c_1 p_{\trade }^{s} + c_2 |p_{\trade }^{s} | $, 
with $c_1$ and $c_2$ positive scalars corresponding to the price and
a general transaction cost respectively.

The power network must provide at least a given power demand $D^s$,
which can be modeled by a \emph{coupling constraint} among the units
\begin{align}
  \smallsum_{i\in \GEN} p_{\gen,i}^{s} 
  +
  \smallsum_{i\in \STOR} p_{\stor,i}^{s}
  +
  \smallsum_{i\in \CLOAD} p_{\cload,i}^{s}
  +
  p_{\trade}^{s}
  \ge D^s,
\label{eq:microgrid_coupling}
\end{align}
for all $s\in[0,S]$. Reasonably, we assume $D^s$ to be known only by the
connection node $\trade$.

Notice that the microgrid control problem can be cast in the
constraint-coupled form~\eqref{setups:constraint-coupled_problem}.
To this end, we let each $\bx_{i}$ be the whole trajectory over the
prediction horizon $[0,S]$, i.e.,
\begin{align*}
  \bx_{i}
  \triangleq
  [p_{\gen,i}^{0},\ldots, p_{\gen,i}^{S}]^\top,
\end{align*}
for all the generators $i \in \GEN$ and, consistently, for the other device types.
As for the cost functions, we define
\begin{align*}
  f_i(\bx_{i}) \triangleq \smallsum_{s=0}^S f_{\gen, i}^s (p_{\gen,i}^s)
\end{align*}
for all the generators $i \in \GEN$ and, consistently, for the other device types.
The local constraint sets $X_i$ are given by
\begin{align*}
  X_i
  \triangleq
  \Big\{
    [p_{\gen,i}^{0},\ldots, p_{\gen,i}^{S}]^\top
    \mid \:\:
    & \ubar{p} \le p^s_{\gen,i} \le \bar{p}, \: \tau \in [0,S],
  \\
    & \ubar{r} \le p^{s+1}_{\gen,i} - p^s_{\gen,i} \le \bar{r}, \:\:\: \tau \in [0,S-1]
  \Big\},
\end{align*}
for all the generators $i \in \GEN$ and, consistently, for the other device types.
The coupling constraints are as in~\eqref{eq:microgrid_coupling}.

This problem is suited, e.g., for the application of duality-based methods
for constraint-coupled problems (cf. Section~\ref{sec:dual_constraint_coupled})
and a numerical example is shown in Section~\ref{sec:dual_simulations}.

\fi

\iftrue

\chapter{Consensus-Based Primal Methods}
\label{chap:primal}

In this chapter we focus on primal approaches to design distributed algorithms
for cost-coupled problems. We start by describing the so-called distributed
subgradient method, based on a combination of the average consensus protocol
with the subgradient method. Then, we present a recent improvement of such
consensus-based scheme, named gradient tracking, that relies on the novel
idea of tracking the gradient of the global cost function via a dynamic
consensus scheme. Then, we provide extensions to the basic schemes.
Finally, we show a numerical example to compare the two
presented algorithms.

\section{Distributed Subgradient Method}
\label{sec:distributed_gradient}

In this section we review the distributed subgradient method that has been 
proposed in the pioneering works~\cite{nedic2009distributed,nedic2010constrained}
(see also the tutorial papers~\cite{nedic2015convergence,nedic2018distributed,nedic2018network}). 
In this survey, we report a proof based on
the analysis proposed in the references above.

As already described in Section~\ref{sec:setups_cost_coupled}, we consider
a network of $N$ agents that aim to cooperatively solve the cost-coupled
problem
\begin{align}
\begin{split}
  \min_{\bx } \: & \: \smallsum_{i=1}^N f_i ( \bx)
  \\
  \subj \: & \: \bx \in \real^d,
\end{split}
\label{primal:cost_coupled_problem}
\end{align}
where each cost function $\map{f_i}{\real^d}{\real}$ is known by agent
$i$ only, for all $i\in\until{N}$.

A natural way to devise a distributed algorithm for problem~\eqref{primal:cost_coupled_problem}
is to study how it would be solved through a centralized gradient-based approach.
We recall that a subgradient method applied to~\eqref{primal:cost_coupled_problem} 
consists of an iterative procedure in which the current solution estimate, 
denoted by $\bx^t$, is updated according to
\begin{align*}
  \bx^{t+1} 
  &
  =
  \bx^t - \gamma^t \smallsum_{i=1}^N \subgrad f_i (\bx^t),
\end{align*}
where $\gamma^t$ is the step-size and $\sum_{i=1}^N \subgrad f_i (\bx^t)$
is a subgradient of the cost function at $\bx^t$.
The initial value $\bx^0$ can be set to any element of $\real^d$.

Next, we introduce the distributed subgradient algorithm 
proposed in~\cite{nedic2009distributed,nedic2010constrained}.
Each agent $i$ maintains its own estimate $\bx_i^t$ of the decision variable $\bx$,
initialized to any value in $\real^d$ and iteratively updated until it
eventually converges to an optimal solution of~\eqref{primal:cost_coupled_problem}.
The distributed subgradient algorithm is based on the combination of a
consensus protocol (cf. Appendix~\ref{sec:consensus_appendix}) with the
subgradient optimization method (cf. Appendix~\ref{app:gradient_method}) to
move each local solution estimate toward an optimal (common) solution of
problem~\eqref{primal:cost_coupled_problem}.
Algorithm~\ref{alg:distributed_subgradient} summarizes the distributed 
subgradient algorithm from the perspective of node $i$.
\begin{algorithm}%
  \begin{algorithmic}[0]
    \Statex \textbf{Initialization}: $\bx_i^0 \in \real^d$
    \smallskip
    
    \Statex \textbf{Evolution}: for $t=0,1,... $
    \smallskip
    
      \StatexIndent[0.5] 
      \textbf{Gather} $\bx_j^t$ from neighbors $j \in \nbrs_i$ 
      
      \StatexIndent[0.5] 
      \textbf{Compute}
      \begin{subequations}
      \label{primal:subg_algorithm}
      \begin{align}
      \label{primal:subg_alg_v_i}
        \bv_i^{t+1} = \smallsum_{j\in\nbrs_i} a_{ij} \, \bx_j^t
      \end{align}

      \StatexIndent[0.5] 
      \textbf{Update} 
      \begin{align}
        \label{primal:subg_alg_update}
			  \bx_i^{t+1} 
			  =
			  \bv_i^{t+1}
        - \gamma^t \, 
			  \widetilde{\nabla} f_i ( \bv_i^{t+1} )
      \end{align}
      \end{subequations}

  \end{algorithmic}
  \caption{Distributed Subgradient}
  \label{alg:distributed_subgradient}
\end{algorithm}

For presentation purposes, in this survey we consider a simplified network configuration,
so that the core idea of the scheme can be easily caught. That is, the network is modeled as 
a fixed, connected and undirected graph $\GG = (\until{N},\EE)$. %
The weights $a_{ij}$ in~\eqref{primal:subg_alg_v_i} inherit the typical assumptions
on consensus protocols, formally reported next.
\begin{assumption}
\label{primal:assumption_network_subg}
	Let the weights $a_{ij}$, $i,j \in \until{N}$ be nonnegative entries of
	$A \in \real^{N \times N}$ that match the graph $\GG$, 
	i.e., $a_{ij} \neq 0$ for all $(i,j) \in \EE$ and $a_{ij} = 0$ otherwise.
	Moreover, they satisfy
	\begin{itemize}
	  \item $\sum_{j=1}^N a_{ij} = 1$, for all $i\in\until{N}$;
	  \item $\sum_{i=1}^N a_{ij} = 1$, for all $j\in\until{N}$;
	  \item for all $i\in\until{N}$, $a_{ii} > 0$.\oprocend
  \end{itemize}
\end{assumption}

We point out that one may also consider strongly connected directed graphs
that admits a doubly-stochastic weighted adjacency matrix.
Detailed convergence analyses of distributed subgradient schemes have been
provided, e.g., in~\cite{nedic2009distributed,nedic2010constrained,
  nedic2015convergence,nedic2018network,nedic2018distributed}.  For the sake of
completeness, this survey provides a proof for the convergence of
Algorithm~\ref{alg:distributed_subgradient} that is mainly inspired by the
references above.

We start by stating the condition on the step-size $\gamma^t$ used in
the update~\eqref{primal:subg_alg_update}. 
As in the standard (centralized) subgradient method, it must satisfy a diminishing property.
\begin{assumption}
\label{primal:stepsize_diminishing}
  The step-size sequence $\{ \gamma^t \}_{t\ge0}$, with $\gamma^t \ge 0$, satisfies the
  conditions $\sum_{t=0}^{\infty} \gamma^t = \infty$,
  $\sum_{t=0}^{\infty} \big( \gamma^t \big)^2 < \infty$.
  \oprocend
\end{assumption}
As a consequence of the square summability in Assumption~\ref{primal:stepsize_diminishing}, 
the step-size vanishes as the algorithm proceeds, i.e., $\lim_{t \to \infty} \gamma^t = 0$.

Next, we state regularity requirements for problem~\eqref{primal:cost_coupled_problem}.
\begin{assumption}
\label{primal:regularity_assumption}
  Let the following conditions hold:
  \begin{itemize}
    \item[\it (i)] each $\map{f_i}{\real^d}{\real}$ is convex and has bounded subgradients,
    i.e., there exists a scalar $C_i > 0$ such that
    $\| \subgrad f_i(\bx) \| \le C_i$ for any subgradient 
    $\subgrad f_i(\bx)$ of $f_i$ at any $\bx \in \real^d$;  %

    \item[\it (ii)] problem~\eqref{primal:cost_coupled_problem} has at least one optimal
    solution, i.e., the optimal solution set 
    $X^\star = \{ \bx \in \real^d \mid  f(\bx) = f^\star \}$
    is nonempty, where $f^\star$ denotes the optimal value of
    problem~\eqref{primal:cost_coupled_problem}.~\oprocend
  \end{itemize}
\end{assumption}

Usually, in the analysis of consensus-based algorithms, it is useful to introduce
the average of the quantities that are required to be asymptotically consensual.
Here, we introduce the average of the current solution estimates, i.e., for all
$t\ge 0$ we define
\begin{align}
  \label{primal:distr_subgr_avg_definition}
  \avgx^t \triangleq \frac{1}{N} \smallsum_{i=1}^N \bx_i^t.
\end{align}
We point out that $\avgx^t \in \real^d$ has the same dimension of the local solution
estimates $\bx_i^t$ and is introduced only for the sake of analysis. Of course,
it cannot be computed by any agent and, nevertheless, it does not need to be known.
We observe that $\avgx^t$ evolves according to its own dynamics, which can be 
obtained by combining the local updates of the agents. Formally, it holds
\begin{align}
\label{primal:distr_subgr_avg_evolution}
\begin{split}
  \avgx^{t+1} 
  & = 
  \dfrac{1}{N} \smallsum_{i=1}^N \bx_i^{t+1}
  =
  \dfrac{1}{N} \smallsum_{i=1}^N
  \left (
  \bv_i^{t+1}
  -
  \subgrad f_i ( \bv_i^{t+1} )
  \right )
  \\
  & 
  =
  \dfrac{1}{N} \smallsum_{i=1}^N
  \smallsum_{j=1}^N
  a_{ij} \, \bx_j^t
  -
  \gamma^t
  \dfrac{1}{N} \smallsum_{i=1}^N
  \subgrad f_i ( \bv_i^{t+1} )
  \\
  & 
  = 
  \avgx^t
  -
  \gamma^t
  \dfrac{1}{N} \smallsum_{i=1}^N
  \subgrad f_i ( \bv_i^{t+1} ),
  \end{split}
\end{align}
where we used the (row) stochasticity of the weights $a_{ij}$.

The following result (see~\cite{bertsekas2000gradient} for a proof) is an
important building block for the forthcoming proof of the convergence of
Algorithm~\ref{alg:distributed_subgradient}. %
\begin{lemma}
\label{lem:supermartingale}
  Let $\{ Y^t\}_{t\ge0}$, $\{ W^t\}_{t\ge0}$, and $\{Z^t\}_{t\ge0}$ be three 
  scalar sequences such that $W^tt$ is nonnegative for all $t$. 
  Assume the following
  \begin{align*}
    & Y^{t+1} \le Y^t - W^t +Z^t, \qquad t \ge 0,
    \\
    & \smallsum_{t=0}^\infty Z^t < \infty.
  \end{align*}
  Then either $\lim_{t\to\infty} Y^t = -\infty$ or else $\{ Y^t \}_{t\ge0}$ converges to a finite 
  value and $\sum_{t=0}^\infty W^t < \infty$.\oprocend
\end{lemma}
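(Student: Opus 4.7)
The plan is to reduce the recursive inequality to a monotone one by absorbing the tails of $\sum Z^t$ into a shifted sequence, and then apply the elementary dichotomy for monotone sequences. Since $\sum_{t=0}^\infty Z^t$ converges, the tail sums $R^t \triangleq \sum_{s=t}^\infty Z^s$ are well defined and satisfy $R^t \to 0$ as $t\to\infty$.

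Define the auxiliary sequence $U^t \triangleq Y^t + R^t$. A direct computation using the hypothesis $Y^{t+1} \le Y^t - W^t + Z^t$ yields
\begin{align*}
  U^{t+1}
  &= Y^{t+1} + R^{t+1}
  \le Y^t - W^t + Z^t + R^{t+1}
  = Y^t - W^t + R^t
  = U^t - W^t.
\end{align*}
Since $W^t \ge 0$ for all $t$, this shows $U^{t+1} \le U^t$, so $\{U^t\}_{t\ge 0}$ is non-increasing. By the monotone convergence theorem, either $U^t \to -\infty$ or $U^t$ converges to a finite value $L \in \real$.

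Now I would transfer this dichotomy back to $Y^t$ using $Y^t = U^t - R^t$ together with $R^t \to 0$. In the first case, $Y^t = U^t - R^t \to -\infty$. In the second case, $Y^t \to L$, so $\{Y^t\}$ converges to a finite value. Finally, to obtain $\sum_{t=0}^\infty W^t < \infty$ in the second case, I would telescope the inequality $W^t \le U^t - U^{t+1}$ for $t = 0,\ldots,T-1$, giving $\sum_{t=0}^{T-1} W^t \le U^0 - U^T$, and then take $T \to \infty$ to conclude $\sum_{t=0}^\infty W^t \le U^0 - L < \infty$.

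The main (and really only) subtle point will be justifying that the tail sum $R^t$ is well defined and vanishes in the limit, since the hypothesis only states $\sum_{t=0}^\infty Z^t < \infty$ and does not presuppose sign information on $Z^t$. This follows from the Cauchy criterion for convergent series: partial sums of $Z^t$ converge, hence the tails $R^t$ converge to zero. Once this is in place, the rest of the argument is a clean monotone-sequence argument plus telescoping, with no delicate estimates involved.
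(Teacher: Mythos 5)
Your proof is correct. The paper does not actually prove this lemma itself --- it defers to the cited reference~\cite{bertsekas2000gradient} --- and your argument (absorbing the tails $R^t=\sum_{s\ge t}Z^s$ into the shifted sequence $U^t=Y^t+R^t$, which is non-increasing since $U^{t+1}\le U^t-W^t$, then invoking the monotone dichotomy and telescoping $\sum_{t=0}^{T-1}W^t\le U^0-U^T$) is exactly the standard proof of this classical result. Your one flagged subtlety is also handled properly: the reference assumes $Z^t\ge 0$, whereas you only use convergence of the series $\sum_t Z^t$, justified via the Cauchy criterion, so your version is in fact marginally more general than what the statement strictly needs.
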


The following theorem, also provided, e.g., in~\cite{nedic2009distributed,nedic2010constrained,
nedic2015convergence,nedic2018distributed,nedic2018network},
formally states the convergence properties of
Algorithm~\ref{alg:distributed_subgradient}.
For ease of notation, we consider a scalar optimization problem, i.e.,
$d = 1$.

\begin{theorem} %
  Let Assumptions~\ref{primal:assumption_network_subg},~\ref{primal:stepsize_diminishing}
  and~\ref{primal:regularity_assumption} hold and let the communication graph
  be undirected and connected.
	Then, the sequences of local solution estimates $\{ \bx_i^t\}_{t\ge0}$, $i \in\until{N}$, 
	generated by 	Algorithm~\ref{alg:distributed_subgradient}, converge to a 
	(common) solution of problem~\eqref{primal:cost_coupled_problem}, i.e., 
  for all $i\in\until{N}$, it holds
	\begin{align}
	  \lim_{t \to \infty} \| \bx_i^t - \bx^\star \| = 0,
	\end{align}
 for some $\bx^\star \in X^\star$.
\end{theorem}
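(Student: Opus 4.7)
The plan is to decompose the analysis into two coupled arguments: a \emph{consensus} part, showing that the local estimates $\bx_i^t$ track the running average $\avgx^t$, and an \emph{optimality} part, showing that $\avgx^t$ itself drives toward the optimal set. The two parts are linked because the average's dynamics~\eqref{primal:distr_subgr_avg_evolution} resemble a centralized subgradient step whose gradients are evaluated at the (slightly disagreeing) $\bv_i^{t+1}$ rather than at $\avgx^t$. First, I would rewrite Algorithm~\ref{alg:distributed_subgradient} in stacked form. Stacking the $\bx_i^t$ into a column vector and using the doubly stochastic matrix $A$ from Assumption~\ref{primal:assumption_network_subg}, the consensus step contracts the disagreement $\bx_i^t-\avgx^t$ at a geometric rate controlled by the second largest eigenvalue modulus of $A$ (which is strictly less than $1$ because $\GG$ is connected and $a_{ii}>0$). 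The subgradient step injects a perturbation of norm at most $\gamma^t C_i$, thanks to the bounded-subgradient assumption. A telescoping-type bound then yields the key consensus estimate that $\|\bx_i^t-\avgx^t\|\to 0$, and in fact $\sum_{t=0}^\infty \gamma^t\,\|\bx_i^t-\avgx^t\|<\infty$, via the summability $\sum_t(\gamma^t)^2<\infty$.

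Second, for the optimality part I would study the squared distance $\|\avgx^t-\bx^\star\|^2$ for an arbitrary fixed $\bx^\star\in X^\star$. Using~\eqref{primal:distr_subgr_avg_evolution}, expanding the square and invoking convexity of each $f_i$ at $\avgx^t$ (so that $\subgrad f_i(\bv_i^{t+1})^\top(\avgx^t-\bx^\star) \geq f_i(\avgx^t)-f_i(\bx^\star)$ up to a cross-term involving $\avgx^t-\bv_i^{t+1}$), one obtains a recursion of the form
\begin{align*}
  \|\avgx^{t+1}-\bx^\star\|^2
  \le
  \|\avgx^t-\bx^\star\|^2
  -\tfrac{2\gamma^t}{N}\bigl(f(\avgx^t)-f^\star\bigr)
  +(\gamma^t)^2\,K_1
  +\gamma^t\,K_2\,\smallsum_{i=1}^N\|\bx_i^t-\avgx^t\|,
\end{align*}
for some constants $K_1,K_2>0$ depending only on the subgradient bounds $C_i$. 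By the consensus estimate and Assumption~\ref{primal:stepsize_diminishing} the last two terms are summable, so Lemma~\ref{lem:supermartingale} applies with $Y^t=\|\avgx^t-\bx^\star\|^2$ and $W^t=\tfrac{2\gamma^t}{N}(f(\avgx^t)-f^\star)$: it follows that $\{\|\avgx^t-\bx^\star\|\}_{t\ge0}$ converges for every $\bx^\star\in X^\star$ and $\sum_t \gamma^t(f(\avgx^t)-f^\star)<\infty$, whence $\liminf_{t\to\infty}f(\avgx^t)=f^\star$ by $\sum_t\gamma^t=\infty$.

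To upgrade $\liminf$ into actual convergence of $\avgx^t$, I would use the now-standard Opial-type argument: from the previous step $\{\avgx^t\}$ is bounded, so it has a subsequence $\{\avgx^{t_k}\}$ with $f(\avgx^{t_k})\to f^\star$ converging to some $\bar\bx$; continuity of $f$ yields $\bar\bx\in X^\star$, and since $\{\|\avgx^t-\bar\bx\|\}$ converges and has a subsequence tending to $0$, the whole sequence must tend to $\bar\bx$. Finally, combining $\avgx^t\to\bar\bx$ with $\|\bx_i^t-\avgx^t\|\to 0$ gives the desired conclusion $\bx_i^t\to\bar\bx$ for every $i$, with $\bx^\star:=\bar\bx\in X^\star$.

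The main obstacle I anticipate is the consensus estimate, specifically extracting summability of $\gamma^t\|\bx_i^t-\avgx^t\|$ rather than just $\|\bx_i^t-\avgx^t\|\to 0$: this is what allows the perturbation term in the Lyapunov recursion to be absorbed into the supermartingale framework. It requires carefully exploiting the geometric contraction induced by $A$ on the disagreement subspace together with the square-summability of $\gamma^t$, typically via a discrete convolution-type bound. Everything else — the convexity manipulation, the invocation of Lemma~\ref{lem:supermartingale}, and the passage from $\liminf$ to genuine convergence — is routine once that quantitative consensus bound is in place.
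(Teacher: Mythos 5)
Your proposal is correct and follows essentially the same route as the paper's proof: geometric contraction of the disagreement via the doubly stochastic matrix to get consensus and summability of $\gamma^t\|\bx_i^t-\avgx^t\|$, a Lyapunov recursion absorbed into Lemma~\ref{lem:supermartingale} to get $\liminf_t f(\avgx^t)=f^\star$ together with convergence of the distance to each $\bx^\star\in X^\star$, and a subsequence argument to conclude. The only (immaterial) difference is that you track $\|\avgx^t-\bx^\star\|^2$ while the paper uses $\sum_{i=1}^N\|\bx_i^t-\bx^\star\|^2$, and the two are interchangeable once the consensus error is controlled.
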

\begin{proof} 
The proof provided in this manuscript is mainly based on the ones
given in \cite{nedic2009distributed,nedic2010constrained,
nedic2015convergence,nedic2018distributed,nedic2018network}.
It is based on showing the following three steps:
\begin{enumerate}
  \item asymptotic consensus of the local solution estimates to their average, i.e., 
  \begin{align}
    \lim_{t \to \infty} \| \bx_i^t - \avgx^t \| = 0,
  \end{align}
  for all $i\in\until{N}$;

  \item summability of the consensus error weighted by the step-size, i.e.,
  \begin{align}
    \lim_{T \to \infty} \smallsum_{t=0}^T \gamma^t \| \bx_i^t - \avgx^t \| < \infty;
  \end{align}

  \item convergence of the average sequence $\{ \avgx^t  \}_{t\ge 0}$ to an 
  optimal solution of problem~\eqref{primal:cost_coupled_problem}, i.e.,
  \begin{align}
    \lim_{t \to \infty} \| \avgx^t - \bx^\star \| = 0,
  \end{align}
  for some $\bx^\star \in X^\star$.
\end{enumerate}

Let $\bx^t$ be the vector stacking the local solution estimates $\bx_i^t$.
Then, the consensus error evolution is given by
  \begin{align*}
     \bx^{t+1} - \avgx^{t+1} \1
     & 
     =
     A \bx^t + \bepsilon^t 
     - \avgx^t \1 - \frac{1}{N} \smallsum_{j=1}^N \bepsilon_j^t
     \\
     & 
     =
     (A - \1\1^\top /N) \bx^t
     +
     \bepsilon^t - \frac{1}{N} \smallsum_{j=1}^N \bepsilon_j^t
     \\
     & 
     =
     (A - \1\1^\top /N) (\bx^t - \avgx^t \1)
     +
     \bepsilon^t - \frac{1}{N} \smallsum_{j=1}^N \bepsilon_j^t
     \\
     & 
     =
     (A - \1\1^\top /N) (\bx^t - \avgx^t \1)
     +
     (I - \1\1^\top /N) \bepsilon^t,
  \end{align*}
  where $\bepsilon^t$ denotes the vector stacking all the $\bepsilon_i^t$ with the short-hand 
  $\bepsilon_i^t$ for $-\gamma^t \subgrad f_i ( \bv_i^{t+1} )$, for all $i\in\until{N}$.
  
  Taking the norm of both sides in the last equation and applying the triangle inequality
  leads to
  \begin{align*}
    \| \bx^{t+1} - \avgx^{t+1} \1 \|
    & 
    \le
    \| A - \1\1^\top /N\| \| \bx^t - \avgx^t \1\|
     +
     \| I - \1\1^\top /N\| \| \bepsilon^t \|
    \\
    & 
    \le
    \sigma_A \| \bx^t - \avgx^t \1\|
    +
    \| \bepsilon^t \|,
  \end{align*}
  where we used the sub-multiplicative property of the $2$-norm, 
  we set $\sigma_A = \| A - \1\1^\top /N\|$ (i.e., the contraction
	factor associated to the matrix $A$,
  cf. Appendix~\ref{sec:consensus_static_network}), and 
  we used the bound $\| I - \1\1^\top /N\| \le 1$.

  By using the explicit solution for the free evolution and the forced evolution 
  of a linear time-invariant system, the term $\| \bx^t - \avgx^t \1 \|$ can be
  bounded as follows
  \begin{align*}
    \| \bx^t - \avgx^t \1 \|
    & 
    \le
    \sigma_A^t \| \bx^0 - \avgx^0 \1\|
    +
    \smallsum_{\tau=0}^{t-1}
    \sigma_A^{t-1-\tau}
    \| \bepsilon^\tau \|.
  \end{align*}
  Since by assumption $\| \bepsilon^\tau \| \to 0$ (cf. Assumption~\ref{primal:stepsize_diminishing}
  and \ref{primal:regularity_assumption}(i)) and $\sigma_A \in (0,1)$,
  it can be proven that
  \begin{align}
  \label{primal:summable_term}
  \lim_{t \to \infty}
  \smallsum_{\tau=0}^{t-1}
    \sigma_A^{t-1-\tau}
    \| \bepsilon^\tau \|
    = 0,
  \end{align}
  which, in turns, implies that 
  \begin{align}
  \label{primal:distr_subg_consensus}
    \lim_{t \to \infty}  
    \| \bx^t - \avgx^t \1 \|
    = 0.
  \end{align}

Next we show the summability condition. It holds
\begin{align}
\label{primal:distr_subg_consensus_summability}
  \begin{split}
  \lim_{T\to \infty}
  \smallsum_{t =0}^T
  \gamma^t \| \bx^t - \avgx^t \1 \|
  & \le
  \lim\limits_{T\to \infty}
  \smallsum_{t =0}^T
  \gamma^t \sigma_A^t \| \bx^0 - \avgx^0 \1\|
  \\
  & \quad
  +
  \lim\limits_{T\to \infty}
  \smallsum_{t =0}^T
  \gamma^t
  \smallsum_{\tau=0}^{t-1}
  \sigma_A^{t-1-\tau}
  \| \bepsilon^\tau \|
  \\
  & 
  \stackrel{(a)}{=}
  \lim\limits_{T\to \infty}
  \smallsum_{t =0}^T
  \gamma^t \sigma_A^t \| \bx^0 - \avgx^0 \1\|
  \\
  & \quad
  +
  \lim\limits_{T\to \infty}
  \smallsum_{t =0}^T
  \gamma^t
  \smallsum_{\tau=0}^{t-1}
  \sigma_A^{t-1-\tau}  
  \gamma^\tau C
  \\
  & 
  \stackrel{(b)}{\le}
  \kappa,
  \end{split}
\end{align}
for some finite $\kappa$, where in (a) we rearranged terms;
in (b) the first term is bounded due to geometric
series properties (cf. Assumption~\ref{primal:stepsize_diminishing}
and recall $\sigma_A \in (0,1)$), 
while the second one can be shown to be bounded by using
the Young's inequality%
\footnote{For all $a \ge 0$, and $b \ge 0$, it holds $2ab \le a^2 + b^2$.}
to write
$$
  \smallsum_{t =0}^T
  \gamma^t
  \smallsum_{\tau=0}^{t-1}
  \sigma_A^{t-1-\tau}  
  \gamma^\tau 
  \le
  \smallsum_{t =0}^T
  \smallsum_{\tau=0}^{t-1}
  \sigma_A^{t-1-\tau}  
  \Big( (\gamma^t)^2 +
  (\gamma^\tau)^2 \Big)/2,
$$
and, then, exploiting subgradient boundedness (cf. 
Assumption~\ref{primal:regularity_assumption}), geometric series properties 
(recall $\sigma_A \in (0,1)$) and the step-size properties (cf. square 
summability of $\gamma^t$ in Assumption~\ref{primal:stepsize_diminishing}).

Finally, we study convergence to the optimum by showing that a proper candidate 
function, say $V^t$, decreases along the algorithmic evolution.
Let $V^t$ be a measure of the distance between the local solution estimates 
$\bx_i^t$, $i\in\until{N}$, and an optimal solution to 
problem~\eqref{primal:cost_coupled_problem}, i.e.,
\begin{align}
\label{primal:Lyapunov_distr_subg}
  V^t 
  \triangleq 
  \smallsum_{i=1}^N \| \bx_i^t - \bx^\star \|^2,
\end{align}
where $\bx^\star \in X^\star$.
Due to convexity of each $f_i$ %
and subgradient boundedness (cf. Assumption~\ref{primal:regularity_assumption}), 
it follows that
  \begin{align*}
  \begin{split}
    V^{t+1} 
    =
    \smallsum_{i=1}^N \| \bx_i^{t+1} - \bx^\star \|^2
    & 
    =
    \smallsum_{i=1}^N 
    \| \bv_i^{t+1}
      - \gamma^t \subgrad f_i ( \bv_i^{t+1} )
    - \bx^\star \|^2
    \\
    & 
    =
    \smallsum_{i=1}^N 
    \| \bv_i^{t+1} - \bx^\star \|^2
    +
    \smallsum_{i=1}^N 
    \Big \| \gamma^t \subgrad f_i ( \bv_i^{t+1} ) \Big \|^2
    \\    
    & 
    \quad 
    -
    2 \gamma^t
    \smallsum_{i=1}^N 
    \subgrad f_i ( \bv_i^{t+1} )^\top 
    (\bv_i^{t+1} - \bx^\star )
    \\
    & 
    \stackrel{(a)}{\le}
    \smallsum_{j=1}^N \left ( \smallsum_{i=1}^N a_{ij}\right ) \| \bx_j^t - \bx^\star \|^2
    + 
    (\gamma^t)^2 \smallsum_{i=1}^N C_i^2
    \\
    & 
    \quad - 
    2\gamma^t \smallsum_{i=1}^N 
    \Big ( f_i ( \bv_i^{t+1} ) - f_i(\bx^\star) \Big ),
  \end{split}
  \end{align*}
  where in (a) we exploited convexity of the square norm $\| \cdot \|^2$
  and weights properties (cf. Assumption~\ref{primal:assumption_network_subg}) 
  to write $ \sum_{i=1}^N \|  \bv_i^{t+1} - \bx^\star \|^2 
  =
  \sum_{i=1}^N \|  \sum_{j\in\nbrs_i} a_{ij} (\bx_j^t - \bx^\star) \|^2 
  \le 
  \sum_{j=1}^N (\sum_{i=1}^N  a_{ij} ) \| \bx_j^t - \bx^\star \|^2$; subgradient 
  boundedness (cf. Assumption~\ref{primal:regularity_assumption});
  and the subgradient definition (cf. Appendix \ref{app:subgradient_method}.
  Compactly it holds
  \begin{align}
  \label{primal:descent_1}
    V^{t+1} 
    \le 
    V^t 
    -
    2\, \gamma^t \smallsum_{i=1}^N \Big ( f_i ( \bv_i^{t+1} ) - f_i(\bx^\star) \Big ) 
    + 
    (\gamma^t)^2
    C^2,
  \end{align}
  where $C^2 \triangleq \sum_{i=1}^N C_i^2$.
  Adding and subtracting $2\, \gamma^t \sum_{i=1}^N f_i( \avgx^{t + 1} )$ yields
  \begin{align}
  \begin{split}
    V^{t+1} 
    & 
    \le 
    V^t 
    -
    2 \gamma^t \smallsum_{i=1}^N \Big ( f_i( \avgx^{t + 1} ) - f_i(\bx^\star) \Big) 
    \\
    &
    \quad
    +
    2 \gamma^t \smallsum_{i=1}^N 
    \Big ( f_i( \avgx^{t + 1} ) - f_i (\bv_i^{t+1} ) \Big) 
    + 
    (\gamma^t)^2
    C^2
    \\
    & 
    \stackrel{(a)}{\le}
    V^t 
    -
    2 \gamma^t \smallsum_{i=1}^N \Big( f_i( \avgx^{t + 1} ) - f_i(\bx^\star) \Big ) 
    \\
    &
    \quad
    +
    2 C \gamma^t \smallsum_{i=1}^N \| \avgx^{t+1} - \bx_i^t \|
    + 
    (\gamma^t)^2
    C^2,
    \end{split}
  \end{align} 
  where in (a) we used subgradient boundedness to write
  \begin{align*}
    \smallsum_{i=1}^N \Big | f_i( \avgx^{t + 1} ) - f_i ( \bv_i^{t+1} ) \Big |
    & \le 
    C
    \smallsum_{i=1}^N \| \avgx^{t + 1} - \smallsum_{j\in\nbrs_i} a_{ij} \bx_j^t \|
    \\
    & \le 
    C
    \smallsum_{i=1}^N \| \avgx^{t + 1} - \bx_i^t \|.
  \end{align*}
  Notice that $\sum_{i=1}^N ( f_i( \avgx^{t + 1} ) - f_i ( \bx^\star) ) > 0$
  since $\bx^\star$ is a minimum of~\eqref{primal:cost_coupled_problem}.
  Using Lemma~\ref{lem:supermartingale} we can conclude that: 
  \begin{itemize}
    \item the sequence $\{ V^t \}_{t\ge 0}$
    converges to a finite value, say $\bar{V}$, for every $\bx^\star \in X^*$, and 
  
    \item the average sequence $\{ \avgx^t \}_{t\ge 0}$ satisfies
  \begin{align}
  \label{primal:distr_subg_cost_convergence}
    \liminf_{ t\to\infty } 
    \smallsum_{i=1}^N f_i (\avgx^{t + 1}) 
    = 
    \smallsum_{i=1}^N f_i (\bx^\star) 
    = 
    f^\star.
  \end{align}
\end{itemize}  

Since the sequence $\{ V^t \}_{t\ge 0}$ (cf. its definition in~\eqref{primal:Lyapunov_distr_subg}) 
converges, then also the sequence $\{\sum_{i=1}^N\| \bx_i^t - \bx^\star \|\}_{t\ge0}$ 
converges, for every $\bx^\star \in X^\star$. 
Moreover, recall that by consensus achievement~\eqref{primal:distr_subg_consensus}, 
it holds $\lim_{t \to\infty}\| \bx_i^t - \avgx^t\|=0$. 
Therefore, also $\{ \| \avgx^t - \bx^\star \| \}_{t\ge0}$ must converge.

In view of~\eqref{primal:distr_subg_cost_convergence} and of continuity of 
$f$ (due to its convexity), one of the limit points of $\{ \avgx^t \}_{t\ge0}$ must 
belong to $X^\star$; thus, consider a subsequence $\{ \avgx^{t_k} \}_{k\ge0}$ of 
$\{ \avgx^t \}_{t\ge0}$ converging to an optimum, i.e., such that 
$\lim_{k\to\infty} \| \avgx^{t_k} - \bx^\infty\| = 0$,
with $\bx^\infty \in X^\star$.
Convergence of $\avgx^{t_k} $ with the asymptotic consensus property (cf. 
eq.~\eqref{primal:distr_subg_consensus}) implies that also 
$\lim_{k\to\infty} \| \bx_i^{t_k} - \bx^\infty\| = 0$, for all $i\in\until{N}$.
But in view of convergence of $V^t = \sum_{i=1}^N \| \bx_i^t - \bx^\star \|^2$, 
it must be that the (entire) sequence 
$\{ \bx_i^t \}_{\ge 0}$ converges to $\bx^\star \in X^\star$.
\end{proof}

It is worth mentioning that convergence of the distributed subgradient algorithm
to an optimum can only be guaranteed with a diminishing step-size.
This is mainly due to the
fact that at each iteration, each agent $i$ considers an update direction
depending only on its local objective function $f_i$, rather than on the entire
cost function $\sum_{i=1}^N f_i$.

Convergence rates have been proven for the distributed subgradient method
and its variants. In~\cite{nedic2015distributed}, a convergence rate of $\mathcal{O}(\ln{t}/\sqrt{t})$
is proved for an extension of the distributed subgradient algorithm for directed graphs.

\section{Gradient Tracking Algorithm}
\label{sec:distributed_gradient_tracking}

In this section we review a recent method for cost-coupled problems
(cf. Section~\ref{sec:setups_cost_coupled}) that exhibits a faster convergence
rate because it allows for the use of a constant step-size.
The underlying idea of this novel approach is to implement a distributed
consensus-based mechanism to track the gradient of the whole cost function. 
Thanks to this tracking mechanism, a linear convergence rate has been shown for
this scheme, matching the rate of the centralized gradient method.

Formally, we consider a cost-coupled problem in the form~\eqref{primal:cost_coupled_problem},
where the cost functions $f_i$ satisfy suitable regularity properties that will be 
specified next.

In order to understand the concept
underlying the gradient tracking algorithm,
let us consider the (centralized) gradient method applied 
to~\eqref{primal:cost_coupled_problem}. If we denote by
$\bx^t$ the (centralized) solution estimate, the method reads
\begin{align}
\label{primal:centralized_gradient}
  \bx^{t+1} & = \bx^t - \gamma \smallsum_{h=1}^N \nabla f_h (\bx^t).
\end{align}

In a distributed context, each agent $i$ has its own version $\bx_i^t$ of the current solution 
estimate $\bx^t$. Thus, the gradient scheme~\eqref{primal:centralized_gradient} 
can be adapted as follows
\begin{align*}
  \bx_i^{t+1} 
  & = \smallsum_{j \in \nbrs_i} a_{ij} \bx_j^t 
  - \gamma \smallsum_{h=1}^N \nabla f_h (\bx_h^t),
\end{align*}
where the consensus iteration $\sum_{j \in \nbrs_i} a_{ij} \bx_j^t $ is meant to
enforce an agreement among the agents. However, still the descent direction
$\sum_{h=1}^N \nabla f_h (\bx_h^t)$ requires a global knowledge that is not
locally available.  To overcome this issue, the exact (centralized) descent direction is
replaced by a local descent direction, say $\by_i^t$, which is updated through
a \emph{dynamic} average consensus iteration to eventually track
$\sum_{h=1}^N \nabla f_h (\bx_h^t)$.
Informally, the dynamic average consensus is a distributed algorithm in which 
each agent $i$ has access only to its local (possibly time-varying) signal, say $\br_i^t$,
and wants to track the (time-varying) average signal $1/N \cdot \sum_{h=1}^N \br_h^t$ 
by exchanging information only with neighbors. See Appendix~\ref{sec:dynamic_average_consensus} 
for further details.
In the context of gradient tracking,
each agent's signal is the local gradient at the current estimate, i.e., $\br_i^t = \nabla f_i(\bx_i^t)$.
The following table (Algorithm~\ref{alg:distributed_grad_tracking}) 
formally summarizes the gradient tracking algorithm from the 
perspective of agent $i$, where 
eq.~\eqref{alg:distributed_gradient_tracking_y} describes the dynamic average 
consensus iteration for the tracking of $\sum_{h=1}^N \nabla f_h (\bx_h^t)$,
the local solution estimate $\bx_i^t$ is initialized to any vector in $\real^d$
and the gradient tracker $\by_i^t$ is initialized to $\nabla f_i ( \bx_i^0 )$.
\begin{algorithm}[H]

  \begin{algorithmic}[0]
    \Statex \textbf{Initialization}: $\bx_i^0$ and $\by_i^0 = \nabla f_i ( \bx_i^0 )$
    \smallskip
    
    \Statex \textbf{Evolution}: for $t=0,1,... $
    \smallskip
    
      \StatexIndent[0.5] 
      \textbf{Gather} $\bx_j^t$ from neighbors $j \in \nbrs_i$
      
      \StatexIndent[0.5] 
      \textbf{Update}
      \begin{align}
			  \label{alg:distributed_gradient_tracking_x}
			  \bx_i^{t+1} & = \smallsum_{j\in\nbrs_i} a_{ij} \, \bx_j^t - \gamma \, \by_i^t
      \end{align}
    
      \StatexIndent[0.5] 
      \textbf{Gather} $\by_j^t$ from neighbors $j \in \nbrs_i$

      \StatexIndent[0.5]
      \textbf{Update}
      \begin{align}
			 \label{alg:distributed_gradient_tracking_y}
			  \by_i^{t+1} & = \smallsum_{j\in\nbrs_i} a_{ij} \, \by_j^t 
			  + 
			  \big ( \nabla f_i ( \bx_i^{t+1} ) - \nabla f_i ( \bx_i^t ) \big)
      \end{align}
      
  \end{algorithmic}
  \caption{Gradient Tracking}
  \label{alg:distributed_grad_tracking}
\end{algorithm}

Gradient tracking algorithms have been proposed with several 
names and versions in the literature, but with a common underlying idea.
Early works~\cite{zanella2011newton,zanella2012asynchronous,varagnolo2016newton}
propose the novel idea of distributively tracking a Newton-Raphson direction by 
means of suitable average consensus ratios. In~\cite{carli2015analysis} the same 
approach has been extended to deal with directed, asynchronous networks with lossy 
communication.
More recently, the idea of gradient tracking has been independently proposed by 
several research groups.
In~\cite{dilorenzo2015distributed,dilorenzo2016next} the authors
consider constrained nonsmooth and nonconvex problems, 
while in~\cite{xu2015augmented,xu2018convergence} strongly 
convex, unconstrained, smooth optimization problems are addressed
with agent-specific stepsizes.
Works \cite{sun2016distributed,sun2017distributed} extend the algorithms to
(possibly) time-varying digraphs (still in a nonconvex setting). A convergence 
rate analysis of the scheme was later developed in
\cite{nedic2016cdc,nedic2017achieving,qu2016cdc,qu2017harnessing,xu2018convergence}, 
where~\cite{nedic2016cdc,nedic2017achieving} consider time-varying 
(directed) graphs.
Several other recent works investigate the same scheme under numerous 
variants, such as~\cite{nedic2017geometrically,qu2017accelerated,xi2018addopt,xin2018linear}.

In order to highlight the key tools needed for the analysis of such class of algorithms,
in this survey we investigate a simplified scenario that is characterized afterwards.

\begin{assumption}
\label{primal:assumption_lipschitz_gradient}
  For all $i\in\until{N}$, each cost function $\map{f_i}{\real^d}{\real}$ satisfies the following conditions
  \begin{itemize}
  \item it 
  is $\alpha$-strongly convex, i.e.,
  \begin{align*}
    f_i ( \bw)  \ge f_i ( \bz) + \nabla f_i ( \bz)^\top (\bw - \bz ) + \frac{\alpha}{2} \| \bw - \bz \|^2,
  \end{align*}
  for all $\bw,\bz \in \real^d$ and $\alpha >0$;
  \item it has Lipschitz continuous gradient with constant $L>0$, i.e.,
  \begin{align*}
    \| \nabla f_i ( \bw) - \nabla f_i ( \bz) \| \le L \| \bw - \bz \|,
  \end{align*}
  for all $\bw,\bz \in \real^d$.~\oprocend
  \end{itemize}
\end{assumption}
Since each $f_i$ is a strongly convex function, then also their sum is strongly convex.
Thus under Assumption~\ref{primal:assumption_lipschitz_gradient}, 
problem~\eqref{primal:cost_coupled_problem} has a unique optimal 
solution, denoted by $\bx^\star$. Notice that it holds $\alpha \le L$. 
We point out that one can also consider a more general case in which 
each $f_i$ has $L_i$-Lipschitz continuous gradient. The results proved next
still hold by setting in the analysis $L = \sum_{i=1}^N L_i$.

Similarly to the distributed subgradient algorithm in Section~\ref{sec:distributed_gradient},
we consider a simple network scenario modeled as a fixed, connected and 
undirected graph $\GG = (\until{N},\EE)$. We assume the weights $a_{ij}$ satisfy
a double stochasticity property as formalized
in Assumption~\ref{primal:assumption_network_subg}.

The gradient tracking scheme has been proposed in~\cite{dilorenzo2016next,
sun2017distributed} with a diminishing step-size $\gamma^t$. 
As in the distributed subgradient algorithm (cf. Section~\ref{sec:distributed_gradient}), 
this choice allows one to decouple the convergence analysis in two independent parts,
i.e., consensus achievement and asymptotic convergence of the consensual value to 
the optimum.
When a constant step-size $\gamma$ is used, as done in this survey, the
proof cannot be split in two parts anymore, but consensus and optimality need
to be handled simultaneously.

Since the gradient tracking algorithm is a consensus-based scheme, 
it is convenient to introduce average quantities of local agent variables. 
Namely, we define the average of the solution estimates and the average
of the trackers as
\begin{align*}
  \avgx^t & = 
  \dfrac{1}{N} \smallsum_{i=1}^N \bx_i^t
  \\
  \avgy^t & =
  \dfrac{1}{N} \smallsum_{i=1}^N \by_i^t,
\end{align*}
for all $t\ge 0$.
Using simple algebraic manipulations, it can be shown that the average 
quantities evolve as the following linear dynamical system
\begin{align}
  \label{primal:avgx_evolution}
  \avgx^{t+1} & =
  \avgx^t - \gamma \, \avgy^t
  \\
  \label{primal:avgy_evolution}
  \avgy^{t+1} & = 
  \avgy^t + \dfrac{1}{N} \smallsum_{i=1}^N
    \Big( \nabla f_i (\bx_i^{t+1}) - \nabla f_i (\bx_i^t) \Big).
\end{align}

By exploiting the (column) stochasticity of consensus weights (cf. 
Assumption~\ref{primal:assumption_network_subg}) and the initialization of 
the trackers, i.e., $\by_i^0 = \nabla f_i (\bx_i^0)$, one can show that 
a conservation property for the tracker average $\avgy^t$ holds. 
That is 
\begin{align}
\begin{split}
  \avgy^{t+1} - \dfrac{1}{N} \smallsum_{i=1}^N
  \nabla f_i (\bx_i^{t+1})
  & = 
  \avgy^t - \dfrac{1}{N} \smallsum_{i=1}^N
  \nabla f_i (\bx_i^t)
  \\
  & = 
  \avgy^0 - \dfrac{1}{N} \smallsum_{i=1}^N \nabla f_i (\bx_i^0)
  = 0,
\end{split}
\label{primal:conservation_y}
\end{align}
which implies $\avgy^t = 1/N \cdot \sum_{i=1}^N \nabla f_i (\bx_i^t)$,
for all $t\ge 0$.

The analysis we propose is mainly a detailed version of the proof provided 
in~\cite{xin2018linear} for the above simplified scenario.
In addition, we consider a scalar optimization problem, i.e., we set $d=1$.

The proof starts by characterizing the interconnection among the following quantities:
\begin{itemize}
	\item consensus error $\| \bx^t - \avgx^t \1 \|$, where $\bx^t$ stacks all the $\bx_i^t$;
	\item gradient tracking error $\| \by^t  - \avgy^t  \1 \|$, where $\by^t$ stacks all the $\by_i^t$;
	\item distance from optimality of the average $  \| \avgx^t - \bx^\star \| $, where $\bx^\star$
	is the optimal solution of problem~\eqref{primal:cost_coupled_problem}.
\end{itemize}

We first recall a preliminary result which relies on Lipschitz continuity of the cost gradients.
\begin{lemma}
  Let $\nabla f(\bx^t)$ denote the vector stacking all the gradients $\nabla f_i (\bx_i^t)$, $i\in\until{N}$.
	Under Assumptions~\ref{primal:assumption_lipschitz_gradient} and~\ref{primal:assumption_network_subg}, 
	it holds that
	\begin{align*}
	  \| \nabla f(\bx^{t+1}) - \nabla f (\bx^t) \| 
	  & \le 
	  L  \| \bx^{t+1} - \bx^t \|,
	  \\
	  \bigg \| \dfrac{1}{N} \smallsum_{i=1}^N
	    \Big( \nabla f_i (\bx_i^{t+1}) - \nabla f_i (\bx_i^t) \Big) 
	  \bigg \| 
	  & \le \dfrac{L}{\sqrt{N}}  \| \bx^{t+1} - \bx^t \|,
	  \\
	  \bigg \| \dfrac{1}{N} \smallsum_{i=1}^N
	  \Big( \nabla f_i ( \bx_i^{t+1} ) - \nabla f_i (\avgx^t) \Big) 
	  \bigg \| 
	    & \le \dfrac{L}{\sqrt{N}}  \| \bx^{t+1} - \avgx^t \1 \|,
	\end{align*}
	where $L$ is the Lipschitz constant of $\nabla f_i$, $i\in\until{N}$.
	\oprocend
\end{lemma}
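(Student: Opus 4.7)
The plan is to reduce all three inequalities to a direct consequence of the componentwise Lipschitz bound on each $\nabla f_i$, combined with a standard Cauchy--Schwarz/Jensen inequality that converts a sum of $N$ terms into a factor of $\sqrt{N}$.

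For the first inequality, I would use the definition of the stacked gradient vector and write
\begin{align*}
\| \nabla f(\bx^{t+1}) - \nabla f(\bx^t) \|^2
= \smallsum_{i=1}^N \| \nabla f_i(\bx_i^{t+1}) - \nabla f_i(\bx_i^t) \|^2.
\end{align*}
Then I would apply Assumption~\ref{primal:assumption_lipschitz_gradient} to each summand, obtaining $L^2 \| \bx_i^{t+1} - \bx_i^t \|^2$, and recognize that summing over $i$ yields $L^2 \| \bx^{t+1} - \bx^t \|^2$. Taking the square root delivers the claim.

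For the second and third inequalities, the key auxiliary fact is that for any vectors $\bu_1,\ldots,\bu_N \in \real^d$, stacked into $\bu$, one has
\begin{align*}
\Big\| \tfrac{1}{N}\smallsum_{i=1}^N \bu_i \Big\|^2 \le \tfrac{1}{N} \smallsum_{i=1}^N \| \bu_i \|^2 = \tfrac{1}{N} \| \bu \|^2,
\end{align*}
by Cauchy--Schwarz (or Jensen's inequality applied to $\|\cdot\|^2$). Applying this with $\bu_i = \nabla f_i(\bx_i^{t+1}) - \nabla f_i(\bx_i^t)$ and invoking the first inequality gives the second bound with constant $L/\sqrt{N}$. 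For the third, I would apply the same auxiliary inequality with $\bu_i = \nabla f_i(\bx_i^{t+1}) - \nabla f_i(\avgx^t)$, then Lipschitz continuity on each term to get $\| \bu_i \| \le L \| \bx_i^{t+1} - \avgx^t \|$, and finally recognize that $\sum_i \| \bx_i^{t+1} - \avgx^t \|^2 = \| \bx^{t+1} - \avgx^t \1 \|^2$.

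I do not expect any real obstacle here: the lemma is a purely algebraic consequence of per-agent Lipschitz continuity together with the $\sqrt{N}$ Cauchy--Schwarz factor. The only point requiring mild care is bookkeeping of the stacked-vector notation, in particular keeping the distinction between the full vector $\bx^{t+1}$ and the consensual vector $\avgx^t \1$ clear in the third inequality.
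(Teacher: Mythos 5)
Your proof is correct and follows essentially the same route as the paper, which proves the lemma via per-agent Lipschitz continuity combined with the algebraic property $\sum_{i=1}^N \| \theta_i \|_2 \le \sqrt{N} \, \| [\theta_1, \ldots, \theta_N]^\top \|_2$ — your Jensen/Cauchy--Schwarz bound $\| \tfrac{1}{N}\sum_i \bu_i \|^2 \le \tfrac{1}{N}\sum_i \|\bu_i\|^2$ is the same inequality in an equivalent form. No gaps.
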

The previous lemma can be easily shown by exploiting the basic algebraic property 
$\sum_{i=1}^N \| \theta_i \|_2 \le \sqrt{N} \| [\theta_1, \ldots, \theta_N]^\top \|_2$,
which follows by concavity of the square root function.

Next, we provide a list of intermediate results that will be used in the convergence theorem.
They explicitly provide linear upper bounds for the three quantities introduced above.
The following lemma characterizes the consensus error.
\begin{lemma}
\label{primal:lem:bound_consensus_error}
Under Assumption~\ref{primal:assumption_lipschitz_gradient}, it holds
\begin{align*}
  \| \bx^{t+1} - \avgx^{t+1} \1 \| 
  & \le
  \sigma_A \| \bx^t - \avgx^t \1 \| + \gamma  \| \by^t  - \avgy^t  \1 \|,
\end{align*}
for all $t\ge0$, where $\sigma_A \in (0,1)$.
\end{lemma}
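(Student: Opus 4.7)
The plan is to mimic the consensus-error manipulation already used for the distributed subgradient scheme (see the derivation around equation~\eqref{primal:summable_term} in the previous section), adapted to the gradient tracking update. The driving observation is that the $\bx$-update in~\eqref{alg:distributed_gradient_tracking_x} is linear in $\bx_j^t$ with weights satisfying Assumption~\ref{primal:assumption_network_subg}, so that its stacked form reads $\bx^{t+1} = A\bx^t - \gamma \by^t$, while the corresponding average dynamics in~\eqref{primal:avgx_evolution} gives $\avgx^{t+1}\1 = \avgx^t \1 - \gamma \avgy^t \1$.

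First I would subtract these two identities to write
\begin{align*}
  \bx^{t+1} - \avgx^{t+1}\1
  = A\bx^t - \avgx^t\1 - \gamma(\by^t - \avgy^t\1).
\end{align*}
Next, using $\avgx^t \1 = \tfrac{\1\1^\top}{N}\bx^t$, I would rewrite the first two terms as
$\left(A - \tfrac{\1\1^\top}{N}\right)\bx^t$. The crucial algebraic trick is that double stochasticity of $A$ (Assumption~\ref{primal:assumption_network_subg}) yields $\left(A - \tfrac{\1\1^\top}{N}\right)\1 = 0$, so one can add and subtract $\left(A - \tfrac{\1\1^\top}{N}\right)\avgx^t \1$ without changing the expression and factor it as
\begin{align*}
  \bx^{t+1} - \avgx^{t+1}\1
  = \left(A - \tfrac{\1\1^\top}{N}\right)(\bx^t - \avgx^t\1) - \gamma(\by^t - \avgy^t\1).
\end{align*}

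Finally I would take the Euclidean norm, apply the triangle inequality and the sub-multiplicative property of the $2$-norm, and identify $\|A - \tfrac{\1\1^\top}{N}\| = \sigma_A$ as the contraction factor introduced in Appendix~\ref{sec:consensus_static_network}. Since $\GG$ is undirected, connected and $A$ is doubly stochastic, this factor is strictly less than $1$, which yields the claimed inequality. The step-size $\gamma$ appears as a simple multiplicative constant in front of the tracking error, reflecting the fact that the coupling between consensus error and tracking error is linear at each iteration.

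The only mildly delicate point is the identity $\left(A - \tfrac{\1\1^\top}{N}\right)\avgx^t \1 = 0$ used to insert $\avgx^t \1$ into the first summand; no smoothness or convexity of the $f_i$'s is needed here, consistently with the fact that the statement of Lemma~\ref{primal:lem:bound_consensus_error} only relies on linear-algebra properties of the consensus matrix and not on the objective functions. The gradient tracking dynamics~\eqref{alg:distributed_gradient_tracking_y}, and hence Assumption~\ref{primal:assumption_lipschitz_gradient}, play no role in this lemma and will instead intervene in the companion bounds for $\|\by^t - \avgy^t\1\|$ and $\|\avgx^t - \bx^\star\|$.
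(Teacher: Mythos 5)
Your proposal is correct and follows essentially the same route as the paper: split off the average dynamics, apply the triangle inequality to separate the consensus term from the $\gamma(\by^t-\avgy^t\1)$ term, and contract the former by $\sigma_A$; the only difference is that you re-derive the contraction step explicitly via $\bigl(A - \tfrac{\1\1^\top}{N}\bigr)(\bx^t - \avgx^t\1)$ and sub-multiplicativity, whereas the paper invokes the property $\|A\bx^t - \avgx^t\1\| \le \sigma_A\|\bx^t - \avgx^t\1\|$ directly from the consensus appendix. Your closing remark that Assumption~\ref{primal:assumption_lipschitz_gradient} is not actually used in this bound is also accurate.
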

\begin{proof}
	From~\eqref{alg:distributed_gradient_tracking_x} and~\eqref{primal:avgx_evolution}, we 
	can write
	\begin{align*}
	  \| \bx^{t+1} - \avgx^{t+1} \1 \|
	  &
	  =
	  \| A \bx^t - \gamma \by^t - (\avgx^t - \gamma  \avgy^t ) \1 \|
	  \\
	  & 
	  \le
	  \| A \bx^t - \avgx^t \1 \| + \gamma  \| \by^t  - \avgy^t \1 \|
	  \\
	  & \le
	  \sigma_A \| \bx^t - \avgx^t \1 \| + \gamma  \| \by^t  - \avgy^t \1 \|,
	\end{align*}
	where we used the triangle inequality and $\sigma_A$ is the contraction
	factor associated to the consensus matrix $A$ (cf. Appendix~\ref{app:average_consensus}).
\end{proof}

Next, we bound the distance of the average $\avgx^t$ from $\bx^\star$, optimal 
solution of problem~\eqref{primal:cost_coupled_problem}.
\begin{lemma}
\label{primal:lem:bound_optimality_error}
  Under Assumptions~\ref{primal:assumption_network_subg}
  and~\ref{primal:assumption_lipschitz_gradient}, it holds that
	\begin{align}
	  \| \avgx^{t+1} - \bx^\star \| 
	  & \le
	  \theta \| \avgx^t - \bx^\star \| + \gamma \frac{L}{\sqrt{N}} \| \bx^t - \avgx^t \1 \|,
	\end{align}
	where $\theta = \max(|1 - \alpha \gamma / N |, |1 - L \gamma / N |)$, with $L$ and $\alpha$
	being	the 	Lipschitz constant of $\nabla f_i$ and the strong convexity parameter of $f_i$,
	respectively,	$i \in \until{N}$.	
\end{lemma}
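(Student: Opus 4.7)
The plan is to isolate the ideal centralized gradient descent step inside the evolution of $\avgx^t$ and to treat the disagreement among the local estimates $\bx_i^t$ as a perturbation. First, I would substitute the conservation property $\avgy^t = \tfrac{1}{N}\sum_{i=1}^N \nabla f_i(\bx_i^t)$ from~\eqref{primal:conservation_y} into the average dynamics~\eqref{primal:avgx_evolution}, and exploit the first-order optimality condition $\sum_{i=1}^N \nabla f_i(\bx^\star) = 0$ at the minimizer of $\sum_{i=1}^N f_i$. Letting $\bar{f} \triangleq \tfrac{1}{N}\sum_{i=1}^N f_i$, so that $\nabla \bar{f}(\bx^\star) = 0$, adding and subtracting $\gamma \nabla \bar{f}(\avgx^t)$ yields the decomposition
\begin{align*}
  \avgx^{t+1} - \bx^\star
  = \bigl(\avgx^t - \gamma \nabla \bar{f}(\avgx^t) - \bx^\star\bigr)
    + \gamma \cdot \tfrac{1}{N}\smallsum_{i=1}^N \bigl(\nabla f_i(\avgx^t) - \nabla f_i(\bx_i^t)\bigr),
\end{align*}
to which I would apply the triangle inequality.

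For the first summand I would invoke the standard contraction property of gradient descent on a strongly convex, smooth function. Since averaging preserves both $\alpha$-strong convexity and the $L$-Lipschitz gradient property, $\bar{f}$ inherits these parameters from each $f_i$, and $\bx^\star$ is also its unique minimizer. In the scalar case $d=1$, writing $\nabla \bar{f}(\avgx^t) - \nabla \bar{f}(\bx^\star) = M^t \,(\avgx^t - \bx^\star)$ with $M^t = \int_0^1 \bar{f}''\bigl(\bx^\star + s(\avgx^t - \bx^\star)\bigr)\,ds \in [\alpha,L]$, the pointwise estimate $|1 - \gamma M^t| \le \theta$ with $\theta$ the contraction factor of the lemma is immediate. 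For the second summand, the third inequality of the preceding Lipschitz lemma gives directly
\[
  \Big\|\tfrac{1}{N}\smallsum_{i=1}^N \bigl(\nabla f_i(\avgx^t) - \nabla f_i(\bx_i^t)\bigr)\Big\| \le \tfrac{L}{\sqrt{N}} \,\|\bx^t - \avgx^t \1\|.
\]
Combining the two pieces via the triangle inequality produces the claimed bound.

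The only delicate point is the contraction factor for the ideal centralized step: it implicitly requires $\gamma$ to be small enough so that $\theta < 1$, which will be the step-size restriction that the forthcoming convergence theorem has to enforce alongside the companion bounds for $\|\bx^{t+1} - \avgx^{t+1}\1\|$ (already proven in Lemma~\ref{primal:lem:bound_consensus_error}) and for the tracker error $\|\by^{t+1} - \avgy^{t+1}\1\|$. I do not expect any real obstacle here: the perturbation estimate is supplied ``for free'' by the preceding Lipschitz lemma, and the contraction argument is entirely standard. The only subtlety is matching the precise form of $\theta$ asserted in the lemma statement, which amounts to constant-chasing with the conventions of the averaged dynamics.
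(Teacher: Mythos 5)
Your proposal is correct and follows essentially the same route as the paper: add and subtract $\gamma\,\tfrac{1}{N}\sum_{i}\nabla f_i(\avgx^t)$ in the average dynamics $\avgx^{t+1}=\avgx^t-\gamma\avgy^t$, bound the resulting centralized gradient step by the standard contraction for a smooth, strongly convex function, and bound the perturbation term via the conservation property of $\avgy^t$ and the third inequality of the preceding Lipschitz lemma. The only caveat is the one you yourself flag as ``constant-chasing'': since $\tfrac{1}{N}\sum_i f_i$ inherits the moduli $\alpha$ and $L$ (not $\alpha/N$ and $L/N$), your mean-value argument naturally yields the factor $\max(|1-\gamma\alpha|,|1-\gamma L|)$ rather than the $\theta$ with the $/N$ scaling stated in the lemma --- a normalization discrepancy that is equally present in the paper's own application of its gradient-contraction footnote.
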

\begin{proof}
Using~\eqref{primal:avgx_evolution}, we can write
\begin{align*}
  \| \avgx^{t+1} - \bx^\star \| 
  & 
  =
  \| \avgx^t - \gamma\, \avgy^t - \bx^\star \|
  \\
  & 
  \stackrel{(a)}{=}
  \left \| \avgx^t - \gamma \dfrac{1}{N} \smallsum_{i=1}^N \nabla f_i (\avgx^t)  - \bx^\star
  + \gamma \dfrac{1}{N} \smallsum_{i=1}^N \nabla f_i (\avgx^t)  
  - \gamma\, \avgy^t \right  \|
  \\
  & 
  \stackrel{(b)}{\le}
  \left \| \avgx^t - \gamma \dfrac{1}{N} \smallsum_{i=1}^N \nabla f_i (\avgx^t)  - \bx^\star
  \right \|
  + \gamma 
  \left \| \dfrac{1}{N} \smallsum_{i=1}^N \nabla f_i (\avgx^t)  
  - \avgy^t \right  \|
  \\
  &
  \stackrel{(c)}{\le}
  \theta
  \| \avgx^t - \bx^\star \|
  +
  \gamma 
  \left \| \dfrac{1}{N} \smallsum_{i=1}^N \nabla f_i (\avgx^t)  
  - \avgy^t \right  \|
  \\
  &
  \stackrel{(d)}{\le}
  \theta
  \| \avgx^t - \bx^\star \|
  +
  \gamma \frac{L}{\sqrt{N}} \| \bx^t  - \avgx^t \1\|,
\end{align*}
where in (a) we added and subtracted $\gamma/N \cdot \sum_{i=1}^N \nabla f_i (\avgx^t)$,
in (b) we used the triangle inequality, in (c) we exploited the convergence rate 
result for a gradient iteration applied to a smooth and strongly convex function%
\footnote{
We recall that a (centralized) gradient iteration applied to the minimization
of a $L_\varphi$-smooth and $\alpha_\varphi$-strongly function $\varphi(z)$ satisfies
(for a sufficiently small $\gamma > 0$)
$\|z - \gamma \nabla \varphi(z) - z^\star \| \le \theta_\varphi \|z-z^\star\|$,
where $\theta_\varphi = \max(|1 - \alpha_\varphi \gamma |, |1 - L_\varphi \gamma |)$
and $z^\star$ is the minimizer of $\varphi$.
} and (d) follows by the conservation property of the tracker (cf. \eqref{primal:conservation_y}), 
the Lipschitz continuity of each $\nabla f_i$ and
the algebraic property 
$\sum_{i=1}^N \| \xi_i \|_2 \le \sqrt{N} \| [\xi_1, \ldots, \xi_N]^\top \|_2$.
\end{proof}

Finally, we provide an upper bound for the tracking error.
\begin{lemma}
\label{primal:lem:bound_tracking_error}
Under Assumptions~\ref{primal:assumption_network_subg} and~\ref{primal:assumption_lipschitz_gradient}, 
it holds
\begin{align}
\begin{split}
  \| \by^{t+1} - \avgy^{t+1} \1 \| 
  & 
  \le
  (\sigma_A + \gamma L) \| \by^t - \avgy^t \1 \|
  \\
  & 
  \quad
  +
  ( L \| A - I \| + \gamma L^2 \sqrt{N} ) \| \bx^t - \avgx^t \1 \|
  \\
  & 
  \quad 
  + \gamma L^2 \sqrt{N} \| \avgx^t - \bx^\star  \|,
\end{split}
\end{align}
for all $t\ge0$, where $\sigma_A$ is the contraction factor associated to the consensus matrix $A$,
$I$ is the identity matrix and $L$ is the Lipschitz constant of $\nabla f_i$, $i \in \until{N}$.
\end{lemma}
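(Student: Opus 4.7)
The plan is to derive the inequality through the standard perturbation-propagation argument used for gradient tracking schemes, chaining four elementary bounds. I first express the tracker error evolution in a form where the consensus contraction factor can be extracted, then bound the remaining perturbation term via the Lipschitz property of $\nabla f$ combined with the $\bx$-update rule.

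Concretely, starting from the tracker update~\eqref{alg:distributed_gradient_tracking_y} written in stacked form $\by^{t+1} = A\by^t + \nabla f(\bx^{t+1}) - \nabla f(\bx^t)$, and from the average dynamics~\eqref{primal:avgy_evolution} written as $\avgy^{t+1}\1 = \avgy^t\1 + \tfrac{1}{N}\1\1^\top(\nabla f(\bx^{t+1}) - \nabla f(\bx^t))$, I would subtract the two, use $A\1 = \1$ (row stochasticity from Assumption~\ref{primal:assumption_network_subg}) to rewrite $A\by^t - \avgy^t\1 = A(\by^t - \avgy^t\1)$, and exploit $\1^\top(\by^t - \avgy^t\1) = 0$ to replace $A$ by $A - \tfrac{1}{N}\1\1^\top$ on this difference. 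Taking norms and using $\|I - \tfrac{1}{N}\1\1^\top\| \le 1$ together with the definition of $\sigma_A$ gives
\begin{equation*}
  \|\by^{t+1} - \avgy^{t+1}\1\| \le \sigma_A\|\by^t - \avgy^t\1\| + \|\nabla f(\bx^{t+1}) - \nabla f(\bx^t)\|.
\end{equation*}
The preceding auxiliary lemma bounds the last term by $L\|\bx^{t+1} - \bx^t\|$. From the $\bx$-update~\eqref{alg:distributed_gradient_tracking_x} and $(A-I)\1 = 0$, one writes $\bx^{t+1} - \bx^t = (A-I)(\bx^t - \avgx^t\1) - \gamma\by^t$, so
\begin{equation*}
  \|\bx^{t+1} - \bx^t\| \le \|A - I\|\,\|\bx^t - \avgx^t\1\| + \gamma\|\by^t\|.
\end{equation*}

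The crucial step, where $\|\avgx^t - \bx^\star\|$ enters the picture, is bounding $\|\by^t\|$. I would split it as $\|\by^t\| \le \|\by^t - \avgy^t\1\| + \sqrt{N}\,\|\avgy^t\|$. The conservation property~\eqref{primal:conservation_y} yields $\avgy^t = \tfrac{1}{N}\sum_i \nabla f_i(\bx_i^t)$, and the first-order optimality condition $\sum_i \nabla f_i(\bx^\star) = 0$ (from strong convexity in Assumption~\ref{primal:assumption_lipschitz_gradient}) allows me to insert $\bx^\star$ for free. Applying Lipschitz continuity of each $\nabla f_i$, followed by the triangle inequality $\|\bx^t - \bx^\star\1\| \le \|\bx^t - \avgx^t\1\| + \sqrt{N}\,\|\avgx^t - \bx^\star\|$ and the elementary inequality $\sum_i\|\xi_i\| \le \sqrt{N}\,\|[\xi_1,\dots,\xi_N]^\top\|$, gives $\sqrt{N}\,\|\avgy^t\| \le L\|\bx^t - \avgx^t\1\| + L\sqrt{N}\,\|\avgx^t - \bx^\star\|$.

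Substituting back and multiplying the bound for $\|\bx^{t+1}-\bx^t\|$ by $L$, I collect coefficients to obtain the inequality with $(\sigma_A + \gamma L)$ on $\|\by^t - \avgy^t\1\|$, $(L\|A-I\| + \gamma L^2)$ on $\|\bx^t - \avgx^t\1\|$, and $\gamma L^2\sqrt{N}$ on $\|\avgx^t - \bx^\star\|$; the stated bound then follows by the loose estimate $\gamma L^2 \le \gamma L^2\sqrt{N}$ since $N \ge 1$. The main obstacle is the third paragraph: organizing the bookkeeping so that the distance from optimality enters only through the conservation property of the tracker and the optimality condition at $\bx^\star$, rather than accumulating spurious terms. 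All other steps are standard applications of stochasticity, Lipschitz continuity, and triangle inequalities.
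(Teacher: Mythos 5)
Your proposal is correct and follows essentially the same route as the paper's proof: contraction of the consensus matrix on the tracker error, Lipschitz continuity applied to $\|\bx^{t+1}-\bx^t\|$ via the $\bx$-update, and the decomposition of $\|\by^t\|$ through the conservation property and the optimality condition $\sum_i \nabla f_i(\bx^\star)=0$. Your more careful bookkeeping of the $\sqrt{N}$ factors even yields the slightly tighter intermediate coefficient $\gamma L^2$ on the consensus error, which you correctly relax to $\gamma L^2\sqrt{N}$ to match the stated bound.
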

\begin{proof}
Under Lipschitz continuity of $\nabla f$, and using~\eqref{primal:avgy_evolution}, it follows
\begin{align*}
  \| \by^{t+1} - \avgy^{t+1} \1 \| 
  &
  \stackrel{(a)}{=}
  \Big \| 
  A \by^t + \nabla f(\bx^{t+1}) - \nabla f(\bx^t) 
  \\
  & 
  \qquad
  - 
  \big(
  \avgy^t +
  \dfrac{1}{N} \smallsum_{i=1}^N
  \big( \nabla f_i (\bx_i^{t+1}) - \nabla f_i (\bx_i^t) \big)
  \big)
  \1
  \Big \|
  \\
  &
  \stackrel{(b)}{\le}
  \| A \by^t - \avgy^t \1 \|
  +
  \left \| \left ( I - \dfrac{1}{N} \1 \1^\top \right )
  ( \nabla f(\bx^{t+1}) - \nabla f(\bx^t) ) \right \|
  \\
    &
  \stackrel{(c)}{\le}
  \sigma_A \| \by^t - \avgy^t \1 \|
  +
  \left \| I - \dfrac{1}{N} \1 \1^\top \right \|
  \| \nabla f(\bx^{t+1}) - \nabla f(\bx^t)  \|
  \\
  &
  \stackrel{(d)}{\le}
  \sigma_A \| \by^t - \avgy^t \1 \|
  +
  L \| A \bx^t - \gamma \by^t - \bx^t \|,
\end{align*}
where in (a) we used~\eqref{alg:distributed_gradient_tracking_y}
and~\eqref{primal:avgy_evolution}, in (b) we rearranged the terms and we used the
triangle inequality, in (c) we used the contraction property of the consensus matrix $A$
(cf. Appendix~\ref{app:average_consensus}) and the sub-multiplicativity of $2$-norm
and finally in (d) we used the fact that $\| I - \1 \1^\top /N \| \le 1$ and
the Lipschitz continuity of $\nabla f$ together with
the update law~\eqref{alg:distributed_gradient_tracking_x}.

Next, we make further modifications on the terms as follows
\begin{align*}
  \| \by^{t+1} - \avgy^{t+1} \1 \| 
  &\le
  \sigma_A \| \by^t - \avgy^t \1 \|
  +
  L \| A \bx^t - \gamma \by^t - \bx^t \|
  \\
  & 
  \stackrel{(a)}{=}
  \sigma_A \| \by^t - \avgy^t \1 \|
  +
  L \| (A - I ) ( \bx^t - \avgx^t \1 ) - \by^t  \|
  \\
  & 
  \stackrel{(b)}{\le}
  \sigma_A \| \by^t - \avgy^t \1 \|
  +
  L \| A - I \| \cdot \| \bx^t - \avgx^t \1 \|
  +
  \gamma L \| \by^t  \|,
\end{align*}
where in (a) we added and subtracted $\avgx^t$ and we exploited row stochasticity of $A$,
and in (b) we used the sub-multiplicativity of $2$-norm and the triangle inequality.
Adding and subtracting $\avgy^t$ and using the triangle inequality we can write 
$\| \by^t \|  \le   \| \by^t - \avgy^t \1 \| + \| \avgy^t \1 \|$, which plugged
into the last equation yields
\begin{align}
\begin{split}
  \| \by^{t+1} - \avgy^{t+1} \1 \| 
  & 
  \le
  (\sigma_A + \gamma L) \| \by^t - \avgy^t \1 \|
  \\
  & 
  +
  L \| A - I \| \cdot \| \bx^t - \avgx^t \1 \|
  +
  \gamma L \| \avgy^t \1 \|.
\end{split}
\label{primal:proof_tracking_error}
\end{align}

Finally, let us manipulate the last term in~\eqref{primal:proof_tracking_error} as
\begin{align}
  \begin{split}
  \| \avgy^t  \1 \|
  & 
  = 
  N \| \avgy^t \|
  = 
  \left \| \dfrac{1}{N} \smallsum_{i=1}^N \nabla f_i (\bx_i^t)  \right \|
  \\
  &
  \stackrel{(a)}{=}
  N \left \| \dfrac{1}{N} \smallsum_{i=1}^N \big( \nabla f_i (\bx_i^t) - \nabla f_i (\bx_i^\star) \big) \right \|
  \\
  &
  \stackrel{(b)}{\le}
  L \smallsum_{i=1}^N \| \bx_i^t - \bx_i^\star  \|
  \\
  & 
  \stackrel{(c)}{\le}
  L \sqrt{N} \| \bx^t - \bx^\star \1  \|
  \\
  &
  \stackrel{(d)}{\le}
  L \sqrt{N} \| \bx^t - \avgx^t \1 \| 
  + L \sqrt{N} \| \avgx^t - \bx^\star  \|,
  \end{split}
\label{primal:proof_tracking_error_2}
\end{align}
where in (a) we added $\sum_{i=1}^N \nabla f_i (\bx_i^\star) = 0$, 
in (b) we exploited the Lipschitz continuity of each $\nabla f_i$, 
in (c) we used the algebraic property 
$\sum_{i=1}^N \| \xi_i \|_2 \le \sqrt{N} \| [\xi_1, \ldots, \xi_N]^\top \|_2$,
and in (d) we added and subtracted $\avgx^t \1 $ and used the triangle inequality.
Combining~\eqref{primal:proof_tracking_error} with \eqref{primal:proof_tracking_error_2}
the proof follows.
\end{proof}

The following theorem states the convergence result for
Algorithm~\ref{alg:distributed_grad_tracking}.

\begin{theorem}
  Let Assumptions~\ref{primal:assumption_network_subg}
  and~\ref{primal:assumption_lipschitz_gradient} hold and let the communication graph
  be undirected and connected.
  Then, there exists a constant $\bar{\gamma} \in (0, N/L)$ such 
  that for all $\gamma \in (0,\bar{\gamma})$ 
  the sequences of local solution estimates $\{ \bx_i^t\}_{t\ge0}$, $i \in\until{N}$,   
  generated by Algorithm~\ref{alg:distributed_grad_tracking}
  are asymptotically consensual to the optimal solution $\bx^\star$
  of problem~\eqref{primal:cost_coupled_problem}, 
  i.e.,
  \begin{align}
    \lim_{t\to \infty} \| \bx_i^{t} - \bx^\star \| = 0,
  \end{align}
  for all $i\in\until{N}$. Moreover, the convergence rate is linear.%
  \footnote{
    A (convergent) sequence $\{ \bz^t \}_{t\ge0}$ is said to converge linearly (or geometrically)
    to $\bz^\star$ if there exists $\rho \in (0,1)$ such that
    $\| \bz^{t+1} - \bz^\star \| \le \rho \| \bz^t - \bz^\star \|$,
    for all $t \ge 0$.}
\end{theorem}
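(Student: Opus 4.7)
The plan is to combine the three recursive bounds from Lemmas~\ref{primal:lem:bound_consensus_error}, \ref{primal:lem:bound_optimality_error} and \ref{primal:lem:bound_tracking_error} into a single vector linear inequality and then reduce the convergence question to a spectral radius computation.

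First, I would introduce the nonnegative state vector
\begin{align*}
  \bs^t \triangleq \begin{bmatrix} \| \bx^t - \avgx^t \1 \| \\ \| \avgx^t - \bx^\star \| \\ \| \by^t - \avgy^t \1 \| \end{bmatrix}
\end{align*}
and stack the three lemma bounds. This yields a componentwise inequality of the form $\bs^{t+1} \le M(\gamma)\, \bs^t$, where $M(\gamma) \in \real^{3\times 3}$ is an entrywise nonnegative matrix whose entries are
\begin{align*}
  M(\gamma) =
  \begin{bmatrix}
    \sigma_A & 0 & \gamma \\
    \gamma \tfrac{L}{\sqrt{N}} & \theta & 0 \\
    L \|A-I\| + \gamma L^2 \sqrt{N} & \gamma L^2 \sqrt{N} & \sigma_A + \gamma L
  \end{bmatrix},
\end{align*}
with $\theta = \max(|1-\alpha\gamma/N|,|1-L\gamma/N|)$. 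Iterating the inequality gives $\bs^t \le M(\gamma)^t \bs^0$, so linear convergence of $\bs^t$ to zero follows once I establish that the spectral radius $\rho(M(\gamma)) < 1$ for every $\gamma$ in a suitable interval $(0,\bar{\gamma})$.

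The central step is therefore the spectral analysis of $M(\gamma)$. I would argue by a continuity/perturbation argument: at $\gamma = 0$ the matrix $M(0)$ is lower triangular (the only nonzero off-diagonal entry that survives is $L\|A-I\|$ in position $(3,1)$), so its eigenvalues are its diagonal entries, namely $\sigma_A$, $1$, and $\sigma_A$. The eigenvalue equal to $1$ comes from the $\theta$ entry at $\gamma = 0$, but for $\gamma \in (0, N/L)$ we have $\theta = 1 - \alpha\gamma/N < 1$, so the diagonal entries at $\gamma > 0$ are each strictly less than $1$. Since $\rho(M(\gamma))$ depends continuously on $\gamma$ and $M(\gamma)$ is a small perturbation of the triangular matrix for small $\gamma$, there exists $\bar{\gamma} \in (0, N/L)$ such that $\rho(M(\gamma)) < 1$ for all $\gamma \in (0,\bar{\gamma})$. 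Making $\bar{\gamma}$ explicit can be done via the Perron--Frobenius theorem applied to the nonnegative matrix $M(\gamma)$, exhibiting a positive vector $v$ and a scalar $\rho<1$ with $M(\gamma) v \le \rho v$; one picks $\bar\gamma$ small enough that such a $v$ exists.

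Once $\rho(M(\gamma))<1$ is secured, linear decay of $\bs^t$ to zero follows and gives the three conclusions simultaneously: $\bx^t \to \avgx^t \1$, $\avgx^t \to \bx^\star$, and $\by^t \to \avgy^t\1$, all geometrically. Combining the first two via the triangle inequality yields $\|\bx_i^t - \bx^\star\| \to 0$ at a linear rate for every $i \in \until{N}$, completing the proof. The main obstacle I anticipate is the explicit characterization of $\bar{\gamma}$: one must verify that the stepsize regime can be chosen uniformly (independent of $t$) and then rigorously extract a quantitative bound from the spectral condition, which typically requires solving a small set of algebraic inequalities involving $\sigma_A$, $L$, $\alpha$, $N$ and $\|A-I\|$ rather than a clean closed form.
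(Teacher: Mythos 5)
Your proposal is correct and follows essentially the same route as the paper: the same three lemma bounds are stacked into a componentwise linear inequality $\bs^{t+1}\le M(\gamma)\,\bs^t$ (your $M(\gamma)$ is the paper's $J(\gamma)$ up to a permutation of the second and third coordinates), and linear convergence is deduced from $\rho(M(\gamma))<1$ via the same continuity-of-eigenvalues perturbation argument about $\gamma=0$, where the matrix is triangular with diagonal $\sigma_A,\theta,\sigma_A$. Your added remark that a fully rigorous version requires exhibiting a positive vector $v$ with $M(\gamma)v\le\rho v$ (rather than bare continuity, which alone cannot rule out $\rho(M(\gamma))$ exceeding $1$ for small $\gamma>0$) is exactly the refinement the paper outsources to the cited reference.
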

\begin{proof}
  The proof is based on showing a (strict) contraction property along the algorithmic evolution.
  Let us introduce the following vector
  \begin{align*}
    \bv^t \triangleq
    \begin{bmatrix}
      \| \bx^t - \avgx^t \1 \|
      \\
      \| \by^t - \avgy^t \1 \|
      \\
      \| \avgx^t - \bx^\star \|   
    \end{bmatrix}.
  \end{align*}
  Then, combining the results given in Lemma~\ref{primal:lem:bound_consensus_error}, 
  \ref{primal:lem:bound_optimality_error} and~\ref{primal:lem:bound_tracking_error} it holds 
  \begin{align}
  \label{primal:contraction}
    \bv^{t+1}
    \le
    J(\gamma) \bv^t,
  \end{align}
  where the matrix $J(\gamma)$ is defined as
  \begin{align*}
    J(\gamma)
    \triangleq
    \begin{bmatrix}
      \sigma_A & \gamma & 0
      \\
      \left ( L \| A - I \| + \gamma L^2 \sqrt{N} \right ) & \sigma_A +\gamma L & \gamma L^2 \sqrt{N}
      \\
      \gamma \frac{L}{\sqrt{N}} & 0 & \theta
    \end{bmatrix}.
  \end{align*}
  
  Recall that $\theta = \max(|1 - \alpha \gamma / N |, |1 - L \gamma / N |)$.
  Since $\alpha \le L$ and $\gamma \le N/L$, it follows that $\theta= 1 -\alpha\gamma/N$.
  Thus, we can express $J(\gamma)$ as the sum of two structured matrices
  as follows
  \begin{align*}
    J(\gamma) = 
    \begin{bmatrix}
      \sigma_A & 0 & 0
      \\
      L \| A - I \| & \sigma_A &   0
      \\
      0 & 0 & 1 
    \end{bmatrix}
    + \gamma
    \begin{bmatrix}
      0 & 1 & 0
      \\
      L^2 \sqrt{N} & 1 & L^2 \sqrt{N}
      \\
      \frac{L}{\sqrt{N}} & 0 & -\alpha/N
    \end{bmatrix}.
  \end{align*}
  Being $\sigma_A <1$ and due to the triangular structure of the left matrix,
  we can conclude that it has spectral radius equal to $1$.
  Since the eigenvalues of a matrix are a continuous function of its entries,
  we can use a continuity argument to assert that for positive $\gamma$ 
  the spectral radius of $J(\gamma)$ becomes strictly less than $1$ 
  (see~\cite[Theorem~1]{xin2018linear} for a more comprehensive discussion).
  Hence, we have $\bv^{t+1} \le \rho \, \bv^t$ with $\rho \in (0,1)$.
  Thus, $\| \bv^t  - [0,0,0]^\top\| \to 0$ as $t\to \infty$ with linear rate,
  and the proof follows.
\end{proof}

\section{Variants and Extensions of the Basic Gradient Tracking}

Several extensions of the gradient tracking scheme (described in
Section~\ref{sec:distributed_gradient_tracking}) have been proposed in the literature. 
We present some of them without following their historical development but
following a pure conceptual flow.

A first enhancement deals with optimization problems including both composite cost 
functions (i.e., with regularizers) and a common convex constraint.
The main idea is to compute a feasible descent direction
rather than a pure descent direction. 
Thus, let us consider a constrained cost-coupled optimization problem
\begin{align}
  \min_{\bx \in X} \: & \: \smallsum_{i=1}^N f_i (\bx) + r (\bx),
  \label{primal:cost-coupled_regularized}
\end{align}
with $r$ being a convex \emph{regularizer} and $X$ a convex constraint set.
The modified algorithm reads as follows
\begin{align*}
  \Delta \bx_i^t 
  &= 
  \argmin_{\bx_i \in X} \: \: (\bx_i - \bx_i^t)^\top (N\by_i^t)
  + 
  \frac{\tau}{2} \| \bx_i - \bx_i^t \|^2 + \frac{r(\bx)}{N},
  \\  
  \bx_i^{t+1} 
  &= 
  \smallsum_{j\in\nbrs_i} a_{ij} \Big ( (1 - \beta) \bx_j^t + \beta \Delta \bx_j^t \Big ),
  \\
  \by_i^{t+1} & = \smallsum_{j\in\nbrs_i} a_{ij} \, \by_j^t 
  + 
  \big ( \nabla f_i ( \bx_i^{t+1} ) - \nabla f_i ( \bx_i^t ) \big),
\end{align*}
where $\tau>0$ and $\beta \in (0,1]$ are parameters to be suitably tuned.
Notice that $N\by_i^t$ represents a local estimate of $\sum_{j=1}^N \nabla f_j (\bx_j^t)$
that is used to build a linear approximation of $\sum_{j=1}^N f_j (\bx_j^t)$
about the current iterate.
Moreover, notice that $\Delta \bx_i^t \in X$, so that, provided that $\bx_j^t \in X$,
then $\bx_i^{t+1}$ stays feasible.
This constrained version of the gradient tracking 
has been proposed and analyzed in~\cite{dilorenzo2015distributed,
dilorenzo2016next,sun2016distributed,
sun2017distributed,scutari2018distributed,nedic2018multi}. We notice that in these
works, the authors consider a more general nonconvex optimization setting
and propose more general approximation schemes than a simple linearization.
Indeed, using successive convex approximations,
the proposed distributed algorithms are able to solve also nonconvex 
instances of problem~\eqref{primal:cost-coupled_regularized}, which are 
of great interest in learning and estimation applications.

The gradient tracking has been extended also to time-varying and directed 
networks by means of the push-sum protocol (cf. Appendix~\ref{sec:consensus_push-sum})
in both the consensus and the tracking iterations. Formally, the algorithm
reads
\begin{align*}
  \phi_i^{t+1} & = \smallsum_{j\in\nbrs_i} b_{ij}^t \, \phi_i^t
  \\
  \bx_i^{t+1} & = \frac{1}{\phi_i^{t+1}} 
  \left ( \smallsum_{j\in\nbrs_i} b_{ij}^t \, \phi_i^t \bx_j^t - \gamma^t \, \by_i^t \right )
  \\
  \by_i^{t+1} & = \frac{1}{\phi_i^{t+1}}  
  \left (
    \smallsum_{j\in\nbrs_i} b_{ij}^t \, \phi_i^t \by_j^t 
    + 
    \nabla f_i ( \bx_i^{t+1} ) - \nabla f_i ( \bx_i^t ) 
    \right ),
\end{align*}
with $\phi_i^0 = 1$, for all $i\in\until{N}$, and where the time-varying weights 
$b_{ij}^t$ are entries of column stochastic matrices $B^t \in \real^{N\times N}$, 
for all $t \ge 0$.
This extension has been studied in~\cite{dilorenzo2015distributed,
dilorenzo2016next,sun2016distributed,
sun2017distributed,scutari2018distributed,
nedic2017achieving,xin2018linear,xi2018linear,
nedic2017geometrically,nedic2016cdc}.
Notice that the previous extensions have been combined in some of the mentioned 
works to design time-varying gradient algorithm for convex and nonconvex problems.
Recently, a block-wise implementation of the gradient tracking algorithm
has been proposed in~\cite{notarnicola2017camsap,notarnicola2017cdc,notarnicola2018tac}.

\section{Discussion and References}

Early consensus-based algorithms for distributed optimization and estimation
have been proposed and analyzed 
in~\cite{schizas2008consensus,johansson2009randomized,nedic2009distributed,
cattivelli2010diffusion,nedic2010constrained,lobel2011distributed,
chen2012diffusion,jakovetic2014fast}. 
A push-sum version of the subgradient algorithm has been proposed in~\cite{nedic2015distributed}
to deal with time-varying networks.
Extensions to the stochastic set-up are provided in~\cite{ram2010distributed,nedic2016stochastic}
A distributed algorithm using a constant step-size has been proposed 
in~\cite{shi2015extra}, with proved convergence rate $\mathcal{O}(1/t)$
(which can be strenghtened to linear for strongly convex problems).
The algorithmic framework has been extended 
to regularized problems in~\cite{shi2015proximal}, and a detailed convergence 
rate analysis has been proposed in~ \cite{yuan2016convergence}.
Its extension to directed graphs is proposed in~\cite{xi2017dextra}.
Distributed schemes to solve nonconvex optimization problems are proposed in
\cite{bianchi2013convergence,tatarenko2017nonconvex,zeng2018nonconvex}.

As regards gradient tracking algorithms, the interested reader can find
relevant up-to-date references in Section~\ref{sec:distributed_gradient_tracking}.
Second-order approaches have been investigated in
\cite{wei2013distributed_I,wei2013distributed_II,
mokhtari2017network,eisen2017decentralized}.
Netwon-Raphson distributed approaches have been proposed and analyzed
in \cite{zanella2011newton,varagnolo2016newton}. An extension
to networks with packet loss is given in \cite{bof2018multi}.

Distributed schemes working under asynchronous communication
protocols are studied in \cite{nedic2011asynchronous,srivastava2011distributed,
lee2013distributed,lee2016asynchronous,
lin2016distributed,liu2017convergence}. A randomized block-coordinate descent algorithm for 
convex optimization problems with linear constraints is proposed in~\cite{necoara2013random}.
In~\cite{wu2017decentralized} an asynchronous distributed algorithm working also with communication
delays is proposed.

As regards continous-time optimization, a purely primal approach is designed in~\cite{lu2012zero}.
A prediction-correction approach for online distributed optimization has been proposed in \cite{simonetto2017decentralized}.
It is also worth mentioning the works in
\cite{wang2011control,kia2017distributed,sundararajan2017robust,
hatanaka2018passivity,rawat2018distributed}, where a control perspective
is employed to analyze distributed optimization algorithms.
A distributed scenario with a variable number of working nodes is proposed in
\cite{jakovetic2016distributed}.
A novel methodology to design continuous-time distributed optimization algorithms 
using techniques from geometric control theory is investigated 
in~\cite{ebenbauer2017distributed,michalowsky2018lie}.

Among the most recent contributions, a Frank-Wolfe decomposition
approach for convex and nonconvex problems is analyzed in \cite{wai2017decentralized}.
A distributed algorithm based on the proximal minimization is proposed 
in~\cite{margellos2018distributed} to solve convex constrained problems.
In \cite{xu2018bregman}, a distributed scheme using a Bregman penalization has been proposed. 
A distributed optimization algorithm for convex optimization with local inequality constraints
has been studied in \cite{xie2018distributed}. An asynchronous distributed algorithm 
with heterogeneous regularizations and normalizations is proposed in~\cite{hochhaus2018asynchronous}.
A specialized version of the distributed subgradient algorithm for convex feasibility
problems, which allows for an infinite number of constraint sets, is presented in~\cite{farina2019setmembership}.

\section{Numerical Example}
\label{sec:primal:simulations}

In this section we provide a numerical study to show the behavior of the distributed
optimization algorithms presented in this chapter. 

We consider a network of $N = 30$ agents communicating over a fixed, undirected,
connected graph generated according to an Erd\H{o}s-R\'enyi random
model with parameter $p = 0.2$. 
Agents are equipped with a doubly stochastic matrix built according to the 
Metropolis-Hastings rule \cite{xiao2004fast}, i.e.,
\begin{align*}
	a_{ij} = 
	\begin{cases}
	\frac{1}{\max\{ d_i, d_j \} + 1}, & \text{ if } j \neq i \text{ and } (i,j) \in \EE,
	\\
	1-\sum_{j \in \nbrs_i} \frac{1}{\max\{ d_i, d_j \} + 1},  &  \text{ if } j = i,
	\\
	0, & \text{ otherwise. }
	\end{cases}
\end{align*}

We focus on the logistic regression problem introduced in Section~\ref{sec:setups_logistic},
where we suppose that each agent has $m_1 = \ldots = m_N = 10$ samples
with feature space dimension $d = 5$. We generate the points $p_{i,j}$ according to a
normal distribution with zero mean and variance equal to $2$ and we generate the
binary labels $\ell_{i,j}$ from a standard Bernoulli distribution.
Agents must agree on the optimal solution of
problem~\eqref{eq:logistic_regression_problem}, recalled here
\begin{align*}
  \begin{split}
  \min_{w, b} \: 
  & \:
    \smallsum_{i=1}^N
    \underbrace{
    \left (
      \smallsum_{j=1}^{m_i}
      \log \! \left[ 1 + e^{-(w^\top p_{i,j} + b) \ell_{i,j}} \right]
      + \dfrac{C}{2N} \|w\|^2
    \right )
    }_{f_i(w, b)}.
  \end{split}
\end{align*}
The regularization parameter
$C$ is assumed to be equal to $0.01$.
We compare the distributed subgradient method
(cf. Section~\ref{sec:distributed_gradient}), with diminishing step-size
$\gamma^t = (1/t)^{0.8}$, and the gradient tracking
algorithm (cf. Section~\ref{sec:distributed_gradient_tracking}), with
constant step-size $\gamma = 10^{-3}$.

In Figure~\ref{primal:fig_cost} we compare the convergence rate of 
Algorithm~\ref{alg:distributed_subgradient} and Algorithm~\ref{alg:distributed_grad_tracking}.
That is, we plot the absolute value of the difference between the optimal cost $f^\star$ and
the sum of local costs $\sum_{i=1}^N f_i(\bx_i^t)$.
From the theoretical analysis, the cost error of both algorithms is known to
asymptotically converge to zero. However, the gradient tracking algorithm
has a linear convergence rate and converges more quickly
than the distributed subgradient method (see Figure~\ref{primal:fig_cost}).
\begin{figure}[!htpb]
\centering
  \includegraphics[scale=1]{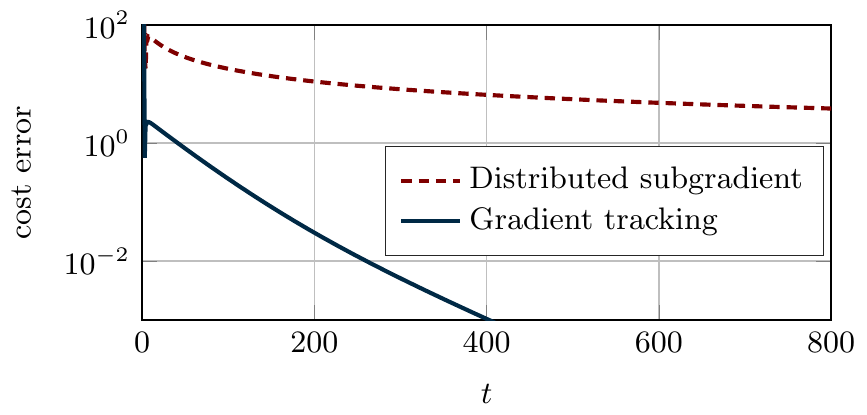}
\caption{
  Evolution of the cost error for the distributed subgradient method
  and for the gradient tracking algorithm.}
  \label{primal:fig_cost}
\end{figure}

In Figure~\ref{primal:fig_dissensus} and~\ref{primal:fig_tracking_error}, we show the
total consensus error of the local solution estimates (for both algorithms)
and of the gradient trackers (for the gradient tracking algorithm), respectively.
\begin{figure}[!htpb]
\centering
  \includegraphics[scale=1]{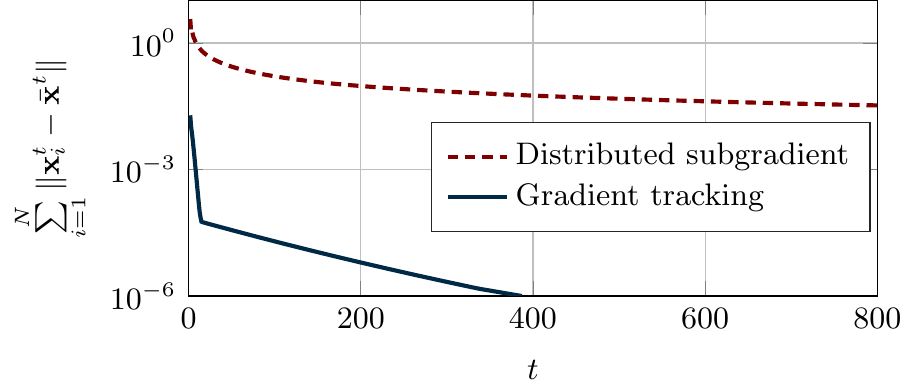}
\caption{
  Evolution of the total consensus error of the local solution estimates $\bx_i^t$ for
  the distributed subgradient method and for the gradient tracking algorithm.}
  \label{primal:fig_dissensus}
\end{figure}

\begin{figure}[!htpb]
\centering
  \includegraphics[scale=1]{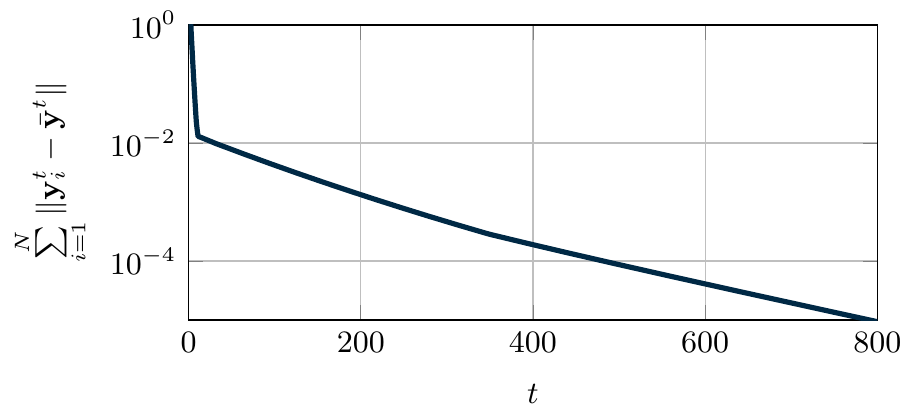}
\caption{
  Evolution of the total consensus error of agents' gradient trackers in
  the gradient tracking algorithm.}
  \label{primal:fig_tracking_error}
\end{figure}

\fi

\iftrue

\chapter{Distributed Dual Methods}
\label{chap:dual}

In this chapter we describe distributed optimization methods based on Lagrangian approaches.
We start by discussing an illustrative example and then we present two 
relevant duality forms to show how duality can be exploited to reformulate
cost-coupled problems as constraint-coupled problems and vice versa.
We describe algorithms for cost-coupled problems based on a decomposition technique
known as \emph{dual decomposition} and on the
Alternating Direction Method of Multipliers (ADMM).
Then, we illustrate duality-based approaches to solve constraint-coupled
problems. To conclude, we give an up-to-date set of references
and we provide numerical examples to highlight the main
features of the discussed algorithms.

\section{Fenchel Duality and Graph Duality}
\label{sec:Fenchel_Graph_duality}

In this section we show how a cost-coupled optimization problem
can be manipulated to obtain alternative (decoupled) problem formulations
that are amenable for distributed computation. 
First, we present a simplified scenario with two agents to illustrate
how duality can be exploited in designing a distributed optimization 
algorithm. Then, we recall a classical duality form known as Fenchel duality
(see~\cite{bertsekas1989parallel}), that paved the way for a number of parallel 
algorithms. Finally, we introduce an alternative and effective approach, that we 
term graph duality, tailored for the distributed framework.

Consider a cost-coupled problem 
\begin{align}
\begin{split}
  \min_{\bx \in\real^d} \: & \: \smallsum_{i=1}^N f_i( \bx )
  \\
  \subj \: & \: \bx \in\bigcap_{i=1}^N X_i,
\end{split}
\label{dual:primal_cost_coupled}
\end{align}
where, for all $i\in\until{N}$, the cost function $f_i$ is convex and the constraint set $X_i$ is 
convex and bounded. These regularity assumptions are standard and guarantee that dual methods apply, i.e.,
that strong duality holds (cf. Appendix~\ref{sec:appendix_duality}).
We will denote by $f^\star$ the optimal cost of problem~\eqref{dual:primal_cost_coupled}.

\subsection{Two-Agent Example}
\label{sec:dual_decomposition_2_agents}

We start by considering a simple ``network'' of $2$ agents and informally discuss how 
duality allows for a suitable decomposition of a cost-coupled problem. 
All the technical details will be provided in the forthcoming sections.

We assume that both agents cooperate to solve the cost-coupled optimization problem
\begin{align} 
\begin{split}
  \min_{\bx \in \real^d}\: & \:  f_1 (\bx) + f_2 (\bx)
  \\  
  \subj \: & \: \bx \in X_1 \cap X_2,
\end{split}
\label{dual:ddec_two_agents_primal_problem}
\end{align}
where $f_1, f_2 : \real^{d} \rightarrow \real$ and $X_1, X_2 \subseteq \real^d$.
Recall that for such cost-coupled set-up, each agent is assumed to know only its own cost
function and constraint (e.g., agent $1$ knows only $f_1$ and $X_1$).

The aim is to decompose problem~\eqref{dual:ddec_two_agents_primal_problem}
by exploiting Lagrangian duality. Specifically, we would like to obtain two
symmetric subproblems so that each agent can solve its subproblem independently.
To this end, we recast problem~\eqref{dual:ddec_two_agents_primal_problem} into
an equivalent formulation by introducing two copies, say $\bx_1$ and $\bx_2$, of the 
decision variable $\bx$ and a coherence constraint to obtain
\begin{align} 
\begin{split}
  \min_{\bx_1, \bx_2}\: & \: f_1(\bx_1) + f_2(\bx_2)
  \\  
  \subj \: & \: \bx_1\in X_1, \: \bx_2\in X_2,
  \\
  & \: \bx_1 = \bx_2.
\end{split}
\label{dual:ddec_two_agents_one_copy}
\end{align}
This reformulation exhibits a convenient structure since the cost function of
each agent depends only on its copy of the decision variable, while the coupling
in the problem is due only to the coherence constraint $\bx_1 = \bx_2$.
Now we write the dual of problem~\eqref{dual:ddec_two_agents_one_copy}
(cf. Appendix~\ref{sec:appendix_duality}).  Let us introduce the Lagrangian
of~\eqref{dual:ddec_two_agents_one_copy}, i.e.,
\begin{align*}
  \LL ( \bx_1, \bx_2, \blambda) 
  = 
  f_1 ( \bx_1 ) + f_2 ( \bx_2 ) + \blambda^\top (\bx_1 - \bx_2),
\end{align*}
where $\blambda \in\real^d$ is the multiplier associated to the constraint $\bx_1 = \bx_2$.
As it will be clear from the forthcoming discussion, the presence of a single
$\blambda$ in $\LL$ does not allow for a symmetric decomposition.
Thus, let us follow an alternative approach, more suited
for distributed computation. Formally, we add another, redundant
constraint and rewrite~\eqref{dual:ddec_two_agents_one_copy} as
\begin{align} 
\begin{split}
  \min_{ \bx_1, \bx_2} \: & \:  f_1( \bx_1) + f_2( \bx_2)
  \\  
  \subj \: & \: \bx_1 \in X_1, \: \bx_2 \in X_2,
  \\
  & \: \bx_1 = \bx_2,
  \\
  & \: \bx_2 = \bx_1,
\end{split}
\label{dual:ddec_two_agents_two_copies}
\end{align}
which is trivially equivalent to problem~\eqref{dual:ddec_two_agents_one_copy}.
For this problem, the Lagrangian becomes
\begin{align}
\label{dual:ddec_two_agents_lagrangian}
\begin{split}
  \LL( \bx_1, \bx_2, \blambda_{12}, \blambda_{21}) 
  & 
  = 
  f_1 (\bx_1) + f_2(\bx_2)
  +
  \blambda_{12}^\top  (\bx_1 - \bx_2) 
  +
  \blambda_{21}^\top  (\bx_2 - \bx_1) 
  \\
  & 
  \stackrel{(a)}{=} 
  f_1(\bx_1) + (\blambda_{12} - \blambda_{21} )^\top \bx_1
  \\
  & 
  \quad
  + 
  f_2 ( \bx_2) + (\blambda_{21} - \blambda_{12})^\top \bx_2,
\end{split}
\end{align}
where $\blambda_{12}$ and $\blambda_{21}$ are the multipliers
associated to the constraints $\bx_1 = \bx_2$ and $\bx_2 = \bx_1$
respectively, and in (a) we use the problem symmetry to rearrange $\LL$ in two similar terms,
each one depending only on a single primal variable, 
i.e., on $\bx_1$ and $\bx_2$ respectively.
The dual function of problem~\eqref{dual:ddec_two_agents_two_copies} is obtained 
by minimizing the Lagrangian~\eqref{dual:ddec_two_agents_lagrangian} with respect 
to the primal variables. Formally,
\begin{align*}
  &
  q ( \blambda_{12}, \blambda_{21})
  = \inf_{ \bx_1 \in X_1, \bx_2 \in X_2} \: \LL( \bx_1, \bx_2, \blambda_{12}, \blambda_{21})
  \\
  &
  =
  \underbrace{
  \min_{ \bx_1 \in X_1} \big( f ( \bx_1) + (\blambda_{12} - \blambda_{21} )^\top \bx_1\big)
  }_{q_1 (\blambda_{12}, \blambda_{21}) }
  +
  \underbrace{
  \min_{ \bx_2\in X_2} \big( f ( \bx_2)+( \blambda_{21} - \blambda_{12})^\top \bx_2 \big)
  }_{q_2 (\blambda_{12}, \blambda_{21}) } \! .
\end{align*}
Finally, we can pose the dual problem as 
\begin{align}
\label{dual:ddec_two_agents_dual_problem}
  \max_{ \blambda_{12}, \blambda_{21}} \: & \: 
  q( \blambda_{12}, \blambda_{21})
  =
  \max_{ \blambda_{12}, \blambda_{21} }\:
  q_1( \blambda_{12}, \blambda_{21} ) + q_2( \blambda_{12}, \blambda_{21}).
\end{align}
Under suitable regularity assumption on the primal problem~\eqref{dual:ddec_two_agents_primal_problem},
problem~\eqref{dual:ddec_two_agents_dual_problem} has the same optimal cost.
Thus, by solving~\eqref{dual:ddec_two_agents_dual_problem}, a dual optimal solution can be exploited to
recover a primal optimal solution.
In Section~\ref{sec:graph_duality}, we described the extended approach for a
general set-up with $N$ agents.

The \emph{distributed dual decomposition} algorithm consists of an iterative
procedure to solve problem~\eqref{dual:ddec_two_agents_dual_problem}
by means of a subgradient algorithm (cf. Appendix~\ref{app:gradient_method}),
and to obtain ultimately a solution of the original problem~\eqref{dual:ddec_two_agents_primal_problem}.
The choice of solving~\eqref{dual:ddec_two_agents_dual_problem} with such algorithm
is convenient since a subgradient of the dual function\footnote{
Notice that here we are slightly abusing terminology. Indeed, subgradients
are defined for convex functions, while the dual function $q$ is concave.
Here, the notation $\subgrad q$ stands for the opposite of a subgradient of $-q$.
}
at a given $( \bar{\blambda}_{12}, \bar{\blambda}_{21} )$ can be computed, in a
distributed way, as
\begin{align*}
  \subgrad q( \bar{\blambda}_{12}, \bar{\blambda}_{21} )
  =
  \begin{bmatrix}
    \subgrad_{\blambda_{12}}  q ( \bar{\blambda}_{12}, \bar{\blambda}_{21} )
    \\  
    \subgrad_{\blambda_{21}}  q( \bar{\blambda}_{12}, \bar{\blambda}_{21} )
  \end{bmatrix}
  = 
  \begin{bmatrix}
    \bar{\bx}_1 - \bar{\bx}_2
    \\
    \bar{\bx}_2 - \bar{\bx}_1
  \end{bmatrix},
\end{align*}
where
\begin{align*}
  \bar{\bx}_1 
  & \in 
  \argmin_{ \bx_1\in X_1} \, f_1(\bx_1) + ( \bar{\blambda}_{12} - \bar{\blambda}_{21} )^\top \bx_1,
  \\  
  \bar{\bx}_2 
  & \in 
  \argmin_{ \bx_2\in X_2} \, f_2(\bx_2) + ( \bar{\blambda}_{21} - \bar{\blambda}_{12} )^\top \bx_2 .
\end{align*}

We assume that agent $1$ maintains and updates $\bx_1$ and $\blambda_{12}$, while 
agent $2$ maintains and updates $\bx_2$ and $\blambda_{21}$.
At the beginning, they initialize $\blambda_{12}^0$ and $\blambda_{21}^0$
to arbitrary values. Then, at each iteration $t \ge 0$ of the algorithm, agents exchange 
their current value of $\blambda_{12}^t$ and $\blambda_{21}^t$
and compute a local estimate of the solution as
\begin{align}
\label{dual:ddec_2ag_primal_min}
\begin{split}
  \bx_1^{t+1} 
  &
  \in 
  \argmin_{ \bx_1\in X_1 } \, f_1 ( \bx_1) + (\blambda_{12}^t - \blambda_{21}^t )^\top \bx_1,
  \\  
  \bx_2^{t+1} 
  & 
  \in
  \argmin_{ \bx_2\in X_2} \, f_2 ( \bx_2) + ( \blambda_{21}^t - \blambda_{12}^t )^\top \bx_2.
\end{split}  
\end{align}

Then, they exchange the updated value of $\bx_1^{t+1}$ and $\bx_2^{t+1}$
to adjust their local dual variable as
\begin{align}
\label{dual:ddec_2ag_dual_update}
\begin{split}
  \blambda_{12}^{t+1} 
  & 
  =
  \blambda_{12}^t 
  + \gamma^t \,
  \subgrad_{\blambda_{12}} q(\blambda_{12}^t, \blambda_{21}^t)
  =
  \blambda_{12}^t 
  + \gamma^t \,
  ( \bx_1^{t+1} - \bx_2^{t+1} ),
  \\[1.2ex]
  \blambda_{21}^{t+1} 
  & 
  =
  \blambda_{21}^t
  + \gamma^t \,
  \subgrad_{\blambda_{21} } q(\blambda_{12}^t, \blambda_{21}^t)
  =
  \blambda_{21}^t 
  + \gamma^t \,
  ( \bx_2^{t+1} - \bx_1^{t+1} ),
\end{split}
\end{align}
where $\gamma^t$ denotes the step-size of the gradient method.
An illustration of how communication and computation interleave is shown in
Figure~\ref{fig:ddec_two_agents_algorithm}.
\begin{figure}[!htbp]
  \centering
  \includegraphics[scale=1]{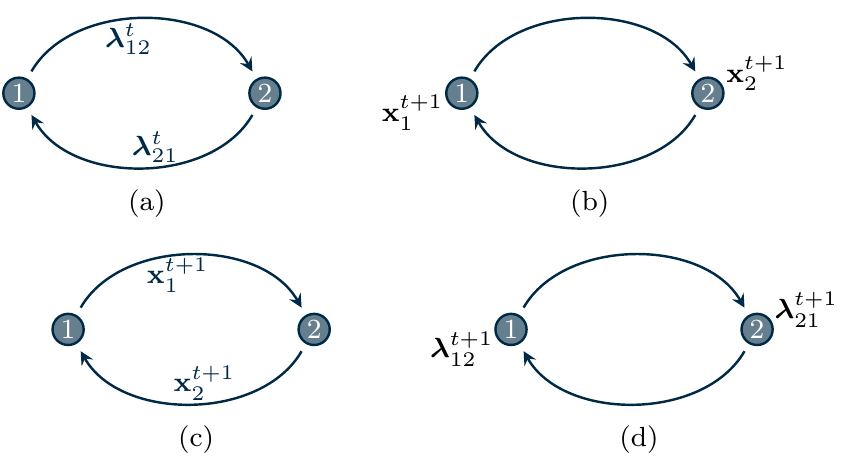}
  \caption{Distributed dual decomposition algorithm for the $2$-agent case. In (a) agents exchange their dual variables 
    to update in (b) the primal variables, cf.~\eqref{dual:ddec_2ag_primal_min}.
    Then, the updated primal variables are communicated in (c) to perform the dual update in (d),
    cf.~\eqref{dual:ddec_2ag_dual_update}.}
  \label{fig:ddec_two_agents_algorithm}
\end{figure}

In Section~\ref{sec:dual_decomposition} we will present and analyze
the general case with $N$ agents and prove that the local solution estimates 
are asymptotically consensual and converge to an optimal solution of the primal 
problem.

\subsection{Fenchel Duality}
\label{sec:Fenchel_duality}
A classical approach to manipulate problem~\eqref{dual:primal_cost_coupled} 
consists in writing its Fenchel dual \cite{bertsekas1999nonlinear}.
To this end, let us introduce copies $\bx_i \in \real^d$ of the optimization variable $\bx$ 
and an auxiliary variable $\bz \in \real^d$ needed to enforce coherence among all the 
copies.
Then, problem~\eqref{dual:primal_cost_coupled} can be equivalently
recast as
\begin{align}
\label{dual:primal_problem_fenchel_copies}
\begin{split}
  \min_{\bx_1,\ldots,\bx_N, \bz} \: & \: \smallsum_{i=1}^N f_i(\bx_i)
  \\
  \subj \: & \: \bx_i \in X_i, \hspace{1cm} i \in \until{N}
  \\
  & \: \bx_i = \bz, \hspace{1.2cm} i \in \until{N}.
\end{split}
\end{align}
The Fenchel-dual problem of~\eqref{dual:primal_cost_coupled} is defined as 
the (standard) dual of~\eqref{dual:primal_problem_fenchel_copies}.
To this end, consider the Lagrangian function of~\eqref{dual:primal_problem_fenchel_copies}, i.e.,
\begin{align*}
 \LL(\bx_1,\ldots,\bx_N,\bz, \blambda_1,\ldots, \blambda_N)
 = \smallsum_{i=1}^N ( f_i(\bx_i) + \blambda_i^\top (\bx_i - \bz) ).
\end{align*}
The minimization of $\LL$ with respect to the primal variables 
gives the dual function
\begin{align*}
  q(\blambda_1, \ldots, \blambda_N)
  & =
  \inf_{\bx_1 \in X_1, \ldots, \bx_N\in X_N, \bz} 
  \smallsum_{i=1}^N \left ( f_i(\bx_i) + \blambda_i^\top (\bx_i - \bz) \right )
  \\
  & =
  \smallsum_{i=1}^N \inf\limits_{\bx_i\in X_i} \left ( f_i(\bx_i) + \blambda_i^\top \bx_i \right )
  +
  \inf\limits_{\bz} \Big (\smallsum_{i=1}^N \blambda_i \Big)^\top \bz
  \\[0.2cm]
  & =
  \begin{cases}
  \smallsum_{i=1}^N \inf\limits_{\bx_i\in X_i} \left ( f_i(\bx_i) + \blambda_i^\top \bx_i \right )  
  & \text{ if } \smallsum_{i=1}^N \blambda_i = \0
  \\
  -\infty & \text{ otherwise}.
  \end{cases}
\end{align*}
Then, the Fenchel-dual problem of~\eqref{dual:primal_cost_coupled} is given by 
the maximization of $q$ over its domain, i.e.,
\begin{align}
\label{dual:Fenchel_dual}
\begin{split}
  \max_{ \blambda_1,\ldots,\blambda_N } \: & \: \smallsum_{i=1}^N q_i (\blambda_i)
  \\
  \subj \: & \: \smallsum_{i=1}^N \blambda_i = \0,
\end{split}
\end{align}
where each $q_i$ is defined as
\begin{align*}
  q_i( \blambda_i) \triangleq \min_{\bx_i \in X_i } \: f_i ( \bx_i ) + \blambda_i^\top \bx_i.
\end{align*}
Problems in the form~\eqref{dual:Fenchel_dual} are often referred to as
\emph{resource allocation} problems.
We point out that~\eqref{dual:Fenchel_dual} has a constraint-coupled structure,
similar to problem~\eqref{setups:constraint-coupled_problem} in 
Section~\ref{sec:setups_constraint_coupled}.
A (centralized) projected gradient method applied to~\eqref{dual:Fenchel_dual}
reads as follows,
\begin{subequations}
\label{dual:parallel_ddec}
\begin{align}
\label{dual:parallel_ddec_primal}
  \bx_i^{t+1}
  & 
  \in \argmin_{\bx_i \in X_i } \: f_i ( \bx_i ) + \big( \blambda_i^t \big)^\top \bx_i, \hspace{0.5cm} i\in\until{N},
  \\[1.2ex]
  \label{dual:parallel_ddec_dual}
  \begin{bmatrix}
    \blambda_1^{t+1} 
    \\
    \vdots
    \\
    \blambda_N^{t+1} 
  \end{bmatrix}
  & = 
  \PP_{\DD} 
   \left ( 
     \begin{bmatrix}
    \blambda_1^t + \gamma\, \bx_1^{t+1}
    \\
    \vdots
    \\
    \blambda_N^t + \gamma\, \bx_N^{t+1}
  \end{bmatrix}
 \right ),
\end{align}
where $\DD = \{ (\blambda_1, \ldots \blambda_N) \mid \sum_{i=1}^N \blambda_i = \0 \}$
and $\PP_{\DD}$ denotes the Euclidean projection onto $\DD$. We assume that 
the algorithm is initialized such that $(\blambda_1^0, \ldots \blambda_N^0) \in \DD$,
e.g., $\blambda_i = \0$ for all $i\in\until{N}$. 
The projection step~\eqref{dual:parallel_ddec_dual} admits the following 
explicit expression
\begin{align*}
  \begin{split}
  \begin{bmatrix}
    \blambda_1^{t+1} 
    \\
    \vdots
    \\
    \blambda_N^{t+1} 
  \end{bmatrix}
  & = 
  \begin{bmatrix}
    \blambda_1^t + \gamma\, \bx_1^{t+1}
    \\[2.3ex]
    \vdots
    \\[2.3ex]
    \blambda_N^t + \gamma\, \bx_N^{t+1}
  \end{bmatrix}
  -
  \begin{bmatrix}
    \dfrac{1}{N} \smallsum_{i=1}^N \big( \blambda_i^t + \gamma\, \bx_i^{t+1} \big)
    \\
    \vdots
    \\
    \dfrac{1}{N} \smallsum_{i=1}^N \big( \blambda_i^t + \gamma\, \bx_i^{t+1} \big)
  \end{bmatrix}
  \\
  & 
  \stackrel{(a)}{=}
  \begin{bmatrix}
    \blambda_1^t + \gamma\, \left ( \bx_1^{t+1} - \dfrac{1}{N}
    \smallsum_{i=1}^N \bx_i^{t+1} \right )
    \\
    \vdots
    \\
    \blambda_N^t + \gamma\, \left ( \bx_N^{t+1} - \dfrac{1}{N}
    \smallsum_{i=1}^N \bx_i^{t+1} \right )
  \end{bmatrix},
  \end{split}
\end{align*}
\end{subequations}
where in (a) we exploited the (recursive) feasibility of the previous iterate
$( \blambda_1^t, \ldots, \blambda_N^t)$.

Algorithm~\eqref{dual:parallel_ddec} is also known as \emph{parallel dual 
decomposition}. Notice that we used properties of dual subgradients involving 
the local primal minimizers to write the dual update (cf. Appendix~\ref{sec:appendix_duality}).
Figure~\ref{dual:fig_parallel_architecture} shows the algorithmic flow of parallel dual 
decomposition.
\begin{figure}[htpb]
	\centering
	\includegraphics[scale=0.92]{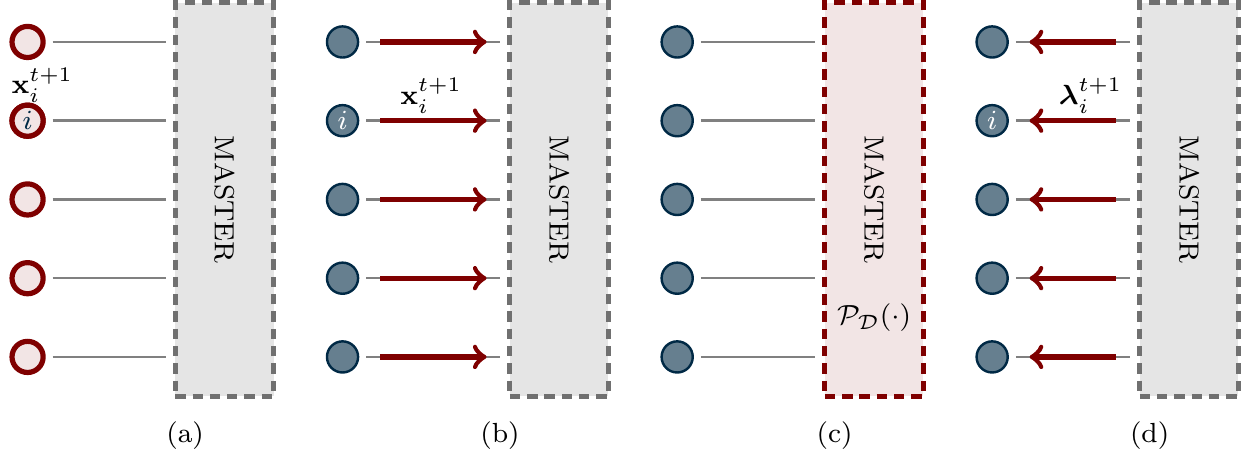}
	\caption{
	  Algorithmic evolution of parallel dual decomposition: in (a) each node
	  updates its primal variable according to~\eqref{dual:parallel_ddec_primal}
	  and sends it to the master node (b). Then, in (c) the master node performs 
	  the projection of the dual variables as in~\eqref{dual:parallel_ddec_dual} and 
	  sends them back to the nodes in (d).
	}
\label{dual:fig_parallel_architecture}
\end{figure}

Notice that problem~\eqref{dual:primal_problem_fenchel_copies}
can also be solved using ADMM (cf. Appendix~\ref{app:ADMM}).
The formal updates can be derived as done for the parallel dual decomposition
by considering the so-called \emph{augmented} Lagrangian. %
It can be shown (see~\cite{boyd2011distributed}) that the resulting algorithm is
\begin{subequations}
\label{dual:parallel_admm}
\begin{align}
\label{dual:parallel_admm_x}
  \bx_i^{t+1}
  & 
  = \argmin_{\bx_i \in X_i } \: f_i ( \bx_i ) + \big( \blambda_i^t \big)^\top \bx_i
  +\frac{\rho}{2} \|\bx_i - \bz^t\|^2, 
  \hspace{0.5cm} \forall\, i
  \\
  \label{dual:parallel_admm_z}
  \bz^{t+1}
  & 
  = \frac{1}{\rho} \smallsum_{i=1}^N \blambda_i^t
  + \smallsum_{i=1}^N\bx_i^{t+1}
  \\
  \label{dual:parallel_admm_dual}
  \blambda_i^{t+1} 
  & = 
  \blambda_i^t + \rho \, ( \bx_i^{t+1} - \bz^{t+1} ),
  \hspace{0.5cm} \forall\, i,
\end{align}
\end{subequations}
where $\rho$ is the positive penalty parameter of the augmented Lagrangian.  It
is worth noting that algorithm~\eqref{dual:parallel_admm} enjoys a parallel
structure similarly to the dual decomposition case.

\subsection{Graph Duality}
\label{sec:graph_duality}

A powerful method to decouple a cost-coupled problem~\eqref{dual:primal_cost_coupled}
into a convenient structure, amenable to distributed computation, is to introduce suitable graph-induced
constraints, that result into an appropriate dual problem. 
We term this methodology \emph{graph duality} to stress that
it combines the classical duality theory with the network structure. Indeed, the resulting 
dual problem heavily depends on the specific network as will be detailed next.
The method that we now formalize is the general form of the approach used in
Section~\ref{sec:dual_decomposition_2_agents}

Let a fixed, undirected and connected graph $\GG = (\until{N},\EE)$ be given,
then we define the $\GG$-dual of~\eqref{dual:primal_cost_coupled} as follows. 
Introduce $N$ copies, say $\bx_1, \ldots, \bx_N$, of the decision variable $\bx$ and
coherence constraints of the copies matching the graph structure, i.e., $\bx_i = \bx_j$ for all $(i,j) \in \EE$.
Then, problem~\eqref{dual:primal_cost_coupled} becomes
\begin{align}
\label{dual:G_primal}
\begin{split}
  \min_{ \bx_1,\ldots, \bx_N} \: & \: \smallsum_{i=1}^N f_i ( \bx_i)
  \\
  \subj \: & \: \bx_i \in X_i, \hspace{1cm} i \in \until{N}
  \\
  & \: \bx_i = \bx_j, \hspace{1cm} (i,j) \in \EE.
\end{split}
\end{align}
Being the graph $\GG$ connected, the equivalence of problems~\eqref{dual:primal_cost_coupled}
and~\eqref{dual:G_primal} is guaranteed.

Let $\blambda_{ij} \in \real^S$ be the multiplier associated to the constraint 
$\bx_i = \bx_j$, then the Lagrangian of~\eqref{dual:G_primal} is
\begin{align}
\label{dual:ddec_lagrangian}
  \LL (\bx_1, \ldots ,\bx_N, \bLambda) 
  = 
  \smallsum_{i=1}^N f_i( \bx_i) 
  +
  \smallsum_{i=1}^N \smallsum_{j\in \nbrs_i} \blambda_{ij}^\top (\bx_i - \bx_j),
\end{align}
where the variable $\bLambda$ stacks all the $| \EE|$ multipliers $\blambda_{ij}$.

Notice that, being the communication graph undirected, for each term 
$\blambda_{ij}^\top (\bx_i- \bx_j)$ in~\eqref{dual:ddec_lagrangian} there is also
a symmetric counterpart $\blambda_{ji}^\top (\bx_j - \bx_i)$.
Thus, the Lagrangian~\eqref{dual:ddec_lagrangian} can be rearranged so as to isolate the 
primal variables $\bx_i$, $i\in\until{N}$, as
\begin{align*}
  \LL(\bx_1, \ldots, \bx_N, \bLambda) 
  = 
  \smallsum_{i=1}^N \Big( f_i( \bx_i ) + \bx_i^\top \smallsum_{j\in \nbrs_i} ( \blambda_{ij} - \blambda_{ji}) \Big).
\end{align*}
At this point, the dual function of~\eqref{dual:G_primal} is obtained by 
minimizing the Lagrangian $\LL$ with respect to the primal variables, leading 
to a separable function.
Finally, the $\GG$-dual of~\eqref{dual:primal_cost_coupled} is the (standard) 
dual of~\eqref{dual:G_primal}, which is given by
\begin{align}
\label{dual:G_dual}
  \max_{ \bLambda } \: q( \bLambda) 
  = 
  \max_{ \bLambda } \: \smallsum_{i=1}^N q_i ( \{ \blambda_{ij}, \blambda_{ji}\}_{(i,j)\in \EE}),
\end{align}
where the $i$-th term $q_i$ of the dual function $q$ is defined as
\begin{align*}
  q_i ( \{ \blambda_{ij}, \blambda_{ji}\}_{(i,j)\in \EE})
  =
  \min_{ \bx_i\in X_i}
  f_i ( \bx_i) + \bx_i^\top \smallsum_{j\in \nbrs_i} ( \blambda_{ij} - \blambda_{ji}),
\end{align*}
for all $i\in\until{N}$.
  We notice that problem~\eqref{dual:G_dual} exhibits interesting features
  for a distributed computation framework. First, it is an unconstrained optimization
  problem with cost function expressed, similarly to the starting problem, as the sum of local 
  terms $q_i$. However, differently from the original problem~\eqref{dual:G_primal},
  in the $\GG$-dual~\eqref{dual:G_dual} the $i$-th cost function
  depends only on the variables of agent $i$ and of its neighbors, rather
  than on the entire stack of decision vectors. %
  In Section~\ref{sec:dual_decomposition}, we will derive a distributed
  algorithm that exploits the special structure of problem~\eqref{dual:G_dual},
  known in the literature as partitioned optimization (cf. Remark~\ref{setups:partitioned_optimization}).

\section{Distributed Dual Decomposition for Cost-Coupled Problems}
\label{sec:dual_decomposition}

In this section, we review an algorithm, known as distributed dual decomposition, 
that relies on duality to solve cost-coupled problems in a distributed way.
Decomposition techniques based on duality have been introduced 
in~\cite{bertsekas1989parallel,palomar2006tutorial,yang2010distributed}.
Typically, they are used to obtain parallel algorithms to speed-up the computation.
However, the distributed extension of those techniques are only partially
discussed in the mentioned references, while in the following we provide
a comprehensive and constructive analysis for this scenario.

We consider $N$ agents in a network that want to cooperatively solve a cost-coupled
problem~\eqref{dual:primal_cost_coupled} (cf. Section~\ref{sec:setups_cost_coupled})
that satisfies the following regularity properties.
\begin{assumption}
\label{dual:problem_regularity_assumption}
  For all $i\in\until{N}$, each $f_i$ is a convex function and
  each $X_i$ is a compact, convex set. Moreover, there exists a vector
  $\bx$ such that $\bx \in \relint{X_i}$\footnote{
  Given a set $X \subset \real^d$, we denote by $\relint{X}$ its relative interior.
  }, for all $i \in \until{N}$.
  \oprocend
\end{assumption}

The latter part of Assumption~\ref{dual:problem_regularity_assumption}
is known in the literature as Slater's constraint qualification, and is a
sufficient condition to ensure that strong duality holds.

Agent $i$ maintains a primal solution estimate $\bx_i^t$,
and dual solution estimates $\blambda_{ij}^t, j \in \nbrs_i$.
The distributed dual decomposition algorithm is based on a subgradient method applied to the $\GG$-dual 
of~\eqref{dual:primal_cost_coupled} (see Section~\ref{sec:graph_duality}), i.e., 
\begin{align}
\label{dual:ddec_Gdual}
  \max_{\bLambda} \: & \: \smallsum_{i=1}^N q_i (\{ \blambda_{ij}, \blambda_{ji}\}_{j \in\nbrs_i} ).
\end{align}

A subgradient of the dual function $q(\bLambda)$ 
at a given $\bar{\bLambda}$ (stacking all the $\bar{\blambda}_{ij}$) 
can be computed in a distributed way as follows. The component of $\subgrad q$ corresponding
to the variable $\blambda_{ij}$ is equal to (cf. Appendix~\eqref{sec:appendix_duality})
\begin{align*}
  \subgrad_{\blambda_{ij} } q( \bar{ \bLambda } ) 
  =
  \bar{\bx}_i - \bar{\bx}_j,
\end{align*}
where $\bar{\bx}_i$ is computed as
\begin{align*}
  \bar{\bx}_i 
  \in
  \argmin_{\bx_i \in X_i}
  f_i ( \bx_i) + \bx_i^\top \smallsum_{j\in \nbrs_i} ( \bar{\blambda}_{ij} - \bar{\blambda}_{ji}),
\end{align*}
and, consistently, for $\bar{\bx}_j$.
Due to the sparse computation of dual subgradients, a subgradient method applied 
to the $\GG$-dual of~\eqref{dual:primal_cost_coupled} turns out to be a 
distributed algorithm.
Formally, each agent $i$ initializes $\blambda_{ij}^t$ for $j \in \nbrs_i$ to any vector in $\real^d$.
At each iteration $t$, each agent $i$ collects from its neighbors $j\in\nbrs_i$ 
the updated dual variables $\blambda_{ji}^t $ and performs a primal minimization
\begin{align*}
  \bx_i^{t+1} 
  \in
  \argmin_{ \bx_i\in X_i}  \: f_i(\bx_i) + \bx_i^\top \smallsum_{j\in \nbrs_i} (\blambda_{ij}^t - \blambda_{ji}^t).
\end{align*}
Then, agents exchange their updated primal solution estimates and perform a subgradient
method step on the dual variables according to
\begin{align*}
  \blambda_{ij}^{t+1} 
  =
  \blambda_{ij}^t
  +
  \gamma^t \, ( \bx_i^{t+1} - \bx_j^{t+1} ),
  \hspace{0.7cm} 
  j\in \nbrs_i,
\end{align*}
where $\gamma^t$ is the step-size sequence.

Figure~\ref{fig:ddec_graphical} shows the algorithmic flow of the distributed
dual decomposition while the following table
(Algorithm~\ref{alg:distributed_dual_dec}) summarizes the algorithm from the
perspective of each agent $i$.
\begin{figure}[t]
  \centering
  \includegraphics[scale=1]{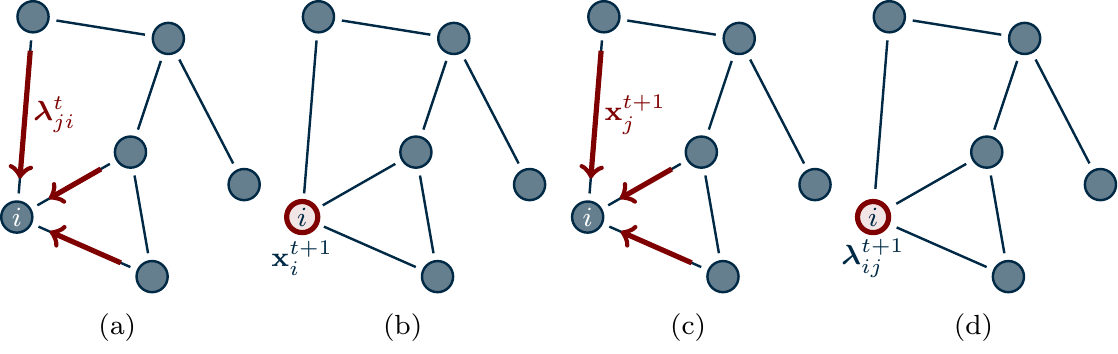}
  \caption{
	  Algorithmic evolution of distributed dual decomposition: in (a) each node
	  receives the dual variables from its neighbors. In (b),
	  the local primal variable is updated according to~\eqref{dual:ddec_algorithm_x}.
	  Then, in (c) the primal variables are broadcast to neighbors
	  to allow in (d) for the dual updates~\eqref{dual:ddec_algorithm_lambda}.
  }
  \label{fig:ddec_graphical}
\end{figure}

\begin{algorithm}[H]
  \begin{algorithmic}[0]
    \Statex \textbf{Initialization}: $\blambda_{ij}^0 $ for all $j\in \nbrs_i$
    \smallskip
    
    \Statex \textbf{Evolution}: for $t=0,1,... $
    \smallskip
    
      \StatexIndent[0.5] \textbf{Gather} $\blambda_{ji}^t$ from neighbors $j \in \nbrs_i$
      
      \StatexIndent[0.5] \textbf{Compute}
      \begin{align}
        \label{dual:ddec_algorithm_x}
        \bx_i^{t+1}
        \in
        \argmin_{ \bx_i \in X_i} 
        \: & \:
        f_i ( \bx_i ) + \bx_i^\top \smallsum_{j\in \nbrs_i} ( \blambda_{ij}^t - \blambda_{ji}^t )
      \end{align}
    
      \StatexIndent[0.5] \textbf{Gather} $\bx_{j}^{t+1}$ from neighbors $j \in \nbrs_i$
      
      \StatexIndent[0.5] \textbf{Update} for all $j\in \nbrs_i$
      \begin{align}
      \label{dual:ddec_algorithm_lambda}
        \blambda_{ij}^{t+1} 
        & 
        =
        \blambda_{ij}^t
        + \gamma^t \, ( \bx_i^{t+1} - \bx_j^{t+1} )
      \end{align}
      
  \end{algorithmic}
  \caption{Distributed Dual Decomposition}
  \label{alg:distributed_dual_dec}
\end{algorithm}

Next, we provide the convergence result for Algorithm~\ref{alg:distributed_dual_dec}.
\begin{theorem}
  Let Assumption~\ref{dual:problem_regularity_assumption} hold.
  Moreover, let the communication graph be undirected and connected and let
  the step-size $\gamma^t$ satisfy Assumption~\ref{primal:stepsize_diminishing}.
	Then, 
	the dual variable sequence $\{\bLambda^t\}_{t\ge 0}$ generated by 
	Algorithm~\ref{alg:distributed_dual_dec} satisfies
	\begin{align*}
	  \lim_{t\to\infty} q( \bLambda^t) = f^\star,
	\end{align*}
	where $f^\star$ is the optimal cost of problem~\eqref{dual:primal_cost_coupled}.
\end{theorem}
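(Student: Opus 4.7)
The plan is to reinterpret Algorithm~\ref{alg:distributed_dual_dec} as a centralized supergradient ascent on the concave dual function $q(\bLambda)$ of~\eqref{dual:G_primal}, and then invoke classical subgradient-method theory with diminishing step-size together with strong duality. The distributed nature plays no role in the convergence argument itself: it is only used to certify that the supergradient step can be implemented locally, which is already built into the derivation in Section~\ref{sec:graph_duality}.

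First, I would establish that $q^\star = f^\star$. Under Assumption~\ref{dual:problem_regularity_assumption}, each $f_i$ is convex, each $X_i$ is compact and convex, and the common vector $\bar{\bx} \in \bigcap_i \relint X_i$ gives a strictly feasible point for the affine coherence constraints $\bx_i = \bx_j$ (simply set $\bx_i = \bar{\bx}$ for all $i$). Slater's condition therefore holds, yielding strong duality between~\eqref{dual:G_primal} and~\eqref{dual:ddec_Gdual} and existence of a dual optimum $\bLambda^\star$. Since the graph is connected, problems~\eqref{dual:G_primal} and~\eqref{dual:primal_cost_coupled} share the optimum $f^\star$, so $q^\star = f^\star$.

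Next, I would identify the updates~\eqref{dual:ddec_algorithm_x}--\eqref{dual:ddec_algorithm_lambda} as a supergradient step. A Danskin-type argument applied to the Lagrangian~\eqref{dual:ddec_lagrangian} shows that, for any minimizer $\bx_i^{t+1}$ of~\eqref{dual:ddec_algorithm_x}, the vector with blocks $\bx_i^{t+1} - \bx_j^{t+1}$ belongs to the superdifferential of $q$ at $\bLambda^t$. Hence~\eqref{dual:ddec_algorithm_lambda} is precisely $\bLambda^{t+1} = \bLambda^t + \gamma^t \subgrad q(\bLambda^t)$. Compactness of each $X_i$ gives $\|\bx_i^{t+1} - \bx_j^{t+1}\| \le \diam(X_i) + \diam(X_j)$, so a uniform bound $\|\subgrad q(\bLambda^t)\| \le G$ holds for all $t$.

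With these ingredients, I would run the familiar subgradient-method inequality. Expanding $\|\bLambda^{t+1} - \bLambda^\star\|^2$ and using concavity of $q$ gives
\begin{align*}
  \|\bLambda^{t+1} - \bLambda^\star\|^2 \le \|\bLambda^t - \bLambda^\star\|^2 - 2\gamma^t\bigl(q^\star - q(\bLambda^t)\bigr) + (\gamma^t)^2 G^2.
\end{align*}
Since $q^\star - q(\bLambda^t) \ge 0$, Assumption~\ref{primal:stepsize_diminishing} and Lemma~\ref{lem:supermartingale} yield convergence of $\|\bLambda^t - \bLambda^\star\|^2$ to a finite value and $\sum_t \gamma^t\bigl(q^\star - q(\bLambda^t)\bigr) < \infty$; combined with $\sum_t \gamma^t = \infty$ this forces $\liminf_{t\to\infty} q(\bLambda^t) = q^\star = f^\star$. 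The main obstacle I anticipate is upgrading this $\liminf$ to the $\lim$ claimed in the statement: since $\{\bLambda^t\}$ is bounded, it admits a subsequence converging to some $\bar{\bLambda}$ with $q(\bar{\bLambda}) = q^\star$; reapplying the inequality above with $\bar{\bLambda}$ in place of $\bLambda^\star$ shows that $\|\bLambda^t - \bar{\bLambda}\|$ itself converges, and continuity of the concave function $q$ then delivers $\lim_{t\to\infty} q(\bLambda^t) = f^\star$.
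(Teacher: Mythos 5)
Your proposal is correct and follows essentially the same route as the paper's (sketched) proof: both recognize Algorithm~\ref{alg:distributed_dual_dec} as a subgradient method on the $\GG$-dual with subgradients bounded by compactness of the $X_i$, and both conclude via classical subgradient convergence plus strong duality and continuity of $q$. The only difference is that you unpack the classical argument (the distance-decrease inequality, Lemma~\ref{lem:supermartingale}, and the subsequence upgrade from $\limsup$ to $\lim$) where the paper simply invokes Proposition~\ref{app:subgradient_convergence} as a black box.
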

\begin{proof}[Proof (Sketch)]
  The proof heavily relies on the constructive derivation we carried out in
  this section. We have proven that the distributed dual decomposition algorithm
  is a subgradient method iteration on the $\GG$-dual~\eqref{dual:ddec_Gdual}.
  Since the primal cost functions $f_i$ are convex and the local sets
  $X_i$ are compact, it is possible to show that the dual function $q$ has bounded 
  subgradients. Thus, by Proposition~\ref{app:subgradient_convergence},
  and since the dual function $q$ is concave, every limit point of $\{\bLambda^t\}_{t\ge 0}$
  is an optimal solution of problem~\eqref{dual:ddec_Gdual}. Therefore, by continuity
  of $q$ and by strong duality, it holds
  \begin{align*}
	  \lim_{t\to\infty} q( \bLambda^t) = f^\star.
	\end{align*}
\end{proof}

Notice that nothing can be said about the convergence of the primal sequence $\{ \bx_i^t \}_{t\ge0}$
generated by Algorithm~\ref{alg:distributed_dual_dec}. 
In fact, due to the lack of strict convexity of the cost functions, there is no guarantee of
feasibilty of the solutions retrieved by the Lagrangian minimization.
This problem has been addressed by introducing averaging mechanisms, i.e., 
let the sequence $\{\widehat{\bx}_i^t\}_{t \ge 0}$ be defined as 
$\widehat{\bx}_i^t = 1/t \sum_{\tau=0}^t \bx_i^\tau$, for all $t$. Then, it holds
\begin{align*}
  \lim_{t\to\infty}
  \smallsum_{i=1}^N f_i (\widehat{\bx}_i^t) 
  & = f^\star,
  \\
  \lim_{t\to\infty}
  \| \widehat{\bx}_i^t - \bx^\star \| & = 0,
  \hspace{1cm} i\in\until{N},
\end{align*}
where $\bx^\star$ and $f^\star$ denote an optimal solution and the optimal 
cost of problem~\eqref{dual:constraint-coupled_problem}, respectively.

\begin{remark}
If each cost function $f_i$ in problem~\eqref{dual:primal_cost_coupled}
is strongly convex then it is possible to improve the result. 
Specifically, under primal strong convexity the dual function $q$ becomes smooth 
(i.e., differentiable with Lipschitz continuous gradient) so that a \emph{gradient} method 
with constant step-size can be applied to solve the dual problem~\eqref{dual:ddec_Gdual}.
Moreover, since strong convexity implies strict convexity,
also primal convergence can be established, i.e., $\lim_{t\to\infty} \| \bx_i^t - \bx^\star\| = 0$ for all $i$
with $\bx^\star$ the optimal solution of~\eqref{dual:primal_cost_coupled}.
This follows since
the Lagrangian minimization admits a unique solution at each iteration $t$.
\oprocend
\end{remark}

As for the rate of convergence of the dual iterates, the algorithm directly inherits the
convergence rate of the standard subgradient method, which is sublinear.
If more regular problems are considered (e.g., strongly convex cost functions),
then the dual function becomes smooth, therefore the linear convergence rate of
gradient method is obtained.

\begin{remark}
  \label{dual:partitioned_remark}
  Distributed dual decomposition can be also applied to partitioned optimization
  problems (cf. Remark~\ref{setups:partitioned_optimization}).  To efficiently
  exploit the partitioned structure of the problem, one can work on copies of
  the relevant portions of the global decision vector. This gives rise to
  tailored distributed dual decomposition algorithms, see, e.g.,
  \cite{carli2013distributed,notarnicola2018partitioned}.  The same procedure
  has been employed for distributed ADMM (cf. the following section) 
  in~\cite{erseghe2012distributed,todescato2015robust,bastianello2018partition}.\oprocend
\end{remark}

In the following section we describe a distributed algorithm that can solve convex optimization
problems and guarantees asymptotic primal feasibility without resorting to averaging mechanisms.

\section{Distributed ADMM for Cost-Coupled Problems}
\label{sec:distributed_ADMM}
In this section we review a distributed algorithm based on the popular Alternating
Direction Method of Multipliers (ADMM, cf. Appendix~\ref{app:ADMM}).
References for the
approach described in this section are, e.g., 
\cite{mateos2010distributed,paul2013network,mota2013dadmm,shi2014linear}

We consider a network of $N$ agents that aim to cooperatively solve a cost-coupled problem
in the form~\eqref{dual:primal_cost_coupled}.
Similarly to distributed dual decomposition, in order to distribute the computation we include sparsity in 
problem~\eqref{dual:primal_cost_coupled} by introducing a set of copies 
of $\bx$ and proper coherence constraints matching the sparsity of the communication graph $\GG$.
That is, problem~\eqref{dual:primal_cost_coupled} can be equivalently stated as
\begin{align} 
\begin{split}
  \min_{\substack{ \bx_1,\ldots, \bx_N \\ \bz_1,\ldots, \bz_N }} \: & \: 
  \smallsum_{i=1}^N f_i(\bx_i)
  \\  
  \subj \: & \: \bx_i \in X_i, \hspace{1cm} i \in\until{N},
  \\
  & \: \bx_i = \bz_j, \hspace{1cm} (i,j) \in \EE,
  \\
  & \: \bx_i = \bz_i, \hspace{1.1cm} i \in \until{N}.
\end{split}
\label{dual:ADMM_N_copies_smart}
\end{align}
This problem reformulation is different from the one used for distributed dual decomposition
and is tailored for the ADMM approach which makes use of the augmented Lagrangian.
Let us introduce $|\EE| + N$ multipliers associated to the coherence constraints.
The augmented Lagrangian is
\begin{align*}
  \LL_\rho(\bX,\bZ,\bLambda)
  = 
  \smallsum_{i=1}^N \bigg ( f_i ( \bx_i ) 
  + 
  \smallsum_{j\in\nbrs_i} \blambda_{ij}^\top (\bx_i - \bz_j )
  &
  +
  \dfrac{\rho}{2} 
  \smallsum_{j\in\nbrs_i} \| \bx_i - \bz_j \|^2
  \\
  +
  \blambda_{ii}^\top (\bx_i - \bz_i )
  & 
  +
  \frac{\rho}{2} \| \bx_i - \bz_i \|^2
  \bigg ),
\end{align*}
where $\bX$, $\bZ$ and $\bLambda$ denote the vectors stacking all the primal variables 
and all the multipliers, respectively.

The ADMM algorithm described in Appendix~\ref{app:ADMM} can applied to 
problem~\eqref{dual:ADMM_N_copies_smart} using the following identifications.
The decision variables $\bx$ and $\bz$ of~\eqref{app:ADMM_problem_Boyd} are
$\bX$ and $\bZ$, respectively.
As for the cost functions, we set
\begin{align*}
  G_1(\bX) = \smallsum_{i=1}^N f_i ( \bx_i ),
  \hspace{0.5cm}
  G_2(\bZ) = 0.
\end{align*}
As for the constraints, $C_1 = X_1\times \cdots \times X_N$ while $C_2 \equiv \real^{N \cdot d}$.
Finally, the linear constraints can be stated as
\begin{align*}
  \underbrace{
	\begin{bmatrix}
	  I_{N \cdot d}
    \\
	  I_{N \cdot d}
  	\end{bmatrix}
  }_{A}
	\begin{bmatrix}
	  \bx_1 \\ \vdots \\ \bx_N
	\end{bmatrix}  
	=
  \underbrace{
	\begin{bmatrix}
	  \text{Adj} \otimes I_d
	  \\
	  I_{N \cdot d}
	\end{bmatrix}  
  }_{B}
	\begin{bmatrix}
	  \bz_1 \\ \vdots \\ \bz_N 
	\end{bmatrix},
\end{align*}
and $c$ equal to zero, where $\text{Adj}$ is the adjacency matrix of $\GG$ (without self-loops)
while $I_{N \cdot d}$ and $I_d$ are $Nd \times Nd$ and $d \times d$ identity matrices, respectively.

\begin{remark}
  An alternative formulation of problem~\eqref{dual:primal_cost_coupled} 
  has been largely used in the literature and it is known as consensus-ADMM 
  formulation (see, e.g.,~\cite{zhu2009distributed}). Formally, %
  the following equivalent formulation of problem~\eqref{dual:primal_cost_coupled}
  is considered,
	\begin{align}
	\begin{split}
	  \min_{\substack{ \bx_1,\ldots, \bx_N \\ \{\bz_{ij} \}_{(i,j)\in\EE} }} \: & \:  
	  \smallsum_{i=1}^N f_i(\bx_i)
	  \\  
	  \subj \: & \: \bx_i \in X_i, \hspace{1cm} i \in\until{N}
	  \\
	  & \: \bx_i = \bz_{ij}, \hspace{1cm} (i,j) \in \EE
	  \\
	  & \: \bx_i = \bz_{ji}, \hspace{1cm} (i,j) \in \EE.
	\end{split}
	\label{dual:ADMM_N_copies_nosmart}
	\end{align}
	The resulting ADMM algorithm is derived by following the same steps
  performed for problem~\eqref{dual:ADMM_N_copies_smart}.  However, notice
  that problem~\eqref{dual:ADMM_N_copies_nosmart} has $|\EE | + N$ variables and $2 \cdot |\EE |$
  coherence constraints, while problem~\eqref{dual:ADMM_N_copies_smart}
  has only $2 \cdot N$ variables and $|\EE | + N$ coherence constraints.\oprocend
\end{remark}

\begin{subequations}
ADMM for problem~\eqref{dual:ADMM_N_copies_smart} turns out to be a fully 
distributed algorithm. Indeed, the primal $\bx$-minimization step reads
\begin{align}
  \bx_i^{t+1} 
  =
  \argmin_{\bx_i\in X_i} \,
  f_i ( \bx_i ) 
  \!+\!
  \Big( \smallsum_{j\in\nbrs_i} \blambda_{ij}^t \!+\! \blambda_{ii}^t \Big)^{\!\top}\! \bx_i
  +\!
  \dfrac{\rho}{2} 
  \smallsum_{j\in\nbrs_i \cup \{ i \} } 
  \!
  \| \bx_i \!-\! \bz_j^t \|^2.
\end{align}
The primal $\bz$-minimization step is
\begin{align}
\label{dual:ADMM_alg_N_agents_z_i}
  \bz_i^{t+1} 
  = 
  \argmin_{\bz_i} \:
  - \Big( \smallsum_{j\in\nbrs_i} \! \blambda_{ji}^t + \blambda_{ii}^t \Big)^{\! \top} \! \bz_i 
  +
  \dfrac{\rho}{2}
  \smallsum_{j\in\nbrs_i \cup \{ i \} } \| \bx_j^{t+1} - \bz_i \|^2.
\end{align}
Finally, the dual ascent step reads
\begin{align}
  \blambda_{ij}^{t+1} 
  & = 
  \blambda_{ij}^t + \rho \, ( \bx_i^{t+1} - \bz_j^{t+1} ),
  \\
  \blambda_{ii}^{t+1} 
  & = 
  \blambda_{ii}^t \, + \rho \, ( \bx_i^{t+1} - \bz_i^{t+1} ),
\end{align}
\end{subequations}
for all $j\in\nbrs_i$ and $i\in\until{N}$.

It is possible to
rephrase the $\bz$-minimization in~\eqref{dual:ADMM_alg_N_agents_z_i}
by noticing that it is an unconstrained quadratic program. The first order necessary condition
of optimality is
\begin{align*}
  - 
  \smallsum_{j\in\nbrs_i} \blambda_{ji}^t - \blambda_{ii}^t 
  - 
  \rho \, 
  \smallsum_{j\in\nbrs_i \cup \{ i \} }\bx_j^{t+1}
  - 
  \rho \, 
  \big( |\nbrs_i| + 1 \big) \bz_i^{t+1}
  = 
  \0.
\end{align*}
Thus, the explicit solution of~\eqref{dual:ADMM_alg_N_agents_z_i} is given by
\begin{align*}
  \bz_i^{t+1} 
  &
  = 
  \frac{ \sum_{j\in\nbrs_i \cup \{ i \} } \bx_j^{t+1} + \bx_i^t  }{|\nbrs_i| + 1}
  +  
  \frac{ \sum_{j\in\nbrs_i} \blambda_{ji}^t + \blambda_{ii}^t  }{\rho \, ( |\nbrs_i| + 1) }.
\end{align*}

Figure~\ref{fig:ADMM_graphical} shows the algorithmic flow of distributed ADMM,
while in Algorithm~\ref{alg:distributed_ADMM_efficient} we summarize the distributed ADMM
algorithm from the perspective of agent $i$.
As for the initialization, each agent $i$ can initialize $\blambda_{ij}^t$ for $j \in \nbrs_i$,
$\blambda_{ii}^t$ and $\bz_i^t$ to arbitrary vectors in $\real^d$.

\begin{figure}[!ht]
  \centering
  \includegraphics[scale=1]{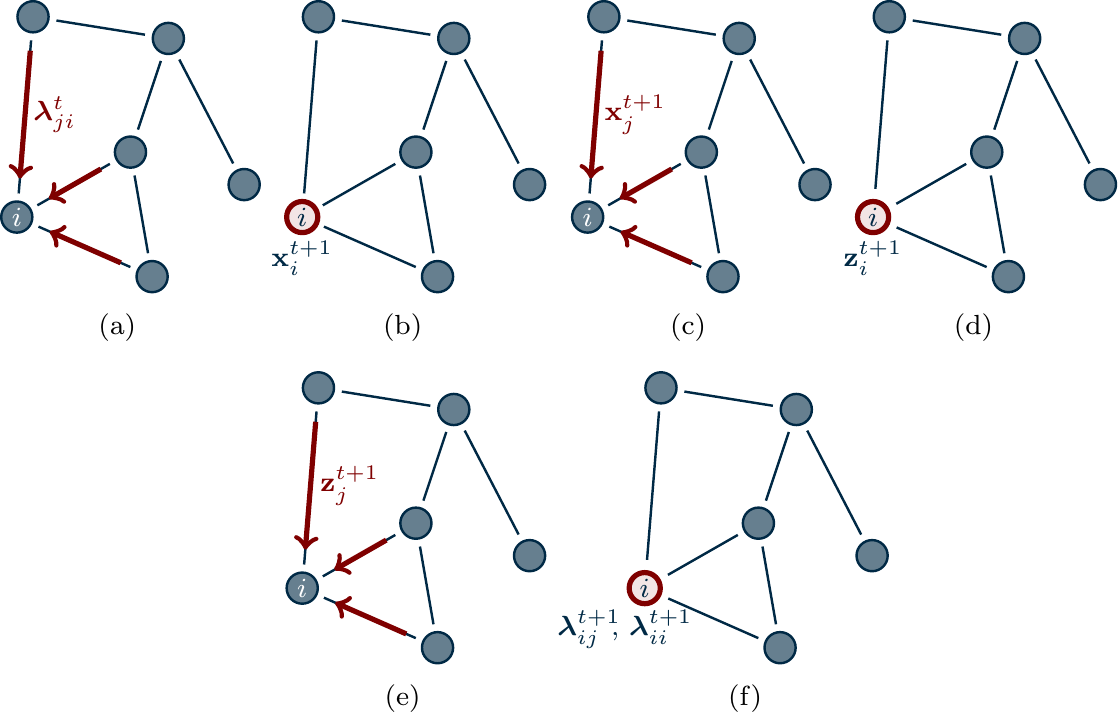}
  \caption{
	  Algorithmic evolution of distributed ADMM: in (a) each node
	  receives the dual variables from its neighbors. In (b),
	  the local primal variable $\bx$ is updated according to~\eqref{dual:dADMM_algorithm_x}.
	  Then, in (c) the variables $\bx$ are broadcast to neighbors
	  to allow for the update of the variables $\bz$, in (d), as in~\eqref{dual:dADMM_algorithm_z}.
	  Finally, in (e) the primal variables $\bz$ are broadcast to neighbors
	  to allow for the dual variables update, in (f), as in~\eqref{dual:dADMM_algorithm_dual}.
  }
  \label{fig:ADMM_graphical}
\end{figure}
\begin{algorithm}[!ht]
  \begin{algorithmic}[0]
    \Statex \textbf{Initialization}: $\blambda_{ij}^0$ for all $j\in \nbrs_i$, $\blambda_{ii}^0$ and $\bz_i^0$
    \smallskip
    
    \Statex \textbf{Evolution}: for $t=0,1,... $
    \smallskip
    
      \StatexIndent[0.5] \textbf{Gather} $\blambda_{ji}^t$ from neighbors $j \in \nbrs_i$
      \StatexIndent[0.5] \textbf{Compute}
      \begin{align}
      \label{dual:dADMM_algorithm_x}
				\begin{split}
				  \bx_i^{t+1} 
				  = 
				  \argmin_{\bx_i\in X_i} 
				  \: 
				  & f_i ( \bx_i ) 
				  +
				  \Big( \smallsum_{j\in\nbrs_i} \blambda_{ij}^t + \blambda_{ii}^t \Big)^{\!\top}\! \bx_i
				  +
				  \dfrac{\rho}{2} 
				  \smallsum_{j\in\nbrs_i \cup \{ i \} } \| \bx_i - \bz_j^t \|^2
				\end{split}
      \end{align}
      
      \StatexIndent[0.5] \textbf{Gather} $\bx_j^{t+1}$ from neighbors $j \in \nbrs_i$
      \StatexIndent[0.5] \textbf{Compute} $\bz_i^{t+1}$ as 
      \begin{align}
      \label{dual:dADMM_algorithm_z}
			  \bz_i^{t+1} 
			  & 
			  = 
			  \frac{ \sum_{j\in\nbrs_i \cup \{ i \} } \bx_j^{t+1} }{|\nbrs_i| + 1}
			  +  
			  \frac{ \sum_{j\in\nbrs_i} \blambda_{ji}^t + \blambda_{ii}^t  }{\rho \, ( |\nbrs_i| + 1) }
      \end{align}
    
      \StatexIndent[0.5] \textbf{Gather} $\bz_j^{t+1}$ from neighbors $j \in \nbrs_i$

      \StatexIndent[0.5] \textbf{Update}
      \begin{align}
      \label{dual:dADMM_algorithm_dual}
      \begin{split}
        \blambda_{ij}^{t+1} 
			  & = \blambda_{ij}^t + \rho \, ( \bx_i^{t+1} - \bz_j^{t+1} ) , \hspace{0.5cm}  j\in \nbrs_i
			  \\
			  \blambda_{ii}^{t+1} 
			  & = \blambda_{ii}^t \, + \rho \, ( \bx_i^{t+1} - \bz_i^{t+1} )
      \end{split}
      \end{align}
      
  \end{algorithmic}
  \caption{Distributed ADMM}
  \label{alg:distributed_ADMM_efficient}
\end{algorithm}

Next, we establish convergence of the distributed ADMM algorithm.
\begin{theorem} 
  Let Assumption~\ref{dual:problem_regularity_assumption} hold
  and let the communication graph be undirected and connected.
  Then, the sequences of local solution estimates $\{ \bx_i^t\}_{t\ge0}$, $i \in\until{N}$,   
  generated by Algorithm~\ref{alg:distributed_ADMM_efficient}
  are asymptotically consensual to an optimal solution $\bx^\star$ of 
  problem~\eqref{dual:primal_cost_coupled}, i.e.,
  \begin{align*}
    \lim_{t\to\infty} \| \bx_i^t - \bx^\star \| = 0.
  \end{align*}
\end{theorem}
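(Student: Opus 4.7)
The plan is to exploit the constructive derivation already carried out in this section: Algorithm~\ref{alg:distributed_ADMM_efficient} is exactly the standard (two-block) ADMM applied to the reformulation~\eqref{dual:ADMM_N_copies_smart}, under the identifications $G_1(\bX) = \sum_{i=1}^N f_i(\bx_i)$, $G_2(\bZ) = 0$, with the block-stacked constraint matrices $A$ and $B$ displayed above. Therefore convergence will follow by invoking the standard ADMM convergence result recalled in Appendix~\ref{app:ADMM} and then translating its conclusion back to the original variables.

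First I would verify that problems~\eqref{dual:primal_cost_coupled} and~\eqref{dual:ADMM_N_copies_smart} are equivalent. Since the graph $\GG$ is undirected and connected, the coherence constraints $\bx_i = \bz_j$ for $(i,j)\in\EE$ together with $\bx_i=\bz_i$ for all $i$ force $\bx_1=\cdots=\bx_N=\bz_1=\cdots=\bz_N$, and the cost and the local sets collapse to those of~\eqref{dual:primal_cost_coupled}. Thus every optimal $(\bX^\star,\bZ^\star)$ of~\eqref{dual:ADMM_N_copies_smart} corresponds to a common $\bx^\star$ that is optimal for~\eqref{dual:primal_cost_coupled}, and conversely.

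Next I would check the hypotheses required by the ADMM theorem. Convexity and closedness of $G_1$ and $G_2$ and of the effective-domain sets $C_1 = X_1\times\cdots\times X_N$ and $C_2 = \real^{Nd}$ follow from Assumption~\ref{dual:problem_regularity_assumption} (each $f_i$ is convex and each $X_i$ is compact and convex). Existence of a primal optimal solution for~\eqref{dual:ADMM_N_copies_smart} follows from Weierstrass' theorem (continuous cost on a nonempty compact feasible set, nonempty because by the Slater-type condition of Assumption~\ref{dual:problem_regularity_assumption} there exists $\bx\in\bigcap_i \relint{X_i}$, which yields a feasible point by setting all copies equal to such $\bx$). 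The Slater condition further yields strong duality and existence of a saddle point of the (unaugmented) Lagrangian, which is precisely what the standard ADMM theorem needs. Invoking that theorem gives

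\begin{align*}
  \lim_{t\to\infty} \| A\bX^t + B\bZ^t \| = 0,
  \qquad
  \lim_{t\to\infty}\bX^t = \bX^\star,
  \qquad
  \lim_{t\to\infty}\bZ^t = \bZ^\star,
\end{align*}
for some primal-dual optimal triple $(\bX^\star,\bZ^\star,\bLambda^\star)$ of~\eqref{dual:ADMM_N_copies_smart}.

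Finally, from the equivalence argument in the first step, $\bX^\star=(\bx^\star,\ldots,\bx^\star)$ for some $\bx^\star$ that solves~\eqref{dual:primal_cost_coupled}. Consequently $\lim_{t\to\infty}\|\bx_i^t-\bx^\star\|=0$ for every $i\in\until{N}$, which is the claim. I expect the main subtle point to be checking the technical hypotheses of the ADMM convergence theorem in the Appendix (in particular, the rank/consistency condition on the pair $(A,B)$, which here is immediate since both blocks contain an identity block); once those are discharged, the rest is bookkeeping on the equivalence between the reformulated and the original problem.
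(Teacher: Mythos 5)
Your proposal follows essentially the same route as the paper's own (sketched) proof: recognize Algorithm~\ref{alg:distributed_ADMM_efficient} as an instance of the two-block ADMM of Appendix~\ref{app:ADMM} applied to the reformulation~\eqref{dual:ADMM_N_copies_smart}, invoke Proposition~\ref{app:prop:ADMM_convergence}, and translate back through the equivalence with~\eqref{dual:primal_cost_coupled}, which holds because the graph is connected; your extra care in checking the hypotheses (Weierstrass, Slater, the structure of $A$ and $B$) is welcome but does not change the argument. The one caveat, shared by the paper's sketch, is that Proposition~\ref{app:prop:ADMM_convergence} as stated only guarantees that every \emph{limit point} of the primal sequence is optimal, so asserting $\lim_{t\to\infty}\bX^t=\bX^\star$ for the whole sequence strictly requires an additional step (e.g., uniqueness of the limit point or a stronger ADMM convergence statement).
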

\begin{proof}[Proof (Sketch)]
  The proof heavily relies on the constructive derivation we carried out in
  this section. 
  We have shown that Algorithm~\ref{alg:distributed_ADMM_efficient} is
  an istance of the ADMM algorithm (cf.~\eqref{app:ADMM_algorithm} in
  Appendix~\ref{app:ADMM}) applied to
  problem~\eqref{dual:ADMM_N_copies_smart}.
  Thus, by Proposition~\ref{app:prop:ADMM_convergence}, it follows that 
  the primal variable sequence $\{ ( \bx_1^t, \ldots,\bx_N^t)\}_{t\ge0}$
  converges to an optimal (hence feasible) solution of
  problem~\eqref{dual:ADMM_N_copies_smart}.
  Recalling that problem~\eqref{dual:ADMM_N_copies_smart} is an equivalent
  formulation of~\eqref{dual:primal_cost_coupled}, the proof follows.
\end{proof}

\section{Distributed Dual Methods for Constraint-Coupled Problems}
\label{sec:dual_constraint_coupled}

In this section, we consider a constraint-coupled optimization problem (cf.
Section~\ref{sec:setups_constraint_coupled}). We describe how duality
can be exploited to develop distributed optimization algorithms for this problem class.
Notice that the methods discussed in Section~\ref{sec:dual_decomposition}
and Section~\ref{sec:distributed_ADMM} are designed for a different problem 
set-up.

\subsection{Connections between Cost-Coupled and Constraint-Coupled Problems via Duality}
\label{sec:symmetry_dual}

In Section~\ref{sec:Fenchel_Graph_duality}, we have shown that the Fenchel-dual 
problem~\eqref{dual:Fenchel_dual} of a cost-coupled problem is a constraint-coupled problem. 
Next, we show that there exists a more general symmetry between these two set-ups.
In the following, we discuss how duality can be employed to express constraint-coupled
problems as cost-coupled ones. Consider a constraint-coupled problem
\begin{align}
\begin{split}
  \min_{\bx_1,\ldots,\bx_N} \: & \: \smallsum_{i=1}^N f_i(\bx_i)
  \\
  \subj \: & \: \bx_i \in X_i, \hspace{1cm}  i \in \until{N}
  \\
  & \: \smallsum_{i=1}^N \bg_i (\bx_i) \le \0,
\end{split}
\label{dual:constraint-coupled_problem}
\end{align}
where all the quantities have been introduced in Section~\ref{sec:setups_constraint_coupled}.

To derive the dual problem of~\eqref{dual:constraint-coupled_problem},
let us introduce a multiplier $\bmu \in \real^S$ associated to the coupling constraint
$\sum_{i=1}^N \bg_i (\bx_i) \leq \0$. 
Thus, the Lagrangian reads
\begin{align*}
  \LL(\bx_1,\ldots,\bx_N,\bmu) 
  = 
  \smallsum_{i=1}^N 
  \Big ( 
  f_i(\bx_i)
  + \bmu^\top \bg_i (\bx_i) 
  \Big ).
\end{align*}
The dual of problem~\eqref{dual:constraint-coupled_problem} is
\begin{align}
\label{dual:constraint-coupled_dual}
  \max_{\bmu \geq \0 } 
  \: & \: q(\bmu) 
  = 
  \max_{\bmu \geq \0 } \: \: 
  \smallsum_{i=1}^N q_i( \bmu),
\end{align}
where the $i$-th term $q_i$ of the dual function $q$ is defined as
\begin{align}
\label{dual:constraint-coupled_q_i}
  q_i(\bmu) 
  = 
  \min_{\bx_i\in X_i}
  f_i(\bx_i)
  + \bmu^\top \bg_i (\bx_i).
\end{align}
It is easy to see that~\eqref{dual:constraint-coupled_dual} is a
cost-coupled problem.

We consider $N$ agents in a network modeled as a connected, fixed and undirected
graph, which aim to cooperatively solve a constraint-coupled problem~\eqref{dual:constraint-coupled_problem}
satisfying the following assumption.
\begin{assumption}
\label{dual:constraint_coupled_regularity}
  For all $i\in\until{N}$: each function $f_i$ is convex, each constraint $X_i$
  is a non-empty, compact and convex set; each function $\bg_i$ is a component-wise 
  convex function.
  Moreover, there exist 
  $\bar{\bx}_1 \in X_1, \ldots, \bar{\bx}_N \in X_N$ such that
  $\sum_{i=1}^N \bg_i (\bar{\bx}_i ) < \0$.
  \oprocend
\end{assumption}
The latter part of Assumption~\ref{dual:constraint_coupled_regularity} is Slater's
constraint qualification and ensures that strong duality holds.

We recall that each agent $i$ aims to compute only its portion $\bx_i^\star$ of the entire optimal 
solution $(\bx_1^\star,\ldots,\bx_N^\star)$ (cf. Section~\ref{setups:constraint-coupled_problem}).
In the following, we introduce two distributed algorithms that solve
problem~\eqref{dual:constraint-coupled_problem} by means of
problem~\eqref{dual:constraint-coupled_dual}.

\subsection{Distributed Dual Subgradient Algorithm}
\label{sec:dual_distributed_subgradient}

A (centralized) subgradient method (cf. Appendix~\ref{app:subgradient_method}) 
applied to the maximization of the concave 
problem~\eqref{dual:constraint-coupled_dual} reads
\begin{align}
\label{dual:centralized_dual_subg}
\begin{split}
  \bmu^{t+1} 
  & 
  = 
  \PP_{ \bmu \geq \0 }
  \Big (
  \bmu^t
  +
  \gamma^t 
  \subgrad
  q( \bmu^t )
  \Big )
  \\
  & 
  = 
  \PP_{ \bmu \geq \0 }
  \Big (
  \bmu^t
  +
  \gamma^t 
  \smallsum_{i=1}^N
  \subgrad
  q_i (\bmu^t) 
  \Big ).
\end{split}
\end{align}

Notice that, as discussed in Appendix~\ref{sec:appendix_duality},
a subgradient of $q_i$ at $\bmu^t$ can be computed by evaluating
the dualized constraints $\bg_i$ at the minimizer of the Lagrangian, i.e.,
\begin{align*}
  \bx_i^{t+1}
  =
  \argmin_{ \bx_i \in X_i } \: f_i (\bx_i ) + 
  \big( \bmu^t \big)^\top \bg_i (\bx_i ),
\end{align*}
so that $\subgrad q_i (\bmu^t) = \bg_i (\bx_i^{t+1} )$.
The method described by~\eqref{dual:centralized_dual_subg} suggests that the distributed 
subgradient algorithm (cf. Section~\ref{sec:distributed_gradient}) can be applied 
to solve problem~\eqref{dual:constraint-coupled_dual}.

In the following, we describe the distributed dual subgradient algorithm.
Each node $i$ maintains a local dual variable estimate $\bmu_i^t$ that is
iteratively updated according to a distributed subgradient iteration
described by~\eqref{alg:dual_subgr_mu}, and a local primal variable
$\bx_i^t$, computed by minimizing the $i$-th term of the
Lagrangian as in~\eqref{alg:dual_subgr_x}.
Nodes initialize their local dual variables $\bmu_i^t$ to any vector in the positive orthant.
Algorithm~\ref{alg:distributed_dual_subgradient} formally summarizes the
distributed dual subgradient algorithm for a constraint-coupled
optimization problem (from the perspective of agent $i$).
\begin{algorithm}

  \begin{algorithmic}[0]
    \Statex \textbf{Initialization}: $\bmu_i^0 \geq \0$
    \smallskip
    
    \Statex \textbf{Evolution}: for $t=0,1,... $
    \smallskip
    
      \StatexIndent[0.5] \textbf{Gather} $\bmu_j^t$ from neighbors $j\in \nbrs_i$
      \StatexIndent[0.5] \textbf{Compute} %
      \begin{align}
			\begin{split}
			  \bv_i^{t+1} 
			  &
			  =
			  \smallsum_{j \in \nbrs_i }
			  a_{ij} \, \bmu_j^t
			  \\[-0.7ex]
			  \bx_i^{t+1}
			  & 
			  \in 
			  \argmin_{ \bx_i \in X_i } \: f_i (\bx_i ) + 
        \big(\bv_i^{t+1}\big)^\top
			  \bg_i (\bx_i )
      \end{split}
      \label{alg:dual_subgr_x}
      \end{align}

      \StatexIndent[0.5] \textbf{Update} %
			\begin{align}
			\begin{split}
			  \bmu_i^{t+1} 
			  & 
			  = 
			  \PP_{ \bmu \geq \0 }
			  \Big (
          \bv_i^{t+1}
			  +
			  \gamma^t \,
			  \bg_i (\bx_i^{t+1} )
			  \Big )
			\end{split}
			\label{alg:dual_subgr_mu}
			\end{align}
      
  \end{algorithmic}
  \caption{Distributed Dual Subgradient}
  \label{alg:distributed_dual_subgradient}
\end{algorithm}

Being Algorithm~\ref{alg:distributed_dual_subgradient} a distributed
subgradient method (cf. Algorithm~\ref{alg:distributed_subgradient}),
the usual convergence properties (discussed in Chapter~\ref{chap:primal})
apply\footnote{We give the analysis for unconstrained problems,
however the algorithm can be extended to a constrained set-up,
see, e.g.,~\cite{nedic2010constrained}}.
Consider the same network framework as in Section~\ref{sec:distributed_gradient}
and let Assumption~\ref{dual:constraint_coupled_regularity} hold.
We now state the convergence result of the distributed dual subgradient algorithm.
\begin{theorem}
\label{thm:dual_subgradient}
  Let Assumption~\ref{dual:constraint_coupled_regularity}
  hold. Let the communication graph be undirected and connected with weights $a_{ij}$
  satisfying Assumption~\ref{primal:assumption_network_subg} and let
  the step-size $\gamma^t$ satisfy Assumption~\ref{primal:stepsize_diminishing}.
  Then, the sequence of dual variables $\{ \bmu_1^t, \ldots, \bmu_N^t\}_{t\ge0}$ 
  generated by Algorithm~\ref{alg:distributed_dual_subgradient} satisfies
  \begin{align*}
    \lim_{t\to\infty}
    \| \bmu_i^t - \bmu^\star \| = 0, 
    \hspace{1cm} i\in\until{N},
  \end{align*}
  where $\bmu^\star$ is an optimal solution of 
  problem~\eqref{dual:constraint-coupled_dual}, the dual of
  problem~\eqref{dual:constraint-coupled_problem}.
  Moreover, let the sequence $\{\widehat{\bx}_i^t\}_{t \ge 0}$ be defined
  as $\widehat{\bx}_i^t = 1/t \sum_{\tau=0}^t \bx_i^\tau$, for all $t$. Then, it holds
  \begin{align*}
    \lim_{t\to\infty}
    \smallsum_{i=1}^N f_i (\widehat{\bx}_i^t) 
    & = f^\star,
    \\
    \lim_{t\to\infty}
    \| \widehat{\bx}_i^t - \bx^\star \| & = 0,
    \hspace{1cm} i\in\until{N},
  \end{align*}
  where $\bx^\star$ and $f^\star$ denote an optimal solution and the optimal 
  cost of problem~\eqref{dual:constraint-coupled_problem}, respectively.
\oprocend
\end{theorem}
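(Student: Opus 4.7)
The plan is to recognize that Algorithm~\ref{alg:distributed_dual_subgradient} is essentially a distributed subgradient method (a projected variant of Algorithm~\ref{alg:distributed_subgradient}) applied to the \emph{cost-coupled} reformulation~\eqref{dual:constraint-coupled_dual} of the dual problem, restricted to the common constraint set $X=\{\bmu \in \real^S \mid \bmu \geq \0\}$. Indeed, the iteration~\eqref{alg:dual_subgr_mu} first performs the consensus combination $\bv_i^{t+1}=\sum_{j\in\nbrs_i} a_{ij}\bmu_j^t$, then takes a step in the direction $\bg_i(\bx_i^{t+1})$, which by standard Lagrangian duality (cf. Appendix~\ref{sec:appendix_duality}) is precisely a subgradient of the concave local dual function $q_i$ at $\bv_i^{t+1}$, and finally projects onto the nonnegative orthant.

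With this identification in hand, I would verify the hypotheses required to invoke the convergence analysis of Section~\ref{sec:distributed_gradient}. Convexity of $-q_i$ follows because $q_i$, being a pointwise infimum of linear functions of $\bmu$, is concave. Boundedness of subgradients (Assumption~\ref{primal:regularity_assumption}(i) applied to $-q_i$) follows from the compactness of $X_i$ and the continuity of $\bg_i$ (Assumption~\ref{dual:constraint_coupled_regularity}), since any subgradient has the form $\bg_i(\bx_i)$ with $\bx_i\in X_i$. Existence of a dual optimal solution $\bmu^\star$ and strong duality are guaranteed by Slater's constraint qualification, again part of Assumption~\ref{dual:constraint_coupled_regularity}. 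The weight matrix $A$ and the step-size satisfy Assumptions~\ref{primal:assumption_network_subg} and~\ref{primal:stepsize_diminishing} by hypothesis. Applying (the projected variant of) the convergence theorem of Algorithm~\ref{alg:distributed_subgradient} yields $\lim_{t\to\infty}\|\bmu_i^t-\bmu^\star\|=0$ for all $i\in\until{N}$, for some common $\bmu^\star$ in the dual optimal set.

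For the primal recovery statement, I would employ the classical running-average argument for dual subgradient schemes. Since $\bx_i^{t+1}$ minimizes the $i$-th Lagrangian term at $\bv_i^{t+1}$, we have
\begin{align*}
  f_i(\bx_i^{t+1})+\big(\bv_i^{t+1}\big)^\top \bg_i(\bx_i^{t+1})
  \le f_i(\bx_i^\star)+\big(\bv_i^{t+1}\big)^\top \bg_i(\bx_i^\star).
\end{align*}
Summing over $i$, using $\sum_{i=1}^N\bg_i(\bx_i^\star)\le \0$, nonnegativity of $\bv_i^{t+1}$, and the consensus/convergence $\bv_i^{t+1}\to\bmu^\star$, and then Cesàro-averaging in $t$, leads (via Jensen's inequality applied to the convex $f_i$ and the componentwise convex $\bg_i$) to $\limsup_t \sum_{i=1}^N f_i(\widehat{\bx}_i^t)\le f^\star$ and $\limsup_t \sum_{i=1}^N \bg_i(\widehat{\bx}_i^t)\le \0$. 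A matching lower bound on the primal cost follows from weak duality applied to any asymptotic feasible cluster point of $\{\widehat{\bx}_i^t\}$, which exists by compactness of each $X_i$. Combining both bounds gives convergence of the averaged primal cost to $f^\star$ and convergence of the averaged iterates to an optimal solution $\bx^\star$ of~\eqref{dual:constraint-coupled_problem}.

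The main obstacle I anticipate is the primal recovery step, and specifically the control of the coupling term $\sum_{i=1}^N (\bv_i^{t+1})^\top \bg_i(\bx_i^{t+1})$: unlike the centralized case, the multipliers seen by the different agents are \emph{distinct} (the consensus error $\|\bv_i^{t+1}-\bar{\bmu}^{t+1}\|$ does not vanish instantaneously), so one must carefully bound this disagreement by exploiting the square-summability of $\gamma^t$ together with boundedness of $\bg_i(X_i)$, analogously to the argument leading to~\eqref{primal:distr_subg_consensus_summability} in the proof of the distributed subgradient method. Handling this consensus-induced perturbation simultaneously with the standard Cesàro averaging of dual subgradient schemes is the delicate part of the analysis.
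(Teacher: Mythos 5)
Your proposal follows essentially the same route as the paper, which itself only observes that Algorithm~\ref{alg:distributed_dual_subgradient} is a distributed (projected) subgradient method applied to the cost-coupled dual~\eqref{dual:constraint-coupled_dual} and defers all details --- including the running-average primal recovery --- to the cited reference~\cite{falsone2017dual}. Your identification of the consensus-induced multiplier disagreement as the delicate point, and its control via square-summability of $\gamma^t$ together with boundedness of $\bg_i$ on the compact sets $X_i$, matches the intended argument; the only place you are slightly optimistic is in concluding convergence of the \emph{whole} averaged sequence to a single optimizer $\bx^\star$ rather than of its cluster points (without strict convexity this needs an extra argument or should be read as convergence to the optimal set), but the theorem statement itself shares that imprecision.
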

A proof of the statement is provided in~\cite{falsone2017dual} for time-varying %
networks using a proximal minimization perspective.
Notice that Theorem~\ref{thm:dual_subgradient} does not state any convergence
property for the primal variables $\bx_i^t$. To this end, as done in
Section~\ref{sec:dual_decomposition}, it is useful to employ a local running
average (i.e., $\widehat{\bx}_i^t$).
  When the cost function of problem~\eqref{dual:constraint-coupled_problem}
  is strictly convex, problem~\eqref{dual:constraint-coupled_problem} 
  has a unique optimal solution.
  In this scenario, 	convergence of $\bx_i^t$ is guaranteed in any case, 
  so that no primal recovery issues arise and no local running average
  is necessary.

The distributed dual subgradient algorithm enjoys appealing features: \emph{(i)}
local computations at each node involve only the local decision variable and,
thus, scale nicely with respect to the dimension of the decision vector, \emph{(ii)}
privacy is preserved since agents do not communicate, and thus disclose, their estimates 
of the local decision variable, cost or constraints.

\subsection{Relaxation and Successive Distributed Decomposition}
\label{sec:RSDD}

Next, we present a distributed algorithm, named Relaxation and Successive
Distributed Decomposition (RSDD), to solve constraint-coupled problems of the
form~\eqref{dual:constraint-coupled_problem} that has been proposed and analyzed
in~\cite{notarnicola2017ifac, notarnicola2017constraint}.
The main leading ideas of the algorithmic development are: (i) to solve the
(cost-coupled) dual problem~\eqref{dual:constraint-coupled_dual} by means of
distributed dual decomposition, and (ii) to handle infeasibility of local
  problems, occurring during the algorithmic evolution, via a suitable
  relaxation.
The combination of relaxation and duality steps give rise to a simple 
and efficient distributed algorithm that overcomes some limitations of the 
dual distributed subgradient (cf. Section~\ref{sec:dual_distributed_subgradient})
related to primal recovery.

Algorithm~\ref{alg:RSDD} formally states the RSDD distributed algorithm from the 
perspective of node $i$.
\begin{algorithm}

  \begin{algorithmic}[0]
    \Statex \textbf{Initialization}: $\blambda_{ij}^0$ for all $j\in\nbrs_i$

    \Statex \textbf{Evolution}:

      \StatexIndent[0.5] \textbf{Gather} $ \blambda_{ji}^t$ from neighbors $j\in\nbrs_i$%

      \StatexIndent[0.5] \textbf{Compute} $\big( (\bx_i^{t+1}, \rho_i^{t+1} ),\bmu_i^{t+1} \big)$ as a primal-dual 
      optimal solution pair of
      \begin{align}
      \label{dual:RSDD_minimization}
      \begin{split}
        \min_{ \bx_i, \rho_i } \: & \: f_i (\bx_i) + M \rho_i
        \\
        \subj \: & \: \bx_i \in X_i, \: \rho_i \ge 0
        \\[0.5ex]
        & \: \bg_i ( \bx_i ) +
        \smallsum_{ j\in\nbrs_i }  \big( \blambda_{ij}^t - \blambda_{ji}^t \big) \leq \rho_i \1 \\[-1.5ex]
      \end{split}
      \end{align}

      \StatexIndent[0.5] \textbf{Gather} $\bmu_j^{t + 1}$ from neighbors $j\in\nbrs_i$%

      \StatexIndent[0.5] \textbf{Update} for all $j\in\nbrs_i$
      \begin{align}
        \blambda_{ij}^{t+1} & = \blambda_{ij}^t - \gamma^t \big( \bmu_{i}^{t+1} - \bmu_{j}^{t+1} \big)
      \label{dual:RSDD_auxiliary_update}
      \end{align}
  \end{algorithmic}
  \caption{RSDD}
  \label{alg:RSDD}
\end{algorithm}

Informally, the RSDD algorithm 
consists of an iterative two-step procedure.
Each node $i$ stores a set of variables $((\bx_i$, $\rho_i), 
\bmu_i)$, obtained as a primal-dual optimal solution pair of problem
\eqref{dual:RSDD_minimization}. The vector $\bmu_i$ is the multiplier associated to 
the local inequality constraint $\bg_i ( \bx_i ) + \sum_{ j\in\nbrs_i }  ( \blambda_{ij}^t - \blambda_{ji}^t 
) \leq \rho_i \1$.
Notice that problem~\eqref{dual:RSDD_minimization} mimics a local version of the original
problem~\eqref{dual:constraint-coupled_problem}, where the coupling with the other nodes 
is replaced by a local term depending only on
neighboring variables $\blambda_{ij}$ and $\blambda_{ji}$,
$j\in\nbrs_i$. Moreover, this local version of the coupling constraint is also
relaxed, i.e., a positive violation $\rho_{i}\1$ is allowed. %
Finally, instead of minimizing only the local function $f_i$, the (scaled) violation 
$M\rho_i$, $M>0$, enters the cost function as well.
The auxiliary variables $\blambda_{ij}$, $j\in\nbrs_i$, are updated in a second step according 
to a linear law which combines neighboring $\bmu_i$ as shown in~\eqref{dual:RSDD_auxiliary_update}.
Nodes initialize their
variables $\blambda_{ij}^t$, $j\in\nbrs_i$ to arbitrary values.

Similarly to the distributed dual subgradient algorithm,
the RSDD algorithm also enjoys the same appealing features mentioned in
Section~\ref{sec:dual_distributed_subgradient}, i.e., nicely scaling
local computation and information privacy preserving.
Moreover,
a peculiarity of RSDD is that an estimate of a primal optimal solution component is directly computed by 
each agent without any averaging mechanism, which results in a faster algorithm.

Consider the same network framework as in Section~\ref{sec:distributed_gradient}
and let Assumption~\ref{dual:constraint_coupled_regularity} hold.
We now present the convergence
result of RSDD.
\begin{theorem}
  Let Assumption~\ref{dual:constraint_coupled_regularity} hold.
  Let the communication graph be undirected and connected and let
  the step-size $\gamma^t$ satisfy Assumption~\ref{primal:stepsize_diminishing}.
  Moreover, letting $\bmu{}^\star$ be an optimal solution of the dual
  of problem~\eqref{dual:constraint-coupled_problem}, assume
  $M$ be sufficiently large such that $M > \| \bmu{}^\star\|_1$.
  Consider a sequence $\big\{ \bx_i^t, \rho_i^t \big\}_{t\ge 0}$, $i\in \until{N}$, 
  generated by Algorithm~\ref{alg:RSDD}.
  Then, the following holds:
  \begin{enumerate}
    \item[(i)] the sequence $\big \{ \sum_{i=1}^N \big( f_i ( \bx_i^t ) + M \rho_i^t \big) \big\}_{t\ge 0}$ 
    converges to the optimal cost $f^\star$ of~\eqref{dual:constraint-coupled_problem};

    \item[(ii)] every limit point of $\big \{ \bx_i^t \big\}_{t\ge 0}$,
    $ i \in \until{N}$, 
    is a primal optimal (feasible) solution of~\eqref{dual:constraint-coupled_problem}.~\oprocend
  \end{enumerate}
\label{dual:RSDD_convergence}  
\end{theorem}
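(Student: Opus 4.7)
The plan is to interpret Algorithm~\ref{alg:RSDD} as an instance of the distributed dual decomposition scheme of Section~\ref{sec:dual_decomposition} applied to a suitable reformulation of problem~\eqref{dual:constraint-coupled_problem}. The starting point is the observation (cf. Section~\ref{sec:symmetry_dual}) that the dual of~\eqref{dual:constraint-coupled_problem} is the cost-coupled problem $\max_{\bmu \ge \0} \smallsum_{i=1}^N q_i(\bmu)$, to which the $\GG$-dual construction of Section~\ref{sec:graph_duality} can be applied directly. A technical complication is that the local $q_i$ may be unbounded above, so I would first work with the exact-penalty reformulation in which the coupling constraint $\smallsum_{i=1}^N \bg_i(\bx_i) \le \0$ is replaced by $\smallsum_{i=1}^N \bg_i(\bx_i) \le \rho\1$ and a term $M \rho$ with $\rho \ge 0$ is added to the cost. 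A standard exact-penalty argument based on Lagrangian duality shows that whenever $M > \|\bmu^\star\|_1$, this reformulation is equivalent to~\eqref{dual:constraint-coupled_problem}, has optimal $\rho^\star = 0$, and its dual coincides with the problem $\max \smallsum_i q_i(\bmu)$ subject to $\bmu \ge \0$ and $\|\bmu\|_1 \le M$ (the bound $\|\bmu\|_1 \le M$ encoding the compactness needed to make the dual well-posed).

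Next, I would apply the $\GG$-dual construction to this cost-coupled dual problem: introduce local copies $\bmu_i$ of $\bmu$ with graph-induced consensus constraints $\bmu_i = \bmu_j$ for $(i,j) \in \EE$, and associate multipliers $\blambda_{ij}$ to these constraints as in Section~\ref{sec:graph_duality}. Writing the corresponding Lagrangian and isolating the $i$-th block, the inner minimization becomes exactly~\eqref{dual:RSDD_minimization}: the constraint $\bg_i(\bx_i) + \smallsum_{j\in\nbrs_i}(\blambda_{ij} - \blambda_{ji}) \le \rho_i \1$ is the local piece of the dualized inequality $\bg_i(\bx_i) \le \0$ plus the graph coupling terms, while $\rho_i$ with cost $M\rho_i$ plays the role of the slack enforcing the $\|\bmu\|_1 \le M$ constraint in a distributed way. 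The multiplier $\bmu_i^{t+1}$ in~\eqref{dual:RSDD_minimization} is then precisely the component of the primal (in the $\GG$-dual) solution, and update~\eqref{dual:RSDD_auxiliary_update} is the standard subgradient step on the $\blambda_{ij}$ with subgradient $\bmu_j^{t+1} - \bmu_i^{t+1}$, matching eq.~\eqref{dual:ddec_algorithm_lambda}.

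With this identification in hand, statement (i) follows by invoking the convergence result of distributed dual decomposition (Theorem for Algorithm~\ref{alg:distributed_dual_dec}): under Assumption~\ref{dual:constraint_coupled_regularity} (which implies Slater and hence strong duality) and the diminishing step-size of Assumption~\ref{primal:stepsize_diminishing}, the dual cost along the iterates converges to the optimal value of the (relaxed) dual, which by strong duality coincides with the optimal cost $f^\star$ of the penalized primal, and hence with $f^\star$ of~\eqref{dual:constraint-coupled_problem} by the exact-penalty property.

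The main obstacle is statement (ii), namely direct primal convergence of $\{\bx_i^t\}$ to an optimal solution of~\eqref{dual:constraint-coupled_problem} \emph{without} the running-average trick used in Theorem~\ref{thm:dual_subgradient}. I would argue as follows: compactness of the $X_i$ (Assumption~\ref{dual:constraint_coupled_regularity}) ensures boundedness of $\{\bx_i^t\}$, so that limit points exist; using the asymptotic consensus of the $\bmu_i^t$ toward an optimal $\bmu^\star$ (which has $\|\bmu^\star\|_1 < M$ by the strict inequality $M > \|\bmu^\star\|_1$), I would show that the slacks satisfy $\rho_i^t \to 0$, since otherwise the term $M\rho_i$ would prevent optimality of the local minimization. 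Taking a limit in~\eqref{dual:RSDD_minimization} along a convergent subsequence, the consensus of the $\blambda$-differences and $\rho_i^t \to 0$ give that any limit point $(\bx_1^\infty, \ldots, \bx_N^\infty)$ is primal feasible for~\eqref{dual:constraint-coupled_problem}. Combined with the cost convergence from (i) and lower semicontinuity of $\sum_i f_i$, this yields primal optimality. The delicate step here is the quantitative link between the slackness $M - \|\bmu^\star\|_1 > 0$ and the vanishing of $\rho_i^t$, which is exactly where the strict-inequality assumption on $M$ enters; the detailed argument is carried out in~\cite{notarnicola2017constraint}.
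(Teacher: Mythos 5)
The paper does not actually carry out a proof of Theorem~\ref{dual:RSDD_convergence}: it states only the two ``leading ideas'' (distributed dual decomposition applied to the dual problem~\eqref{dual:constraint-coupled_dual}, plus a relaxation to handle local infeasibility) and defers the argument entirely to~\cite{notarnicola2017constraint}. Your proposal correctly reconstructs exactly that route: the identification of~\eqref{dual:RSDD_minimization} as the local block of the $\GG$-dual of the compactified dual $\max\{\sum_i q_i(\bmu) : \bmu \ge \0,\ \1^\top\bmu \le M\}$, the correspondence between the penalty term $M\rho_i$ and the bound $\1^\top\bmu_i \le M$, and the role of $M > \|\bmu^\star\|_1$ in making the compactification lossless are all the right ingredients, and the update~\eqref{dual:RSDD_auxiliary_update} is indeed the subgradient step on the consensus multipliers (with the sign flipped because the outer problem is a maximization). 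Two small points of precision: each $q_i$ is in fact finite everywhere (by compactness of $X_i$); what can be unbounded is the local inner maximization $\max_{\bmu_i \ge \0}\,[\,q_i(\bmu_i) + \bmu_i^\top\sum_{j}(\blambda_{ij}-\blambda_{ji})\,]$, equivalently the unrelaxed local primal problem can be infeasible --- this is precisely what $\rho_i$ repairs. And for part (i) you should state explicitly that $\sum_i\big(f_i(\bx_i^{t+1}) + M\rho_i^{t+1}\big)$ \emph{equals} the $\GG$-dual function evaluated at $\bLambda^t$, by strong duality of each local problem~\eqref{dual:RSDD_minimization}; this is the link that lets the dual-decomposition convergence result be read off as convergence of the penalized primal cost. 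Like the paper, you leave the genuinely hard step --- the quantitative argument in (ii) that $\rho_i^t \to 0$ and that limit points of $\{\bx_i^t\}$ are feasible and optimal without any running average --- to~\cite{notarnicola2017constraint}, so the proposal is a faithful sketch of the paper's intended argument rather than a self-contained proof.
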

The proof of Theorem~\ref{dual:RSDD_convergence} can be found 
in~\cite{notarnicola2017constraint}.

In~\cite{camisa2018primal}, Algorithm~\ref{alg:RSDD} has been interpreted as a distributed primal
decomposition method and has been used to solve mixed-integer linear programs
by means a suitable coupling constraint restriction.
The main challenge is due to the presence of local
constraint sets $X_i$ that are mixed-integer polyhedra (i.e., with some
of the components constrained to be integer, see also Section~\ref{eq:ce:sec:milp}).

\begin{remark}
  Another important optimization set-up in smart grid applications arises in 
  so-called Demand Side Management (DSM) programs~\cite{alizadeh2012demand}.
  As an example, a cooperative DSM task has the goal of reducing the hourly and 
  daily variations and peaks of electric demand by optimizing generation, storage 
  and consumption.
  A widely adopted objective in DSM programs is Peak-to-Average Ratio (PAR),
  which gives rise to the following min-max optimization problem
	\begin{align}
	\label{dual:peak_minimization_problem}
	\begin{split}
	  \min_{\bx_{1}, \ldots, \bx_{N}, p} \: & \: p
	  \\
	  \subj \: & \: \bx_{i} \in X_i , \hspace{2.17cm} i\in\until{N},
	  \\
	               & \: \smallsum_{i=1}^N g_{i,s} ( x_{i,s} ) \le p, \hspace{.53cm}  s \in \until{S},
	\end{split}
	\end{align}
	where $p \in \real$ represents the peak value that agents want to shave.
  A duality-based approach similar to the one leading to the RSDD
  distributed algorithm has been proposed and analyzed
  in~\cite{notarnicola2016dsm,notarnicola2016minmax} for solving
  problem~\eqref{dual:peak_minimization_problem}.\oprocend
\end{remark}

\section{Discussion and References}
\label{sec:dual_discussion_extension}

Early popular tutorials on parallel and distributed optimization based on duality are
\cite{palomar2006tutorial,yang2010distributed,boyd2011distributed}.
Distributed algorithms based on the Alternating Direction Method of Multipliers
are proposed in~\cite{mateos2010distributed,paul2013network,mota2013dadmm,chang2015multi,chang2016proximal,teixeira2016admm}.
Convergence rates for ADMM-based algorithms are provided
in~\cite{wei2013on1overk,shi2014linear,jakovetic2015linear,makhdoumi2017convergence}.
A distributed algorithm combining a linearization approach with ADMM
has been proposed in~\cite{ling2015dlm}, while quadratic
approximations have been explored in \cite{mokhtari2016dqm}.
A fast distributed ADMM algorithm for quadratic problems is devised in~\cite{ling2016weighted}.
A more general ADMM framework is considered in \cite{iutzeler2016explicit}, where an explicit
converge rate has been provided.
An application of the distributed ADMM algorithm to an online optimization scenario (i.e., with time-varying cost function)
is analyzed in \cite{ling2014decentralized}.
An asynchronous version of the distributed ADMM algorithm is proposed in \cite{kumar2017asynchronous}.

Primal-dual algorithms for constrained optimization over networks are
given in \cite{zhu2012distributed,latafat2016new}. A primal-dual perturbation approach is explored
in the paper \cite{chang2014distributed}.
An asynchronous version of such algorithm class is provided in \cite{bianchi2016coordinate}.
Augmented Lagrangian algorithms for directed gossip networks are analyzed in~\cite{jakovetic2011cooperative}.
Continuous-time, Lagrangian-based, distributed algorithms are investigated in
\cite{gharesifard2014distributed,kia2015distributed,cherukuri2015distributed,cherukuri2016initialization,
mateos2017distributed,yang2017multi}. A distributed saddle-point algorithm for robust 
linear programs is proposed in~\cite{richert2015robust}.
A saddle-point method for distributed, continuous-time, online optimization 
is proposed in~\cite{lee2016distributed}.
An asynchronous, primal-dual, cloud-based algorithm for distributed convex optimization 
is provided in~\cite{hale2017asynchronous}.
An asynchronous algorithm which allows the presence of local nonconvex
constraints is presented in~\cite{farina2019asymm}.

A dual averaging approach for distributed optimization is proposed in
\cite{duchi2012dual}. A push-sum version for directed networks is analyzed in \cite{tsianos2012push},
while an extension for online optimization is given in \cite{lee2017stochastic}.
A fully distributed dual gradient algorithm to minimize linearly constrained separable convex 
problems, with linear convergence rate, is given in~\cite{necoara2015linear}.
A distributed dual fast gradient algorithm, with sublinear rate, is proposed~\cite{necoara2017fully} for 
linearly constrained separable convex optimization problems.
An asynchronous version of the distributed dual decomposition with composite costs is 
proposed in \cite{notarnicola2017asynchronous}. An extension to a partitioned set-up is 
provided in~\cite{notarnicola2018partitioned}.
A time-varying distributed algorithm based on Fenchel duality is provided in \cite{wu2017fenchel}.
Papers \cite{simonetto2016primal,falsone2017dual} investigate distributed dual subgradient
methods for constraint-coupled optimization. In~\cite{zhang2018consensus} an ADMM approach for
the same set-up is proposed in which multiple consensus steps are needed.
Dual decomposition techniques applied to control problems are proposed 
in~\cite{dinh2013dual,giselsson2013accelerated}. In~\cite{doan2017jacobi} a distributed Jacobi algorithm
for convex optimization problems, arising in distributed model predictive control, is presented.
A fast dual gradient algorithm for network utility maximization is proposed in~\cite{beck2014gradient}.

\section{Numerical Example}
\label{sec:dual_simulations}
In this section, we provide numerical examples of the algorithms presented in this chapter.
Since we considered algorithms for both the cost-coupled set-up and the
constraint-coupled set-up, we analyze the examples in two separate subsections.

As done in Chapter~\ref{chap:primal}, we consider a network of $N = 10$ agents
communicating over a fixed, undirected,
connected graph generated according to an Erd\H{o}s-R\'enyi random
model with parameter $p = 0.2$. 
For the algorithms embedded with consensus iterations, we assume
agents are equipped with a doubly stochastic matrix built according to the 
Metropolis-Hastings rule \cite{xiao2004fast}, i.e.,
\begin{align*}
	a_{ij} = 
	\begin{cases}
	\frac{1}{\max\{ d_i, d_j \} + 1}, & \text{ if } j \neq i \text{ and } (i,j) \in \EE,
	\\
	1-\sum_{j \in \nbrs_i} \frac{1}{\max\{ d_i, d_j \} + 1},  &  \text{ if } j = i,
	\\
	0, & \text{ otherwise. }
	\end{cases}
\end{align*}

\subsection{Cost-coupled Example}
In this subsection, we assume that $N$ agents aim to
cooperatively solve the cost-coupled quadratic program
\begin{align}
\label{dual:simulation_QP}
  \min_{\bx \in \real^5} \: & \:
  \smallsum_{i=1}^N \left ( \bx^\top Q_i \bx + r_i^\top \bx \right ),
\end{align}
where each $Q_i \in \real^{5 \times 5}$ is randomly generated such that
its eigenvalues are drawn uniformly from $[1,10]$.
We compare distributed ADMM (cf. Algorithm~\ref{alg:distributed_ADMM_efficient}),
with $\rho = 0.1$ and distributed dual decomposition (cf. Algorithm~\ref{alg:distributed_dual_dec}),
with diminishing step-size $\gamma^t = (1/t)^{0.7}$.

As for distributed ADMM, in Figure~\ref{dual:fig_ADMM_cost} we show
cost convergence rate, i.e., the evolution of $|\sum_{i=1}^N f_i (\bx_i^t) - f^\star|/|f^\star|$.
\begin{figure}[!htpb]
\centering
  \includegraphics[scale=1]{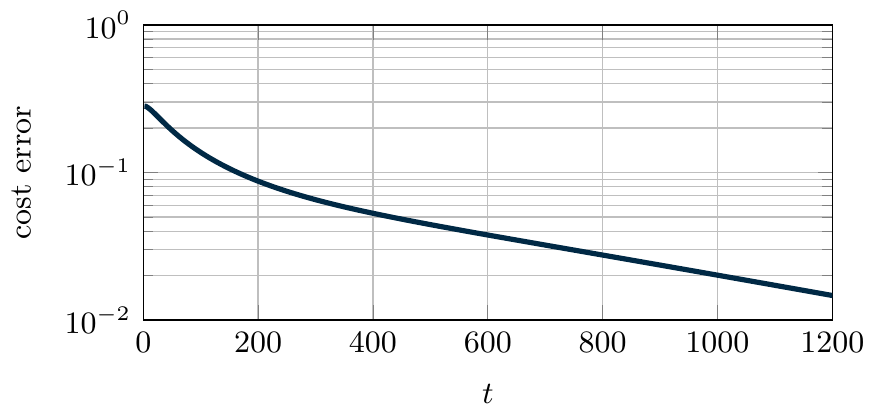}
  \caption{
    Evolution of the cost error for the distributed ADMM algorithm for 
    cost-coupled problems.
  }
  \label{dual:fig_ADMM_cost}
\end{figure}
In Figure~\ref{dual:fig_ADMM_consensus}, we show the consensus error
of the local solution estimates, i.e., $\|\bx_i^t - \bar{\bx}^t\|$ for all $i$,
where $\bar{\bx}^t = 1/N \cdot \sum_{i=1}^N \bx_i^t$. %
\begin{figure}[!htpb]
\centering
  \includegraphics[scale=1]{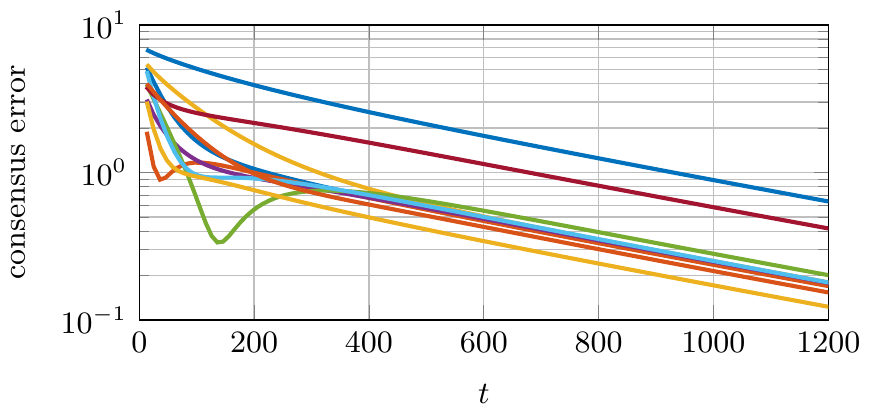}
  \caption{
    Evolution of the consensus error for the distributed ADMM algorithm
    for cost-coupled problems. Each line refers to an agent in the network.
  }
  \label{dual:fig_ADMM_consensus}
\end{figure}

As regards distributed dual decomposition, in Figure~\ref{dual:fig_ddec_cost} 
we show cost convergence. That is, we plot the evolution of primal and dual cost error, i.e., 
$|\sum_{i=1}^N f_i (\bx_i^t) - f^\star|/|f^\star|$ and 
$|q (\bLambda^t) - f^\star|/|f^\star|$. 
As expected for a dual method, dual cost converges faster than primal cost.
\begin{figure}[!htpb]
\centering
  \includegraphics[scale=1]{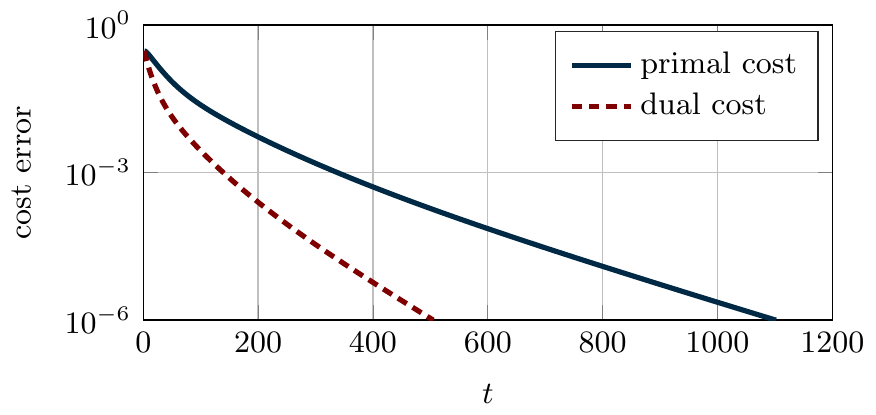}
  \caption{
    Evolution of primal and dual cost errors for the distributed dual decomposition 
    algorithm for cost-coupled problems.
  }
  \label{dual:fig_ddec_cost}
\end{figure}

Finally, in Figure~\ref{dual:fig_ddec_consensus} we show consensus error of 
the local solution estimates, i.e., $\|\bx_i^t - \bar{\bx}^t\|$ for all $i$,
where $\bar{\bx}^t = 1/N \cdot \sum_{i=1}^N \bx_i^t$.
\begin{figure}[!htpb]
\centering
  \includegraphics[scale=1]{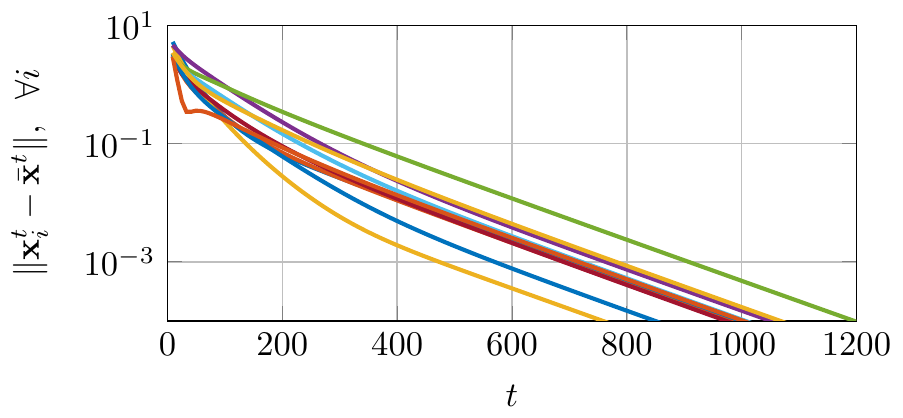}
  \caption{
    Evolution of the consensus error for the distributed dual decomposition 
    for cost-coupled problems. Each line refers to an agent in the network.
  }
  \label{dual:fig_ddec_consensus}
\end{figure}

\subsection{Constraint-coupled Example}
In this subsection, we consider the Microgrid control problem
introduced in Section~\ref{sec:MPC}, where we assume we have
a heterogeneous network of $N = 10$ units with
$4$ generators, $3$ storage devices, $2$ controllable loads 
and $1$ connection to the main grid.
We assume that in the distributed MPC scheme each unit predicts its power 
generation strategy over a horizon of $S = 12$ slots.
The optimization problem to be solved has the form (cf. Section~\ref{sec:MPC})
\begin{align}
\begin{split}
  \min_{\bx_1,\ldots,\bx_N} \: & \: \smallsum_{i = 1}^N f_i(\bx_i)
  \\
  \subj \: 
  & \: 
	  \smallsum_{i \in \GEN} p_{\gen,i}^\tau
	  + \smallsum_{i \in \STOR} p_{\stor,i}^\tau
	  + \smallsum_{i \in \CLOAD} p_{\cload,i}^\tau
	  + p_{\trade}^\tau
	  \ge D^\tau,
	\\
	& \hspace{3.16cm}
	  \forall \: s \in [0,S],
  \\
  & \: \bx_i \in X_i, \hspace{1.68cm} \forall \: i \in \until{N}.
\end{split}
\label{eq:microgrid_control}
\end{align}

We compare RSDD (cf. Algorithm~\ref{alg:RSDD}) and
distributed dual subgradient (cf. Algorithm~\ref{alg:distributed_dual_subgradient}).
For both algorithms, we use the diminishing step-size $\gamma^t = 0.1\cdot (1/t)^{0.7}$.
For RSDD, we set $M = 10 \cdot \|\bmu^\star\|_1$, where $\bmu^\star$ is a dual
optimal solution of the problem~\eqref{eq:microgrid_control} computed by a
centralized solver.

In Figure~\ref{fig:convergence_RSDD} we compare the convergence rate of
RSDD and of distributed dual subgradient.
In particular, for the RSDD algorithm, we plot the difference between the optimal cost $f^\star$ and the sum
of local costs $\sum_{i=1}^N f_i(\bx_{i}^t)$, normalized by $f^\star$.
For the distributed dual subgradient algorithm, we plot the difference between
the optimal cost $f^\star$ and the sum of local costs
$\sum_{i=1}^N f_i(\widehat{\bx}_{i}^t)$, normalized by $f^\star$, where
$\widehat{\bx}_{i}^t$ denotes the $i$-th running average of the local
Lagrangian minimizers $\bx_i^t$.
\begin{figure}[htbp]
\centering
  \includegraphics[scale=1]{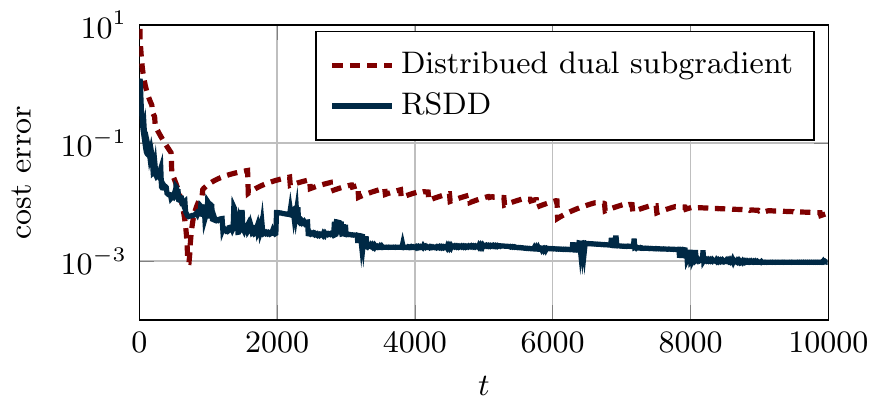}
  \caption{Evolution of the cost error
    $| \sum_{i=1}^N \big( f_i( \bx_{i}^t) + M\rho_i^t \big) - f^\star|/|f^\star|$
    that shows convergence to the optimal cost.
    }
\label{fig:convergence_RSDD}
\end{figure}

For the RSDD algorithm, in Figure~\ref{fig:constraint_violation_RSDD}, we show the algorithmic evolution of
the sum of the penalty parameters $\rho_i^t$ and the maximum violation of the
coupling constraint at each iteration $t$. 
\begin{figure}[htbp]
\centering
  \includegraphics[scale=1]{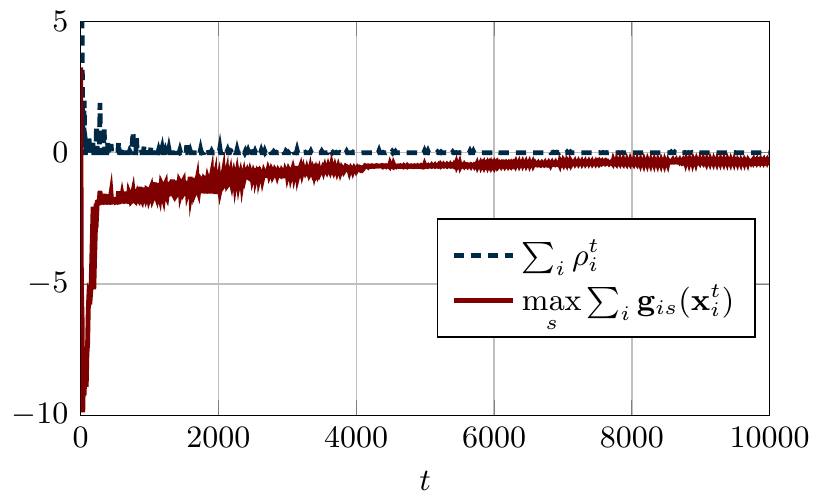}
  \caption{
    Evolution of the maximum violation of coupling constraints
    showing the feasibility of generated primal sequences (red).
    Asymptotically vanishing behavior of the sum of local violations (blue).}
  \label{fig:constraint_violation_RSDD}
\end{figure}

Finally, in Figure~\ref{fig:tracking} we show how $\sum_{j\in\nbrs_i} ( \blambda_{ij}^t - \blambda_{ji}^t )$
compares with the unknown part of the coupling constraint of each agent $i$, 
namely $\sum_{j\neq i} \bg_j  (\bx_j^t)$. Specifically, for all $i$, we plot the quantity
\begin{align*}
  \max_{s \in \until{S}}
  \Big(
    \smallsum_{h \neq i} \bg_{hs}(\bx_h^t) 
    -
    \smallsum_{j \in \nbrs_i} ( \blambda_{ij}^t - \blambda_{ji}^t )_s
  \Big).
\end{align*}
The picture highlights that $\sum_{j\in\nbrs_i} ( \blambda_{ij}^t - \blambda_{ji}^t )$ 
acts as a ``tracker'' of  the maximum of the contribution in the coupling constraint 
due to all the other agents $j\neq i$ in the network.
\begin{figure}[htbp]
  \centering
  \includegraphics[scale=1]{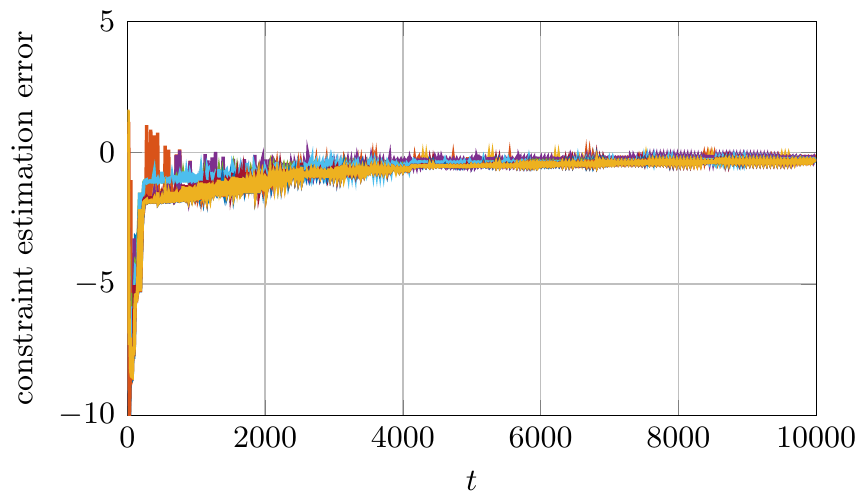}
  \caption{
    Evolution of the error between the unknown part of the coupling constraint and
    the local term $\sum_{j\in\nbrs_i} ( \blambda_{ij}^t - \blambda_{ji}^t )$, for all $i$,
    showing an asymptotic tracking property of the auxiliary variables.}
  \label{fig:tracking}
\end{figure}
 
\fi

\iftrue

\chapter{Constraint Exchange Methods}
\label{chap:constraint_exchange}

In this chapter, we present distributed optimization algorithms 
based on the exchange of constraints among agents. 
These algorithms are structurally different from the ones described in
Chapters~\ref{chap:primal} and~\ref{chap:dual}, since the information exchanged
by agents (encoding the local solution estimate) amounts to constraints rather
than decision variables.
We start by introducing the so-called \ConstrCons/ algorithm for
convex and abstract programs.\footnote{Abstract programs are a generalization of linear programs,
see, e.g., \cite{agarwal2000randomized,notarstefano2011distributed}.} Following the same approach as in
the previous chapter, we present and analyze the algorithm for a simplified optimization
set-up, namely linear programs, and then discuss how it in fact applies to
general convex and abstract programs. Then, we present other methods based
on the constraint exchange approach that generalize \ConstrCons/. 
Finally, we provide a numerical example to show the main
characteristics of the presented methods.

\section{Constraints Consensus applied to Linear Programs}
\label{ce:sec:LP}
In this section, we present and analyze a simplified version, applied to Linear
Programs (LPs), of the \ConstrCons/ algorithm~\cite{notarstefano2011distributed}.
First, we give some intuition on the algorithm together with its formal
description. Then, we provide a convergence analysis, and we briefly
mention a variant of the algorithm in which agents exchange ``columns'' instead
of constraints.

\subsection{Algorithm description}
Consider a network of $N$ agents that aim to solve the 
linear program
\begin{equation}
\begin{split}
  \min_\bx \: & \: c^\top \bx
  \\
  \subj \: & \: a_i^\top \bx \leq b_i, \hspace{0.5cm} i \in \until{N},
\end{split}
\label{ce:eq:LP}
\end{equation}
where $\bx \in \real^d$ is the optimization variable, $c \in \real^d$
is the cost vector, and $a_i \in \real^d$ and $b_i \in \real$, $i \in \until{N}$.
Notice that problem~\eqref{ce:eq:LP} is an instance of the common cost 
set-up described in Section~\ref{sec:setups_common_cost}.
For ease of presentation, we suppose that each agent $i$ knows only
the constraint $a_i^\top \bx \leq b_i$, and we say that this is the initial
constraint of agent $i$.
Also, we make the standing assumption that the number of agents
is greater than the dimension of the variable, i.e., $N > d$.

To convey the idea underlying the \ConstrCons/ algorithm, let us
elaborate on optimization problems in the form of~\eqref{ce:eq:LP}.
It is known from linear programming theory (cf. Appendix~\ref{sec:LP})
that the feasible set of problem~\eqref{ce:eq:LP} is polyhedral
and that, if $\bx^\star$ is an optimal vertex (i.e., an optimal solution attained
at a vertex of the feasible set), then there exists a \emph{basis},
consisting of exactly $d$ linearly independent inequality constraints
$a_{\ell_1}^\top \bx \le b_{\ell_1}, \ldots, a_{\ell_d}^\top \bx \le b_{\ell_d}$,
for some indices $\{\ell_1, \ldots, \ell_d\} \subseteq \until{N}$. Such a basis
allows for the computation of $\bx^\star$ as the (unique) optimal vertex
of the linear program
\begin{equation}
\begin{split}
  \min_\bx \: & \: c^\top \bx
  \\
  \subj \: & \: a_{\ell_h}^\top \bx \leq b_{\ell_h}, \hspace{0.5cm} h \in \until{d},
\end{split}
\label{ce:eq:LP_relaxed_basis}
\end{equation}
obtained as a relaxation of problem~\eqref{ce:eq:LP} by considering the
constraints in the basis only. 
Roughly speaking, in the \ConstrCons/ algorithm, each agent
iteratively solves a relaxation of problem~\eqref{ce:eq:LP}, with constraints
given by its initial constraint and constraints collected from neighbors,
and computes an optimal solution with its corresponding basis.
Then, the basis is broadcast to neighbors and the process is repeated
until convergence.
A natural question arising at this point is how to handle problems with
multiple optimal solutions. 
For such problems, in order to guarantee convergence of the scheme,
it is necessary for agents to select a common solution.
A possible approach to guarantee agent agreement is to employ
a lexicographic criterion (see Appendix~\ref{sec:LP} for a formal description),
i.e., agents compute the lexicographically minimal
optimal solution, termed \emph{lex-optimal solution}, of the LPs
through an appropriate local lexicographic solver.
Thus, in the remainder of this section, we will stick to the following
definition of basis.
\begin{definition}
\label{ce:def:basis_LP}
  Let $\bx^\star$ be the lex-optimal solution of a linear program in the
  form~\eqref{ce:eq:LP}.
  A collection of $d$ inequality constraints $a_{\ell_1}^\top \bx \le b_{\ell_1}, \ldots, a_{\ell_d}^\top \bx \le b_{\ell_d}$,
  for some indices $\{\ell_1, \ldots, \ell_d\} \subseteq \until{N}$,
  is called a \emph{basis} of~\eqref{ce:eq:LP} if $\bx^\star$ is the lex-optimal solution of
  \begin{align*}
    \min_\bx \: & \: c^\top \bx
    \\
    \subj \: & \: a_{\ell_h}^\top \bx \leq b_{\ell_h}, \hspace{0.5cm} h \in \until{d}.
\eqoprocend
  \end{align*}
\end{definition}
This definition is specifically tailored for linear programs. In fact, it can be
obtained as a special version of a more general definition of basis that holds
for so-called \emph{abstract programs}, see, e.g., \cite{agarwal2000randomized,notarstefano2011distributed}. 
From now on, we compactly denote a
basis as $(P, q)$, where $P \in \real^{d \times d}$ is the matrix obtained
by stacking the row vectors $a_{\ell_h}^\top$ and $q \in \real^d$ is the vector
obtained by stacking the scalars $b_{\ell_h}$, i.e.,
\begin{align*}
  P & = 
  \begin{bmatrix}
    a_{\ell_1}^\top \\ \vdots \\ a_{\ell_d}^\top
  \end{bmatrix},
  \hspace{0.5cm}
  q =
  \begin{bmatrix}
    b_{\ell_1} \\ \vdots \\ b_{\ell_d}
  \end{bmatrix}.
\end{align*}
Notice that, even if the lex-optimal solution of a LP is unique, there might be
several bases associated to the problem. In
Figure~\ref{ce:fig:LP_basis}, an example scenario in $\real^2$ is graphically
represented.
\begin{figure}[htpb]
	\centering
	\includegraphics[scale=1.1]{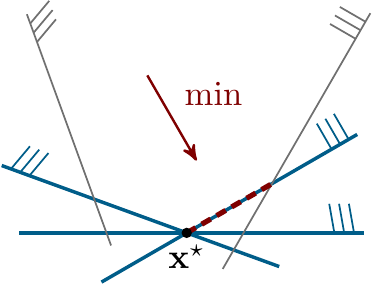}
	\caption{Example of an instance of problem~\eqref{ce:eq:LP}, where
	  the feasible side for each inequality constraint is denoted by three bars.
	  The LP admits several optimal solutions (indicated by a dashed red line),
	  but $\bx^\star$ is the lex-optimal solution. Notice that several bases can be chosen
	  (i.e., the constraint on which the optimal solutions lie together with either
	  of the other two constraints in blue).
	}
\label{ce:fig:LP_basis}
\end{figure}

Next, we describe the \ConstrCons/ algorithm applied
to problem~\eqref{ce:eq:LP}.
We assume that the agents communicate according to a jointly strongly connected
(time-varying) directed graph $\GG^t$ (cf. Section~\ref{sec:network_comm_models}),
and we denote by $\nbrs_i^t$ the in-neighbors of agent $i$ at communication round%
\footnote{
  In a synchronous algorithm the term iteration is more suited. Since the \ConstrCons/ 
  algorithm can be implemented also in an asynchronous setting, we prefer to use 
  this terminology.} $t$.
Each agent $i$ maintains a local solution estimate $\bx_i^t$ and a local basis $(P_i^t,q_i^t)$.
It is initialized to $(a_i^\top, b_i)$ and is incrementally filled as the agent
collects information from neighbors during the algorithm evolution.
At each communication round $t$, agent $i$ first gathers the bases from its
neighbors, then it constructs a (small) local LP with constraints given by the aggregation of:
\emph{(i)} the old basis, \emph{(ii)} the collected bases from neighbors, and
\emph{(iii)} its initial constraint.
Then, the agent finds a basis for the local LP to
update its state. Finally, the updated basis is broadcast to neighbors.
Notice that the local LP can be unbounded. Thus, an artificial (sufficiently large)
bounding box, denoted as $-M \1 \le \bx \le M \1$, with $M > 0$, is added to ensure
that the algorithm is well posed at each communication round, so that the bounding
box can becomes part of the local bases.
If $M$ is sufficiently large, the lex-optimal solution of problem~\eqref{ce:eq:LP}
is contained in the bounding box and the bounding box will eventually
leave the local bases.
Algorithm~\ref{ce:alg:constraints_consensus} formally summarizes the \ConstrCons/
algorithm applied to linear programs from the perspective of node $i$.
\begin{algorithm}
  \begin{algorithmic}[0]
  
    \Statex \textbf{Initialization}: $(P_i^0, q_i^0) = (a_i^\top, b_i)$
    \medskip

    \Statex \textbf{Evolution}: for $t=0,1,... $
    \smallskip
    
      \StatexIndent[0.5] \textbf{Gather} $(P_j^t, q_j^t)$ from neighbors $j \in \nbrs_i^t$

      \StatexIndent[0.5] \textbf{Compute} $\bx_i^{t+1}$ as the lex-optimal solution of
      \begin{align}
      \begin{split}
        \min_{\bx} \: & \: c^\top \bx
        \\
        \subj \: & \: a_i^\top \bx \le b_i
        \\
        & \: P_i^t \bx \le q_i^t
        \\
        & \: P_j^t \bx \le q_j^t, \hspace{0.5cm} j \in \nbrs_i^t
        \\
        & \: -M \1 \le \bx \le M \1
      \end{split}
      \label{ce:eq:alg_constr_cons_local_pb}
      \end{align}
      
      \StatexIndent[0.5] \textbf{Update} $(P_i^{t+1}, q_i^{t+1})$ as a basis of~\eqref{ce:eq:alg_constr_cons_local_pb}

  \end{algorithmic}
  \caption{\ConstrCons/ applied to LPs}
  \label{ce:alg:constraints_consensus}
\end{algorithm}

In Section~\ref{ce:sec:analysis}, we analyze the convergence of
Algorithm~\ref{ce:alg:constraints_consensus}.

Let us now
highlight the differences of the constraint exchange approach with respect
to the other approaches discussed in this survey.
First, note that in primal methods (cf. Chapter~\ref{chap:primal}),
consensus of the agents on a common optimal solution is enforced
by means of consensus iterations that steer the local quantities to
a common value, whereas in Algorithm~\ref{ce:alg:constraints_consensus},
consensus follows because eventually the lex-optimal solution
of the local problems~\eqref{ce:eq:alg_constr_cons_local_pb}
is the same.
Second, the communication network assumptions of constraint exchange
methods are generally very weak. For instance,
Algorithm~\ref{ce:alg:constraints_consensus} only
requires joint strong connectivity, which allows for an asynchronous
implementation of the algorithm (cf. Section~\ref{sec:network_comm_models}),
and allows for unreliable communication links
(e.g., subject to packet loss).
Also, it is worth mentioning that if the network consists of a large number of
agents with relatively small in-degree,
the local optimization problem~\eqref{ce:eq:alg_constr_cons_local_pb}
solved at each iteration is much smaller than the original
problem~\eqref{ce:eq:LP}, so that the algorithm scales nicely with the network size.
This is also corroborated by the fact that the communication is bounded:
each exchanged basis always consists of $d$ constraints, except in the
early stages of the algorithm in which less than $d$ constraints are available.
Finally, note that Algorithm~\ref{ce:alg:constraints_consensus} does
not require global tuning parameters (e.g., the step-size).

\subsection{Convergence Analysis}
\label{ce:sec:analysis}
In this subsection, we analyze the convergence of
Algorithm~\ref{ce:alg:constraints_consensus}.
The proof reported in this survey is different from the one in~\cite{notarstefano2011distributed},
which was devised for general abstract programs. Here we present a new
proof inspired by the arguments used in~\cite{testa2019distributed}.
Let us make the following assumption on problem~\eqref{ce:eq:LP}.
\begin{assumption}
\label{ce:ass:LP}
  Problem~\eqref{ce:eq:LP} is feasible and the lex-optimal solution exists.%
  \footnote{For a discussion on the existence of the lex-optimal solution, see Appendix~\ref{sec:LP}.}
  \oprocend
\end{assumption}

In the following, we prove that Algorithm~\ref{ce:alg:constraints_consensus} enjoys
finite-time convergence. The line of proof relies on three facts, namely
\emph{(i)} (finite-time) convergence of the solution estimates computed by each agent
(Lemma~\ref{ce:lemma:local_convergence}), \emph{(ii)} consensus of the
solution estimates at convergence (Lemma~\ref{ce:lemma:consensus}), \emph{(iii)}
optimality of the consensual solution estimates.

In the next lemma we prove that the quantities
computed by each agent converge in finite time.
\begin{lemma}[Local convergence]
  \label{ce:lemma:local_convergence}
  Let Assumption~\ref{ce:ass:LP} hold.
  Then, for all $i \in \until{N}$,
  \begin{enumerate}
    \item[\emph{(i)}] the cost sequence $\{c^\top \bx_i^t\}_{t \ge 0}$ is monotonically non-decreasing and
      converges in finite time, i.e., there exist $T_i > 0$ and $\bar{J}_i \in \real$
      such that
      \begin{align*}
        c^\top \bx_i^t = \bar{J}_i,
        \hspace{1cm}
        \text{for all } t \ge T_i;
      \end{align*}
      
    \item[\emph{(ii)}] the solution estimate sequence $\{\bx_i^t\}_{t \ge 0}$ converges in finite time to
      a vector satisfying the initial constraint of agent $i$, i.e., there exist
      $T_i^\prime > 0$ and $\bar{\bx}_i$ such that
      \begin{align*}
        &\bx_i^t = \bar{\bx}_i, 
        \hspace{1cm}
        \text{for all } t \ge T_i^\prime,
        \\
        &a_i^\top \bar{\bx}_i \le b_i.
      \end{align*}
  \end{enumerate}
\end{lemma}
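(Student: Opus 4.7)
The plan is to prove both claims simultaneously by establishing that the sequence $\{\bx_i^t\}$ is lexicographically non-decreasing and lives in a finite set, hence converges in finite time; the cost sequence then inherits both properties immediately.

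First, I would observe that every local LP solved at line~\eqref{ce:eq:alg_constr_cons_local_pb} draws its constraints from the finite universe $\mathcal{U}:=\{a_j^\top\bx\le b_j\}_{j=1}^{N}\cup\{\text{box constraints}\ -M\le x_k\le M\}$. Hence the set of possible lex-optimal solutions across all rounds is contained in the finite set $V$ of vertices of polyhedra generated by subsets of $\mathcal{U}$. In particular, both $\{\bx_i^t\}$ and $\{c^\top\bx_i^t\}$ take values in finite sets. Moreover, choosing $M$ large enough so that the lex-optimal solution $\bx^\star$ of~\eqref{ce:eq:LP} lies in the bounding box, each local LP is a relaxation of~\eqref{ce:eq:LP}, so that $c^\top\bx_i^t\le c^\top\bx^\star$ uniformly in $t$.

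The crux of the argument is lexicographic monotonicity. By Definition~\ref{ce:def:basis_LP}, the basis $(P_i^{t+1},q_i^{t+1})$ produced at the end of round $t$ has the property that the auxiliary ``basis-only'' LP, with constraints $P_i^{t+1}\bx\le q_i^{t+1}$, admits $\bx_i^{t+1}$ as its lex-optimal solution. Now, any point feasible for the LP of round $t+1$ must in particular satisfy $P_i^{t+1}\bx\le q_i^{t+1}$, and is therefore feasible for the auxiliary LP; by lex-optimality, such a point is $\ge_{\text{lex}}\bx_i^{t+1}$. Taking the lex-min over the feasible set of round $t+1$ yields $\bx_i^{t+2}\ge_{\text{lex}}\bx_i^{t+1}$, which in particular delivers the non-decreasing cost claim $c^\top\bx_i^{t+2}\ge c^\top\bx_i^{t+1}$.

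Combining the two ingredients, a lex-non-decreasing sequence taking values in the finite totally ordered set $V$ must be eventually constant, which yields the existence of $T_i'>0$ and $\bar\bx_i\in V$ such that $\bx_i^t=\bar\bx_i$ for all $t\ge T_i'$, and therefore $c^\top\bx_i^t=\bar{J}_i:=c^\top\bar\bx_i$ for all $t\ge T_i:=T_i'$. The feasibility $a_i^\top\bar\bx_i\le b_i$ is automatic because the initial constraint $a_i^\top\bx\le b_i$ appears in every local LP~\eqref{ce:eq:alg_constr_cons_local_pb}. The main subtlety I expect to handle carefully is that the feasible sets of the LPs at rounds $t$ and $t+1$ need not be comparable by inclusion; the basis property is precisely what lets one interpose the ``basis-only'' LP as a pivot between them, restoring the inclusion needed to invoke monotonicity of the lex-min.
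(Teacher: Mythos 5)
Your overall strategy --- interpose the basis-only LP between consecutive rounds, deduce monotonicity, and conclude finite-time convergence from finiteness of the candidate set --- is the same as the paper's. However, the central step of your ``crux'' paragraph is false as stated. You claim that any point feasible for the basis-only LP $\{P_i^{t+1}\bx \le q_i^{t+1}\}$ is lexicographically $\ge \bx_i^{t+1}$ ``by lex-optimality.'' But the lex-optimal solution is the lexicographic minimum \emph{of the optimal solution set}, not of the feasible set: a feasible point with strictly larger cost $c^\top\bx$ can perfectly well be lexicographically smaller than $\bx_i^{t+1}$ as a vector in $\real^d$. (Take $d=2$, $c=(0,1)^\top$: the lex-optimal solution minimizes $x_2$ first and $x_1$ only among the $x_2$-minimizers, so a feasible point with larger $x_2$ but much smaller $x_1$ is lex-smaller.) Consequently the claimed inequality $\bx_i^{t+2}\ge_{\mathrm{lex}}\bx_i^{t+1}$ does not follow, and the phrase ``taking the lex-min over the feasible set of round $t+1$'' misdescribes what the algorithm computes.

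The argument is repaired by replacing the plain lexicographic order on $\real^d$ with the lexicographic order on the augmented vector $(c^\top\bx,\,\bx)\in\real^{1+d}$ (equivalently, the order induced by the perturbed cost $c+\Delta$ of Appendix~\ref{sec:LP}): if $\bx_i^{t+1}$ is the lex-optimal solution over the basis-only feasible set and $\by$ belongs to that set, then either $c^\top\by>c^\top\bx_i^{t+1}$, or $c^\top\by=c^\top\bx_i^{t+1}$ and $\by=\bx_i^{t+1}$ or $\by\lexsucc\bx_i^{t+1}$; in every case the augmented vector is non-decreasing. Since the augmented sequence lives in a finite set, it is eventually constant, which yields (i) and (ii) together. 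This is exactly what the paper does, only unrolled: it first proves the cost is non-decreasing and settles in finite time, and only on the constant-cost tail does it argue, one coordinate at a time, that each component is non-decreasing --- i.e., it processes the coordinates of $(c^\top\bx,\bx)$ sequentially rather than invoking the augmented order in one shot. The remaining ingredients of your proposal (finiteness of the candidate set via the finite constraint universe, boundedness from the box, and feasibility of the limit with respect to $a_i^\top\bx\le b_i$ because that constraint appears in every local LP) are correct and match the paper.
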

\begin{proof}
  For the sake of analysis, let us denote by $J_i^t \triangleq c^\top \bx_i^t$ the
  cost associated to $\bx_i^t$.
  To prove \emph{(i)}, we consider problem~\eqref{ce:eq:alg_constr_cons_local_pb}
  at consecutive communication rounds, say $t$ and $t+1$.
  The lex-optimal solution of problem~\eqref{ce:eq:alg_constr_cons_local_pb}
  is $\bx_i^{t+1}$, with cost $J_i^{t+1} = c^\top \bx_i^{t+1}$ and
  $(P_i^{t+1}, q_i^{t+1})$ an associated basis.
  Thus, $\bx_i^{t+1}$ is the lex-optimal solution of
  \begin{align}
  \begin{split}
    \min_\bx \: & \: c^\top \bx
    \\
    \subj \: & \: P_i^{t+1} \bx \le q_i^{t+1},
  \end{split}
  \label{ce:eq:lemma_local_prob}
  \end{align}
  with optimal cost $J_i^{t+1}$.
  At the successive communication round $t+1$,
  the lex-optimal solution of the local problem~\eqref{ce:eq:alg_constr_cons_local_pb}
  does not violate any constraint of problem~\eqref{ce:eq:lemma_local_prob}.
  Thus, it holds $J_i^{t+2} \ge J_i^{t+1}$. Therefore, we conclude that the cost sequence is
  monotonically non-decreasing, i.e., for all $t \ge 0$,
  \begin{equation*}
    J_i^{t+1} \ge J_i^{t},
    \hspace{1cm}
    i \in \until{N}.
  \end{equation*}
  Also, because of the bounding box, the feasible set of
  problem~\eqref{ce:eq:alg_constr_cons_local_pb} is bounded, so that
  $\{ J_i^t \}_{t \ge 0}$ converges. %
  Finally, since there is a finite number of constraints in the network, $J_i^t$
  can only assume a finite number of values (corresponding to all the possible
  combinations of constraints). Thus, $\{ J_i^t \}_{t \ge 0}$ converges
  in finite time, i.e., there exist $T_i > 0$ and $\bar{J}_i \in \real$ such that
  \begin{align*}
    c^\top \bx_i^t = J_i^t = \bar{J}_i,
    \hspace{1cm}
    \text{for all } t \ge T_i.
  \end{align*}
  
  To prove \emph{(ii)}, let us consider the sequence of the first component
  of $\bx_i^t$ for $t \ge T_i$, i.e., $\{\bx_{i,1}^t\}_{t \ge T_i}$.
  First, notice that the cost associated to such sequence is identically equal to $\bar{J}_i$,
  i.e., $c^\top \bx_i^t = \bar{J}_i$ for all $t \ge T_i$.
  In the following, we apply ideas similar to \emph{(i)}, namely
  we consider problem~\eqref{ce:eq:alg_constr_cons_local_pb}
  at consecutive communication rounds, say $t$ and $t+1$, with $t \ge T_i$.
  The lex-optimal solution of problem~\eqref{ce:eq:alg_constr_cons_local_pb}
  is $\bx_i^{t+1}$, with first component $\bx_{i,1}^{t+1}$ and
  $(P_i^{t+1}, q_i^{t+1})$ an associated basis.
  Thus, $\bx_i^{t+1}$ is the lex-optimal solution of
  \begin{align}
  \begin{split}
    \min_\bx \: & \: c^\top \bx
    \\
    \subj \: & \: P_i^{t+1} \bx \le q_i^{t+1}.
  \end{split}
  \label{ce:eq:lemma_local_prob_x}
  \end{align}
  At the successive communication round $t+1$, the optimal cost stays
  equal to $\bar{J}_i$ and
  the lex-optimal solution of the local problem~\eqref{ce:eq:alg_constr_cons_local_pb}
  does not violate any constraint of problem~\eqref{ce:eq:lemma_local_prob_x}.
  Thus, since the local lexicographic solver selects the optimal solution
  with minimal first component, it follows that $\bx_{i,1}^{t+2} \ge \bx_{i,1}^{t+1}$.
  Therefore, we conclude that the sequence $\{\bx_{i,1}^t\}_{t \ge T_i}$
  is monotonically non-decreasing, i.e., for all $t \ge T_i$,
  \begin{equation*}
    \bx_{i,1}^{t+1} \ge \bx_{i,1}^{t},
    \hspace{1cm}
    i \in \until{N}.
  \end{equation*}
  Also, because of the bounding box, the feasible set of
  problem~\eqref{ce:eq:alg_constr_cons_local_pb} is bounded, so that
  $\{ \bx_{i,1}^t \}_{t \ge 0}$ converges. %
  Finally, since there is a finite number of constraints in the network, $\bx_{i,1}^t$
  can only assume a finite number of values (corresponding to all the possible
  combinations of constraints). Thus, $\{ \bx_{i,1}^t \}_{t \ge 0}$ converges
  in finite time, i.e., there exist $T_i^\prime > 0$ and $\bar{\bx}_{i,1} \in \real$ such that
  \begin{align*}
    \bx_{i,1}^t = \bar{\bx}_{i,1},
    \hspace{1cm}
    \text{for all } t \ge T_i^\prime.
  \end{align*}
  By repeating the same arguments for each of the subsequent
  components of $\bx_i^t$ for $t \ge T_i^\prime$, we are able to
  conclude that $\{ \bx_i^t \}_{t \ge 0}$ converges in finite time to some $\bar{\bx}_i$,
  which by construction satisfies $a_i^\top \bar{\bx}_i \le b_i$.
\end{proof}

In the following lemma, we prove that the solution estimates to which
agents converge are consensual.
\begin{lemma}[Consensus]
  \label{ce:lemma:consensus}
  Let the communication graph be jointly strongly connected.
  Moreover, assume that the sequences computed by agents have converged, i.e.,
  there exists $T_0 > 0$ such that for all $i \in \until{N}$ it holds
  \begin{align*}
    c^\top \bx_i^t = \bar{J}_i \: \text{ and } \: \bx_i^t = \bar{\bx}_i,
    \hspace{1cm}
    \text{for all } t \ge T_0,
  \end{align*}
  for some $\bar{J}_i \in \real$ and $\bar{\bx}_i \in \real^d$.
  Then, it holds
  \begin{align*}
    \bar{J}_i = \bar{J}_j \: \text{ and } \: \bar{\bx}_i = \bar{\bx}_j,
    \hspace{1cm}
    \text{for all } i,j \in \until{N}.
  \end{align*}
\end{lemma}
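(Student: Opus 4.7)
The plan is to leverage two observations: (i) after time $T_0$, whenever $j$ is an in-neighbor of $i$ at some round $t \ge T_0$, agent $i$'s local problem~\eqref{ce:eq:alg_constr_cons_local_pb} contains the constraints $P_j^t \bx \le q_j^t$ as a subset of its own; and (ii) joint strong connectivity allows us to chain the resulting inequalities until equality is forced in both directions. The first step is to note that, since $(P_j^t,q_j^t)$ is a basis returned by agent $j$ with associated lex-optimal solution $\bx_j^{t+1} = \bar{\bx}_j$, by Definition~\ref{ce:def:basis_LP} the point $\bar{\bx}_j$ is also the lex-optimal solution of the reduced linear program $\min c^\top \bx$ subject to $P_j^t \bx \le q_j^t$, with optimal cost $\bar{J}_j$.

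Next, I would fix any edge $(j,i) \in \EE^t$ with $t \ge T_0$. Since the feasible set of agent $i$'s local problem at round $t+1$ is a subset of $\{ \bx \mid P_j^t \bx \le q_j^t \}$, its lex-optimal solution $\bar{\bx}_i$ is feasible for the reduced problem of $j$'s basis, so $\bar{J}_i = c^\top \bar{\bx}_i \ge \bar{J}_j$. To propagate this to arbitrary pairs, I would invoke joint strong connectivity: the union graph $\GG_\infty^{T_0}$ is strongly connected, so for any $i,k \in \until{N}$ there is a directed path from $k$ to $i$ realized by a sequence of edges, each active at some time $\ge T_0$. Applying the edgewise inequality along the path yields $\bar{J}_i \ge \bar{J}_k$, and applying it along the reverse path gives $\bar{J}_k \ge \bar{J}_i$. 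Hence all costs coincide with a common value $\bar{J}$.

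Finally, to upgrade from cost equality to solution equality I would use the lexicographic tie-breaking. Once $\bar{J}_i = \bar{J}_j = \bar{J}$, the point $\bar{\bx}_i$ is both feasible for the reduced problem of $j$'s basis and attains the optimal cost $\bar{J}$, so it is an optimal (but possibly not lex-optimal) solution of that reduced problem. Since $\bar{\bx}_j$ is its lex-optimal solution, one has $\bar{\bx}_j \lexprec \bar{\bx}_i$ or $\bar{\bx}_j = \bar{\bx}_i$. Chaining this lexicographic inequality along a directed path from $k$ to $i$ in $\GG_\infty^{T_0}$ and along the reverse path from $i$ to $k$ then forces $\bar{\bx}_i = \bar{\bx}_k$ for all $i,k \in \until{N}$.

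The proof is essentially a clean inheritance argument, and I do not expect technical obstacles beyond making the chaining along a time-varying graph precise. The mild subtlety is that the inequality $\bar{J}_i \ge \bar{J}_j$ does not depend on \emph{when} the edge $(j,i)$ appears after $T_0$, so we may freely compose inequalities associated with edges active at different rounds; this is exactly what joint strong connectivity provides, and no additional bookkeeping on the relative ordering of edge activations is required.
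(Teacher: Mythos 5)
Your proof is correct and rests on the same underlying mechanism as the paper's: propagate the per-edge cost inequality along paths guaranteed by joint strong connectivity, then refine to solution equality via the lexicographic tie-break. The packaging differs in two ways. First, the paper argues by contradiction and explicitly constructs a \emph{time-ordered} chain of edge activations $\tau_1 < \cdots < \tau_k$ realizing a directed path, tracking the cost estimates $J_{\nu_\ell}^{\tau_\ell+1}$ as information flows forward in time; you instead observe that, once the sequences have converged, the inequality $\bar{J}_i \ge \bar{J}_j$ attached to an edge $(j,i)$ active at any $t \ge T_0$ is a statement about the fixed limits $\bar{\bx}_i,\bar{\bx}_j$ and their bases, so the inequalities compose along any path in the union graph $\GG_\infty^{T_0}$ with no temporal bookkeeping. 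This is a legitimate simplification precisely because convergence is a hypothesis of the lemma (the time-ordering machinery is only needed when one cannot assume the iterates have settled). Second, for the solution part the paper proceeds componentwise in lexicographic order, whereas you chain the lex-order relation $\bar{\bx}_j \lexprec \bar{\bx}_i$ (or equality) directly and conclude by antisymmetry of the total order; both are sound, and yours is marginally cleaner. The one step worth making explicit in a write-up is the appeal to Definition~\ref{ce:def:basis_LP}: the reduced LP over $(P_j^t,q_j^t)$ has lex-optimal solution $\bar{\bx}_j$ and optimal cost $\bar{J}_j$, which is what licenses both the cost inequality and the subsequent lex comparison.
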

\begin{proof}
  For the sake of analysis, let us denote by $J_i^t \triangleq c^\top \bx_i^t$ the
  cost associated to $\bx_i^t$.
  By contradiction, assume that there exist two different agents $i$ and $j$ such that
  $\bar{J}_i \ne \bar{J}_j$. Without loss of generality, let $\bar{J}_j > \bar{J}_i$.
  
  By finite-time convergence of the cost sequences, there exists $T_0 > 0$ such that
  $J_j^t = \bar{J}_j > \bar{J}_i = J_i^t$ for all $t \ge T_0$.
  Moreover, since the communication graph is jointly strongly connected,
  for all $t \ge T_0$ and each pair of agents $(i,j)$, there exists a
  sequence of time instants $\tau_1, \ldots, \tau_{k}$, with
  $t \leq \tau_1 < \ldots < \tau_{k}$, and a sequence of nodes
  $\nu_1, \ldots, \nu_{k-1}$, such that the directed edges
  $(j,\nu_1),(\nu_1,\nu_2),\ldots, (\nu_{k-1},i)$ belong to the digraph at times
  $\tau_1, \ldots, \tau_{k}$ (cf. \cite{notarstefano2011distributed}).
  
  At communication round $\tau_1$, agent $\nu_1$ computes 
  $\bx_{\nu_1}^{\tau_1 + 1}$ by minimizing
  $c^\top \bx$ over a subset of the basis associated to $\bx_j^{\tau_1}$
  (by construction), so that $J_{\nu_1}^{\tau_1+1} \geq J_j^{\tau_1}$.
  Similarly, at communication round $\tau_2$, agent $\nu_2$
  computes $\bx_{\nu_2}^{\tau_2 + 1}$ by minimizing
  $c^\top \bx$ over a subset of the basis associated to $\bx_{\nu_1}^{\tau_2}$.
  Thus, it holds
  \begin{align*}
    J_{\nu_2}^{\tau_2+1} \geq J_{\nu_1}^{\tau_2}.
  \end{align*}  
  Since the cost sequences have converged, 
  it follows that $\bar{J}_{\nu_1} = J_{\nu_1}^{\tau_2} = J_{\nu_1}^{\tau_1+1}$.
  Thus, it holds
  \begin{align*}
    J_{\nu_2}^{\tau_2+1} \ge J_j^{\tau_1}.
  \end{align*}
  The argument can be iterated to conclude that $J_i^{\tau_k+1} \geq J_j^{\tau_1}$.
  Therefore, for all $t > T_0$ there exists $\theta_{ij} > 0$ such that
  \begin{align*}
    \bar{J}_i = J_i^{t+\theta_{ij}} \geq J_j^t = \bar{J}_j,
  \end{align*}
  contradicting the assumption $\bar{J}_j > \bar{J}_i$.
  Thus, $\bar{J}_1 = \ldots = \bar{J}_N$, which concludes the first part of the proof.
  To prove consensus of the solutions, we note that for all $t \ge T_0$,
  $c^\top \bx_1^t = \ldots = c^\top \bx_N^t$.
  Then, it is possible to apply arguments similar to the first part
  to each component of the solution vector (in lexicographic order,
  see proof of Lemma~\ref{ce:lemma:local_convergence} \emph{(ii)}).
\end{proof}
With Lemma~\ref{ce:lemma:local_convergence} and
Lemma~\ref{ce:lemma:consensus} at reach, we are now ready to prove
the convergence of Algorithm~\ref{ce:alg:constraints_consensus}.

\begin{theorem}
\label{ce:thm:convergence}
  Let Assumption~\ref{ce:ass:LP} hold and let the communication graph
  be jointly strongly connected. Moreover, let $\bx^\star$ be the
  lex-optimal solution of problem~\eqref{ce:eq:LP} and assume $M > 0$
  is sufficiently large.
  Consider the sequences $\{ \bx_i^t \}_{t \ge 0}, i \in \until{N}$, generated
  by Algorithm~\ref{ce:alg:constraints_consensus}.
  Then, for all $i \in \until{N}$, the following holds:
  \begin{enumerate}
    \item the cost sequence $\{ c^\top \bx_i^t \}_{t \ge 0}$
      converges in finite time to the optimal cost $J^\star$ of~\eqref{ce:eq:LP};
    \item the solution sequence $\{ \bx_i^t \}_{t \ge 0}$ converges in finite time to $\bx^\star$.
  \end{enumerate}
\end{theorem}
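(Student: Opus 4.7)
The plan is to combine the two preceding lemmas with a standard ``relaxation sandwich'' argument, and then promote cost-optimality to lex-optimality via a simple contradiction.

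First, I would apply Lemma~\ref{ce:lemma:local_convergence} to each agent separately: for every $i \in \until{N}$ there exist times $T_i, T_i^\prime$ and a vector $\bar{\bx}_i$ (with cost $\bar{J}_i = c^\top \bar{\bx}_i$) such that $\bx_i^t = \bar{\bx}_i$ for all $t \ge T_i^\prime$, and crucially $a_i^\top \bar{\bx}_i \le b_i$. Taking $T_0 \triangleq \max_i \max(T_i, T_i^\prime)$, all sequences have stabilized by $T_0$, so Lemma~\ref{ce:lemma:consensus} (whose hypotheses only need joint strong connectivity, already assumed) applies and yields a common limit: $\bar{\bx}_1 = \cdots = \bar{\bx}_N \triangleq \bar{\bx}$ with common cost $\bar{J} = c^\top \bar{\bx}$.

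Next I would sandwich $\bar{J}$ between two bounds to identify it with $J^\star$. For the upper bound on the limit, observe that the local problem~\eqref{ce:eq:alg_constr_cons_local_pb} solved by agent $i$ at any communication round is a relaxation of the original LP~\eqref{ce:eq:LP} augmented with the bounding box; assuming $M$ is large enough so that $-M\1 \le \bx^\star \le M\1$, the point $\bx^\star$ is feasible for every local problem, hence $c^\top \bx_i^{t+1} \le c^\top \bx^\star = J^\star$, and passing to the limit gives $\bar{J} \le J^\star$. For the lower bound, the feasibility observation from the first paragraph shows that $\bar{\bx}$ satisfies $a_i^\top \bar{\bx} \le b_i$ for every $i$, so $\bar{\bx}$ is feasible for~\eqref{ce:eq:LP} and therefore $\bar{J} \ge J^\star$. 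Combining the two gives $\bar{J} = J^\star$, which proves part~(1).

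Finally, for part~(2) I would promote the cost equality to $\bar{\bx} = \bx^\star$ by a lexicographic contradiction argument. Suppose $\bar{\bx} \neq \bx^\star$. Since both are optimal for~\eqref{ce:eq:LP} and $\bx^\star$ is the lex-minimum over the optimal face, we have $\bx^\star \lexprec \bar{\bx}$. But for any $t \ge T_0$ and any agent $i$, the point $\bx^\star$ is feasible for the local relaxation~\eqref{ce:eq:alg_constr_cons_local_pb} (it satisfies the original constraints and lies in the bounding box) and has cost $J^\star = \bar{J}$, which is the local optimal cost; thus $\bx^\star$ competes in the local lex-minimization against $\bar{\bx}$, contradicting the fact that the local lexicographic solver returns $\bar{\bx}$. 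Hence $\bar{\bx} = \bx^\star$, and finite-time convergence to $\bx^\star$ at every node follows.

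The main obstacle I foresee is the lex-optimality step: one must ensure that the local solvers and the global lex-ordering are consistent with the relaxation, i.e.\ that lex-feasibility of $\bx^\star$ in the \emph{smaller} local problem truly forbids the local solver from selecting a lex-larger $\bar{\bx}$ of equal cost; this is handled cleanly by the Definition~\ref{ce:def:basis_LP} choice of basis. The cost sandwich and the feasibility propagation through consensus are essentially bookkeeping once the two prior lemmas are in place.
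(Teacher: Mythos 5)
Your proposal is correct and follows essentially the same route as the paper's proof: Lemma~\ref{ce:lemma:local_convergence} plus Lemma~\ref{ce:lemma:consensus} give a common fixed point $\bar{\bx}$, the relaxation/feasibility sandwich yields $\bar{J} = J^\star$, and a lexicographic contradiction upgrades cost-optimality to $\bar{\bx} = \bx^\star$. The only (immaterial) difference is in the last step, where the paper routes the contradiction through a basis $(\bar{P},\bar{q})$ of the local problem while you argue directly that $\bx^\star$ is a feasible, cost-optimal, lex-smaller competitor in the full local problem~\eqref{ce:eq:alg_constr_cons_local_pb} -- both arguments are valid and rest on the same observation that the local constraints are drawn from~\eqref{ce:eq:LP}.
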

\begin{proof}
  For the sake of analysis, let us denote by $J_i^t \triangleq c^\top \bx_i^t$ the
  cost associated to $\bx_i^t$.
  By Lemma~\ref{ce:lemma:local_convergence}, the cost sequences $\{J_i^t\}_{t \ge 0}$ and the
  solution sequences $\{\bx_i^t\}_{t \ge 0}$ converge in finite time to $\bar{J}_i$ and $\bar{\bx}_i$
  respectively, and by construction it holds
  \begin{align*}
    a_i^\top \bar{\bx}_i \le b_i,
    \hspace{1cm}
    \text{for all } i \in \until{N}.
  \end{align*}
  By Lemma~\ref{ce:lemma:consensus}, there exist a common scalar $\bar{J} \in \real$
  and a common vector $\bar{\bx}$ such that $\bar{J}_i = \bar{J}$ and
  $\bar{\bx}_i = \bar{\bx}$ for all $i \in \until{N}$. Therefore, $\bar{\bx}$ is
  feasible for problem~\eqref{ce:eq:LP}, since $a_i^\top \bar{\bx} \le b_i$ for all $i$.
  To prove that $\bar{J} = J^\star$, we first note that $\bar{J} \le J^\star$, since each agent
  builds up the local LP as a relaxation (i.e., with a lower number of constraints) of the original
  problem~\eqref{ce:eq:LP}, and the bounding box is sufficiently large
  (thus, we can assume that $M > \|\bx^\star\|_\infty$).
  On the other hand, since $\bar{\bx}$ is feasible for problem~\eqref{ce:eq:LP},
  then $J^\star \le c^\top \bar{\bx} = \bar{J}$,
  thus implying $\bar{J} = J^\star$.
  
  Since we have shown that $\bar{\bx}$ is feasible and cost-optimal,
  so that $\bar{\bx}$ is an optimal solution of~\eqref{ce:eq:LP}, we only have
  to show that it is the lexicographic minimum among all the minima (i.e., $\bar{\bx} = \bx^\star$).
  By contradiction, suppose it is not. Then, $\bx^\star \lexprec \bar{\bx}$, where
  the symbol $\lexprec$ means that $\bx^\star$ is lexicographically smaller
  than $\bar{\bx}$ (cf. Appendix~\ref{sec:LP}).
  Now, since $\bar{\bx}$ is
  computed by each agent as the lex-optimal solution to the local problem,
  there exists a basis $(\bar{P}, \bar{q})$, made up of constraints of problem~\eqref{ce:eq:LP},
  such that $\bar{\bx}$ is the lex-optimal solution to
  \begin{align}
  \begin{split}
    \min_\bx \: & \: c^\top \bx
    \\
    \subj \: & \: \bar{p}_h^\top \bx \le \bar{q}_h, \hspace{0.5cm} h \in \until{d},
  \end{split}
  \label{ce:eq:proof_LP_basis}
  \end{align}
  where $\bar{p}_h^\top \in \real^{1 \times d}$ denotes the $h$-th row of
  $\bar{P}$ and $\bar{q}_h \in \real$ denotes the $h$-th entry of $\bar{q}$.
  But this means that $\bx^\star$ must be infeasible for problem~\eqref{ce:eq:proof_LP_basis},
  otherwise the lex-optimal solution of~\eqref{ce:eq:proof_LP_basis}
  would be $\bx^\star$ instead of $\bar{\bx}$.
  Therefore, one of the constraints in~\eqref{ce:eq:proof_LP_basis} is violated
  by $\bx^\star$, i.e., there exists $h \in \until{d}$ such that $P_h^\top \bx^\star > q_h$.
  But since the constraints in~\eqref{ce:eq:proof_LP_basis} are drawn from problem~\eqref{ce:eq:LP},
  this contradicts the fact that $\bx^\star$ is feasible for the original LP~\eqref{ce:eq:LP}.
  Thus, $\bar{\bx} = \bx^\star$ and the proof follows.
\end{proof}

A few remarks on the \ConstrCons/ algorithm are in order.
In the algorithm analysis we did not prove that
the local bases are consensual at convergence. Indeed, agents may
compute different bases associated to the lex-optimal solution.
A sufficient condition for consensus of bases is the so-called \emph{non-degeneracy}
of problem~\eqref{ce:eq:LP} (see also~\cite{notarstefano2011distributed}).
Finally, a remarkable property of the algorithm is that a fully distributed
halting condition can be obtained. Indeed, if the communication graph
is fixed, each agent can halt the execution of the algorithm as soon as the
locally computed solution stays constant for $2 \diam(\GG) + 1$
communication rounds \cite[Theorem IV.4]{notarstefano2011distributed}.
If the communication graph is time-varying and $T$-strongly connected
(cf. Section~\ref{sec:network_comm_models}),
it can be seen that each agent can halt the execution of the algorithm as soon as the
locally computed solution stays constant for $2NT + 1$ communication
rounds.

\subsection{Distributed Simplex}
\label{ce:sec:distributed_simplex}
In this section, we briefly mention a variant of the \ConstrCons/ algorithm
applied to LPs, namely the \DistrSimplex/ algorithm \cite{burger2012distributed}.
We consider a network of $N$ agents that aim to cooperatively solve
linear programs in the so-called standard form, i.e.,
\begin{align}
\begin{split}
  \min_{\bx} \: & \: c^\top \bx
  \\
  \subj \: & \: A \bx = b,
  \\
  & \: \bx \ge 0,
\end{split}
\label{ce:eq:LP_standard_form}
\end{align}
where $A \in \real^{d\times N}$, $b \in \real^d$ and $c \in \real^N$ are the
problem data and $\bx \in \real^N$ is the decision vector.
A \emph{column} of problem~\eqref{ce:eq:LP_standard_form} is defined as the vector
\begin{align*}
  h_i
  \triangleq
  \begin{bmatrix}
    c_i
    \\
    a_i
  \end{bmatrix} \in \real^{1+d},
\end{align*}
where $c_i \in \real$ is the $i$-th entry of the vector $c$ and $a_i \in \real^N$
is the $i$-th column of the matrix $A$.

From a centralized perspective, in the
classical simplex method, a set of columns (which for problems in standard form are treated as a basis),
is iteratively updated until an optimal solution of problem~\eqref{ce:eq:LP_standard_form} is found.
At each iteration, a \emph{leaving} column exits
the basis and is replaced by an \emph{entering} column.
The \DistrSimplex/ algorithm extends the (centralized) simplex method.
Agents are assumed to initially know only a
subset of the problem columns. Informally, at every communication round,
each agent builds up a (small) local LP with a subset of the problem columns
(namely, the old basis and the bases collected from neighbors).
Then, the local LP is solved, a basis associated to the optimal solution is found
and is sent to neighbors.
It can be shown that the evolution of the \DistrSimplex/ algorithm applied to
problem~\eqref{ce:eq:LP_standard_form} is tightly linked to the evolution of the
\ConstrCons/ algorithm applied to the dual of
problem~\eqref{ce:eq:LP_standard_form} (see \cite[Proposition
5.3]{burger2012distributed}).

\section{Constraints Consensus for Convex and Abstract Programs}
\label{ce:sec:CP}
In this section, we describe the \ConstrCons/ algorithm for more general
set-ups than problem~\eqref{ce:eq:LP}.
Formally, assume $N$ agents aim to cooperatively solve the convex program
\begin{equation}
\label{ce:eq:convex_problem}
\begin{split}
  \min_{\bx} \: & \: c^\top \bx
  \\
  \subj \: & \: \bx \in \bigcap_{i=1}^N X_i,
\end{split}
\end{equation}
where $c \in \real^d$ is the cost vector and the sets $X_i$ are subsets
of $\real^d$, for all $i \in \until{N}$.
Problem~\eqref{ce:eq:convex_problem} is in the common-cost
form (cf. Section~\ref{sec:setups_common_cost}), and we suppose that, for all $i$,
the set $X_i$ is known by agent $i$ only and that the cost vector $c$ is globally
known.
Notice that the linear cost function in problem~\eqref{ce:eq:convex_problem}
results in no loss of generality, as discussed in Remark~\ref{ce:remark:nonlinear_cost}.
We make the following assumption.
\begin{assumption}
\label{ce:ass:CP}
  Problem~\eqref{ce:eq:convex_problem} is feasible and the sets $X_i$ are
  convex and compact, for all $i \in \until{N}$.
  \oprocend
\end{assumption}

The \ConstrCons/ algorithm applied to problem~\eqref{ce:eq:convex_problem}
can be formalized by extending the concept of basis (cf. Definition~\ref{ce:def:basis_LP})
so as to consider the (possible) nonlinear nature of the local constraints $X_i$.
Formally, let $\bx^\star$ be the lex-optimal solution of problem~\eqref{ce:eq:convex_problem}.
Then, the collection of $\delta$ constraints $X_{\ell_1}, \ldots, X_{\ell_\delta}$, for some indices
$\{\ell_1, \ldots, \ell_\delta\} \subseteq \until{N}$, are a basis of
\eqref{ce:eq:convex_problem} if $\bx^\star$ is the lex-optimal solution of
\begin{align*}
\begin{split}
  \min_\bx \: & \: c^\top \bx
  \\
  \subj \: & \: \bx \in X_{\ell_h}, \hspace{0.5cm} h \in \until{\delta},
\end{split}
\end{align*}
and if the collection of $\delta$ constraints is minimal (i.e., removing a constraint from
the previous problem implies that the lex-optimal solution changes).
We compactly denote the basis as the set $B = \bigcap_{h=1}^\delta X_{\ell_h}$.
For feasible convex problems in the form~\eqref{ce:eq:convex_problem}, it holds
$\delta \le d$, whereas for linear programs, it holds $\delta = d$ (cf.
Definition~\ref{ce:def:basis_LP}).
The maximum $\delta$ for a given problem is called the \emph{combinatorial
  dimension} of the problem.
A more comprehensive discussion can be found
in~\cite{agarwal2000randomized,notarstefano2011distributed}.

Next, we describe the \ConstrCons/ algorithm applied to convex programs
in Algorithm~\ref{ce:alg:constraints_consensus_convex}, from the perspective
of node $i$. Each agent $i$ maintains a local solution estimate $\bx_i^t$ and
a local basis $B_i^t$, initialized to $X_i$.
The algorithm looks similar to Algorithm~\ref{ce:alg:constraints_consensus},
where the main difference is that general convex constraints are considered,
instead of linear ones. 

\begin{algorithm}
  \begin{algorithmic}[0]
  
    \Statex \textbf{Initialization}: $B_i^0 = X_i$
    \medskip

    \Statex \textbf{Evolution}: for $t=0,1,... $
    \smallskip
    
      \StatexIndent[0.5] \textbf{Gather} $B_j^t$ from neighbors $j \in \nbrs_i^t$

      \StatexIndent[0.5] \textbf{Compute} $\bx_i^{t+1}$ as the lex-optimal solution of
      \begin{align}
      \begin{split}
        \min_{\bx} \: & \: c^\top \bx
        \\
        \subj \: & \: \bx \in X_i
        \\
        & \: \bx \in B_i^t
        \\
        & \: \bx \in B_j^t, \hspace{0.5cm} j \in \nbrs_i^t
      \end{split}
      \label{ce:eq:local_prob_convex}
      \end{align}
      
      \StatexIndent[0.5] \textbf{Update} $B_i^{t+1}$ as a basis of~\eqref{ce:eq:local_prob_convex}

  \end{algorithmic}
  \caption{Constraints Consensus applied to convex problems}
  \label{ce:alg:constraints_consensus_convex}
\end{algorithm}

Note that, as in Algorithm~\ref{ce:alg:constraints_consensus}, we ask
processors to use a lexicographic solver to handle possible non-uniqueness
of the optimal solution.
Algorithm~\ref{ce:alg:constraints_consensus_convex} enjoys the same
convergence properties of Algorithm~\ref{ce:alg:constraints_consensus},
formalized next.
\begin{theorem}
  \label{ce:thm:convergence_convex}
  Let Assumption~\ref{ce:ass:CP} hold and let the communication graph be jointly
  strongly connected.
  Moreover, let $\bx^\star$ be the lex-optimal solution of problem~\eqref{ce:eq:convex_problem}.
  Consider the sequences $\{ \bx_i^t \}_{t \ge 0}, i \in \until{N}$, generated
  by Algorithm~\ref{ce:alg:constraints_consensus_convex}.
  Then, for all $i \in \until{N}$, the following holds:
  \begin{enumerate}
    \item the cost sequence $\{ c^\top \bx_i^t \}_{t \ge 0}$
      converges in finite time to the optimal cost $J^\star$ of~\eqref{ce:eq:convex_problem};
    \item the solution sequence $\{ \bx_i^t \}_{t \ge 0}$ converges in finite time to $\bx^\star$.~\oprocend
  \end{enumerate}
\end{theorem}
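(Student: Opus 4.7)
The plan is to follow the same three-step strategy employed in the proof of Theorem~\ref{ce:thm:convergence}, namely: (i) establish finite-time convergence of the local cost and solution sequences; (ii) prove consensus at convergence using joint strong connectivity; (iii) conclude optimality via feasibility plus the relaxation inequality. The arguments transfer almost verbatim from the linear case because the essential ingredients -- the relaxation structure of the local problem~\eqref{ce:eq:local_prob_convex} and the use of a lexicographic solver -- are present in Algorithm~\ref{ce:alg:constraints_consensus_convex} as well. Note that, compared to the LP case, no artificial bounding box $-M\1 \le \bx \le M\1$ is needed, since the sets $X_i$ are assumed compact (Assumption~\ref{ce:ass:CP}), so each local problem~\eqref{ce:eq:local_prob_convex} has a nonempty, compact feasible set and its lex-optimal solution is well defined.

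For step (i), I would argue monotonicity of $\{c^\top \bx_i^t\}_{t\geq 0}$ exactly as in Lemma~\ref{ce:lemma:local_convergence}: at round $t{+}1$, the local problem contains as a subset the basis $B_i^t$ obtained at round $t$, so the new lex-optimal cost cannot be smaller than the previous one. Boundedness of this cost sequence follows from compactness of $X_i$. The subtle point -- and the main place where the convex case departs from the LP case -- is justifying finite-time termination. In the LP case one invoked finitely many combinations of $d$ linear constraints; here one must appeal to the combinatorial/abstract-program framework (cf.~\cite{notarstefano2011distributed,agarwal2000randomized}): the number of distinct bases drawn from $\{X_1,\ldots,X_N\}$ is bounded by $\sum_{k=0}^{\delta}\binom{N}{k}$, where $\delta \le d$ is the combinatorial dimension of the problem. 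Since every value of $c^\top \bx_i^t$ coincides with the lex-optimal value associated to some such basis, the cost sequence takes only finitely many values and hence must stabilize in finite time. The lexicographic component-wise argument in the proof of Lemma~\ref{ce:lemma:local_convergence}(ii) then extends unchanged, yielding $\bx_i^t = \bar{\bx}_i$ for $t$ large, with $\bar{\bx}_i \in X_i$ by construction.

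For step (ii), the proof of Lemma~\ref{ce:lemma:consensus} does not use linearity at all -- it only uses joint strong connectivity and the fact that along each directed path, a receiving agent's subsequent lex-optimal cost is bounded below by the sender's current cost (because it optimizes over a set that contains the sender's basis). That reasoning carries over verbatim, giving $\bar{J}_i = \bar{J}_j =: \bar{J}$ and $\bar{\bx}_i = \bar{\bx}_j =: \bar{\bx}$ for all $i,j$. For step (iii), feasibility of $\bar{\bx}$ for the global problem~\eqref{ce:eq:convex_problem} follows from $\bar{\bx} \in X_i$ for all $i$ (since each agent's local problem includes its own $X_i$), hence $J^\star \le c^\top \bar{\bx} = \bar{J}$; the reverse inequality $\bar{J} \le J^\star$ follows because each local problem is a relaxation of~\eqref{ce:eq:convex_problem}. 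Finally, lex-optimality of $\bar{\bx}$ among all minimizers is obtained by the same contradiction argument as in Theorem~\ref{ce:thm:convergence}: if $\bx^\star \lexprec \bar{\bx}$, then $\bx^\star$ would have to violate some constraint in the basis $\bar{B}$ producing $\bar{\bx}$, contradicting feasibility of $\bx^\star$ for~\eqref{ce:eq:convex_problem}.

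The main obstacle is therefore not the dynamics of the algorithm but the combinatorial fact that justifies finite termination in the convex setting: one must invoke the finite combinatorial dimension $\delta$ of the underlying abstract program to guarantee that only finitely many bases (and thus finitely many candidate lex-optimal values) can be produced by the agents. Once this is in place, every other step of the LP proof transfers almost mechanically, and the compactness of the $X_i$ replaces the role played by the artificial bounding box $M$.
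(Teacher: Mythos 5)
Your proposal is correct and follows essentially the route the paper intends: the authors omit the proof precisely because it is the LP argument of Theorem~\ref{ce:thm:convergence} transferred to the convex setting, with compactness of the $X_i$ replacing the artificial bounding box. You also correctly isolate and fill in the one non-mechanical ingredient the paper glosses over, namely that finite-time stabilization of the cost (and of each lexicographic component) rests on there being only finitely many candidate bases, bounded via the combinatorial dimension $\delta \le d$ of the underlying abstract program.
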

Theorem~\ref{ce:thm:convergence_convex} can be proven by using arguments similar
to the ones in Theorem~\ref{ce:thm:convergence}, thus we omit the proof.

We highlight that, in practice, Algorithm~\ref{ce:alg:constraints_consensus_convex}
can be implemented when
the constraint sets $X_i$ are easy to communicate (e.g., when all of them
have the same parametric form and they only differ for the parameters).
In more difficult set-ups, polyhedral approximations of the local sets $X_i$
can be communicated instead (cf. Section~\ref{ce:sec:cutting_plane_consensus}).

\begin{remark}
\label{ce:remark:nonlinear_cost}
Algorithm~\ref{ce:alg:constraints_consensus_convex} can be properly adapted
to handle problems with nonlinear cost in the form
\begin{equation}
\begin{split}
  \min_{\bx} \: & \: f(\bx)
  \\
  \subj \: & \: \bx \in \bigcap_{i=1}^N X_i,
\end{split}
\label{ce:eq:convex_problem_nonlinearcost}
\end{equation}
with $\map{f}{\real^d}{\real}$ a convex cost function. By
resorting to the epigraph form of \eqref{ce:eq:convex_problem_nonlinearcost},
which is in the form~\eqref{ce:eq:convex_problem}, it can be shown that
Algorithm~\ref{ce:alg:constraints_consensus_convex} can be implemented by simply
replacing the linear function in the local problem~\eqref{ce:eq:local_prob_convex}
with the nonlinear one and by increasing the
maximum number of sets in the bases to $d+1$.  \oprocend
\end{remark}

The \ConstrCons/ algorithm can handle more general problems than~\eqref{ce:eq:convex_problem}.
Indeed, in \cite{notarstefano2011distributed}, the algorithm has been formulated
for general abstract programs (or LP-type problems),
which include, as a special case, problems~\eqref{ce:eq:LP}
and~\eqref{ce:eq:convex_problem}.
We do not give the technical details of abstract programs, but we only mention
that they are a generalization of linear programs, which capture numerous
geometric optimization problems such as, e.g., computation of the smallest enclosing
ball of a set of points.
When the combinatorial dimension of the problem is known, the distributed
algorithm \cite{notarstefano2011distributed} can be applied directly.
Otherwise, if the \emph{Helly number} of the problem is known,
one can use the results in \cite{amenta1994helly} to compute the
combinatorial dimension of the problem.

\section{Extensions}
\label{ce:sec:extensions}
In this section, we discuss extensions of the Constraints
Consensus algorithm.

\subsection{Cutting-plane Consensus}
\label{ce:sec:cutting_plane_consensus}
Let us consider again the convex program~\eqref{ce:eq:convex_problem}.
The \emph{Cutting-plane Consensus} algorithm \cite{burger2014polyhedral}
is an extension of Algorithm~\ref{ce:alg:constraints_consensus_convex},
in which outer approximations of the local constraint sets $X_i$ are communicated
(instead of the sets $X_i$ themselves).
There are several situations in which this approach is desirable, such as, e.g.,
\emph{(i)} when privacy must be preserved (so that agents do not want to
share their own constraint with the other nodes), \emph{(ii)} when it is expensive
to send $X_i$, \emph{(iii)} when there are infinitely many local constraints
(e.g., robust, semi-infinite programming).

The Cutting-plane Consensus algorithm is based on a successive refinement of
polyhedral approximations of the local sets $X_i$. In particular, agents repeatedly
solve linear programs of the form
\begin{align}
\begin{split}
  \min_\bx \: & \: c^\top \bx
  \\
  \subj \: & \: A \bx \le b,
\end{split}
\label{ce:eq:CPC_approximate_LP}
\end{align}
where the feasible set $\{\bx \in \real^d \mid A \bx \le b\}$ is a
(polyhedral) outer-approximation of $\bigcap_{i=1}^N X_i$.
It is constructed by generating and exchanging a particular
type of constraints, called \emph{cutting planes}.\footnote{A cutting plane is a
half space $h \triangleq \{ \bx \in \real^d \mid a^\top \bx \le b \}$
separating a query point $\bx_q \in \real^d$ from a set $X$, i.e., such that
$X \subset h$ and $\bx_q \notin h$.}

The evolution of the Cutting-plane Consensus algorithm can be roughly
summarized as follows.
Each agent $i$ first solves problem~\eqref{ce:eq:CPC_approximate_LP}
and finds an optimal solution $\bx_q$. Then, it checks whether
$\bx_q$ belongs to its own constraint set $X_i$. If so, it sends to neighbors
a basis associated to $\bx_q$ (in terms of the approximated constraints). If not, it generates
a new cutting plane, it computes an optimal solution of the new approximate
problem and sends a basis to neighbors.

Differently from the \ConstrCons/ algorithm, the Cutting-plane
Consensus algorithm does not enjoy finite-time convergence, but instead it
converges asymptotically. Also, we point out that the tie-break rule
used in \cite{burger2014polyhedral} (in case problem~\eqref{ce:eq:CPC_approximate_LP}
has multiple optimal solutions) consists of the minimal 2-norm solution,
instead of the lex-optimal solution.

\subsection{Distributed Mixed-Integer Linear Programming via Cut Generation and Constraint Exchange}
\label{eq:ce:sec:milp}
Mixed-integer linear programs (MILPs) are linear programs in which some
of the variables are constrained to be integer, i.e.,
\begin{align}
\begin{split}
  \min_\bx \: & \: c^\top \bx
  \\
  \subj \: & \: a_i^\top \bx \le b_i, \hspace{1cm} i \in \until{N},
  \\
  & \: \bx \in \integer^{d_Z} \times \real^{d_R},
\end{split}
\label{ce:eq:MILP}
\end{align}
where $d_Z$ and $d_R$ are the dimensions of the integer and real variables,
$d = d_Z +d_R$, $c \in \real^d$ and $a_i \in \real^d$, $b_i \in \real$ for all $i \in \until{N}$.

It is well known that MILPs are NP-hard problems, which makes problem~\eqref{ce:eq:MILP}
difficult to solve. In \cite{testa2017finite} and in
\cite{testa2019distributed} distributed algorithms are proposed, with
finite-time convergence, for the solution of problem~\eqref{ce:eq:MILP}. They
are based on a constraint exchange approach as in \ConstrCons/, but
  appropriate additional constraints (cutting planes, cf. also
  Section~\ref{ce:sec:cutting_plane_consensus}) are generated throughout the
  algorithm evolution.

Let $P \triangleq \{ \bx \in \real^d \mid a_i^\top \bx \le b_i \text{ for all } i\}$ denote
the polyhedron described by the inequality constraints of problem~\eqref{ce:eq:MILP}
and let $P_I \triangleq P \cap (\integer^{d_Z} \times \real^{d_R})$ denote the feasible set
of problem~\eqref{ce:eq:MILP}. An important feature of problem~\eqref{ce:eq:MILP} is that
it has the same optimal cost of the linear program
\begin{align}
\begin{split}
  \min_\bx \: & \: c^\top \bx
  \\
  \subj \: & \: \bx \in \text{conv}(P_I),
\end{split}
\label{ce:eq:MILP_conv}
\end{align}
where $\text{conv}(P_I)$ is the convex hull of $P_I$. Moreover, the optimal solution
set of problem~\eqref{ce:eq:MILP} is contained in the optimal solution set
of~\eqref{ce:eq:MILP_conv}.
In order to solve the original MILP~\eqref{ce:eq:MILP}, the algorithms in
\cite{testa2019distributed} produce successive approximations of $\text{conv}(P_I)$
by generating two types of cutting planes:
\emph{(i)} mixed-integer Gomory cuts and \emph{(ii)} cost-based cuts.
We do not provide the technical details on the algorithms, but we only point out
that, as in \ConstrCons/, the algorithms work under asynchronous and
unreliable communication and enjoy finite-time convergence.

\subsection{Other extensions}

In this subsection, we briefly mention other extensions of the algorithms
presented in this chapter.

Robust optimization is the field of optimization that considers
problems in which the problem data is uncertain. Typical approaches to tackle
an uncertain problem consider the worst case of the uncertain
parameters, giving rise to a \emph{semi-infinite} optimization problem, i.e., with an
infinite number of constraints.
In~\cite{chamanbaz2017randomized}, a distributed robust
optimization algorithm is proposed, which is a \emph{randomized} extension of the
\ConstrCons/ algorithm, to solve linear programs where the problem data is
subject to uncertainty.
The algorithm relies on a verification step (based on a random sampling of each
agent of its local uncertain constraint set), and on the deterministic solution
of a local version of the global semi-infinite problem.

In \cite{notarstefano2015core}, the authors considered a \emph{big-data} quadratic
programming set-up emerging in several learning problems for cyber-physical
networks, where the big-data keyword is due to the very high dimension of the
optimization variable and of the training samples.
For this class of big-data quadratic optimization problems, they proposed a
distributed algorithm, obtained as an extension of the \ConstrCons/ algorithm,
which solves the problem up to an arbitrary tolerance $\epsilon$.
The algorithm is based on the notion of core-set used in geometric optimization to
approximate the value function of a given set of points with a smaller subset of points.
From an optimization point of view, a subset of active constraints is identified,
whose number depends on the tolerance $\epsilon$. The resulting approximate solution
is such that an $\epsilon$-relaxation of the constraints guarantees no constraint violation.

Submodular optimization is a special class of combinatorial optimization
(in which the cost function is actually a \emph{set function}) arising in
several machine learning problems, but also in cooperative control of complex systems.
In~\cite{testa2018submodular}, a submodular minimization
problem is considered. Agents can evaluate the cost function only for those sets
including the agent itself. Then, by relying on a proper linear programming
reformulation of the submodular problem (involving a huge number of variables),
it is possible to devise a distributed algorithm based on a \emph{column generation}
approach, in which columns are generated through a local \emph{greedy} algorithm.

\section{Numerical Example}
\label{ce:sec:simulations}
In this section, we provide a numerical example of the \ConstrCons/
algorithm to highlight its main features. 

We consider a network of $N=30$ agents communicating over a fixed, directed,
strongly connected graph generated according to an Erd\H{o}s-R\'enyi random
model with parameter $p = 0.1$.

We focus on the soft-margin SVM problem introduced in Section~\ref{sec:setups_SVM},
where we consider a two-dimensional space, i.e., $d = 2$,
and we recall that each agent $i$ is assigned one
training sample $(p_i, \ell_i) \in \real^2 \times \{-1, 1\}$.
We suppose that the training samples are randomly picked from
two bivariate gaussian distributions with covariance matrix
equal to the identity matrix. A number of $15$ agents are assigned to the first
distribution, which has zero mean and is associated to the label $\ell_i = 1$,
while the remaining agents are assigned to the second distribution,
associated to the label $\ell_i = -1$ and with mean equal to $[3,2]^\top$.

The goal for agents is to agree on an optimal solution of
problem~\eqref{eq:soft_svm_problem}, which we recall here
\begin{align*}
\begin{split}
  \min_{w, b, \bxi} \:  & \: \frac{1}{2} w^\top w + C \smallsum_{i=1}^N \xi_i
  \\
  \subj \: & \: \ell_i ( w^\top p_i + b ) \ge 1 - \xi_i, \hspace{0.5cm} i \in\until{N},
  \\
  & \: \bxi \ge 0,
\end{split}
\end{align*}
where the parameter $C$ is set to $100$. In the following
we also denote the vector stacking all the optimization variables with $\bx$.
As discussed in Remark~\ref{ce:remark:nonlinear_cost}, in order to solve
problem~\eqref{eq:soft_svm_problem} with the \ConstrCons/
algorithm, we implement the local optimization problems in
Algorithm~\ref{ce:alg:constraints_consensus_convex} with the cost
function $f(\bx) = 1/2 w^\top w + C \sum_{i=1}^N \xi_i$ and we allow up to
$d+1$ constraints in the bases.
To solve the $\lexmin$ optimization in~\eqref{ce:alg:constraints_consensus_convex},
we solve a total of $d+1$ problems as follows. First, we obtain the optimal cost $f^\star$ of
the problem. Then we add to the problem the constraint $f(x) = f^\star$ (in order to force the optimal cost)
and we minimize the first component of the decision variable. We continue this procedure until we obtain 
the lex-optimal solution.
Moreover, artificial box constraints $-M \1 \le w, b, \bxi \le M\1$,
with $M = 10$ (which we verified to be sufficiently large for this problem),
are added to problem~\eqref{eq:soft_svm_problem}
in order to satisfy Assumption~\ref{ce:ass:CP}.

In our simulation, agents reached consensus on the lex-optimal solution
of problem~\eqref{eq:soft_svm_problem} in $10$ communication rounds,
as expected from the finite-time result of Theorem~\ref{ce:thm:convergence_convex}.
In Figure~\ref{ce:fig_cost_error} we show the convergence rate of
Algorithm~\ref{ce:alg:constraints_consensus_convex}.
In particular, we plot the difference between the cost of the solution estimates
and the optimal cost $J^\star$ of problem~\eqref{eq:soft_svm_problem}, i.e.,
$f(\bx_i^t) - J^\star$, for all $i$. Note that all the lines eventually approach
zero.
\begin{figure}[!htpb]
\centering
  \includegraphics[scale=1]{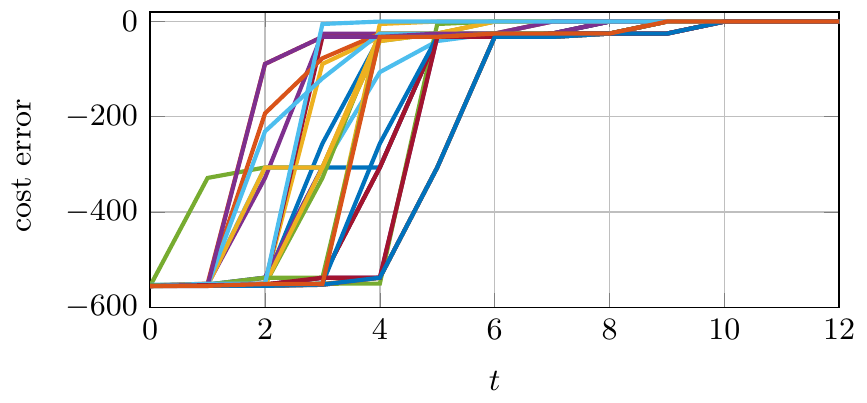}
\caption{
  Evolution of the cost error $f(\bx_i^t) - J^\star$ of local solution estimates $\bx_i^t$
  for the \ConstrCons/ algorithm. Each line refers to an agent in the network.}
  \label{ce:fig_cost_error}
\end{figure}

In Figure~\ref{ce:fig_max_constraint} we show the maximum constraint value
associated to the local solution estimates, i.e.,
for all $i$ we plot the quantity
\begin{align*}
  \max_{j \in \until{N}} \: \big[ 1 - \ell_j((w_i^t)^\top p_j + b_i^t) \big].
\end{align*}
Notice that the algorithm evolves in an outer-approximation fashion, that is,
the solution estimates are infeasible for problem~\eqref{eq:soft_svm_problem}
until the optimal solution is found. This can also be seen by noting in Figure~\ref{ce:fig_cost_error}
that the costs
associated to the intermediate solution estimates are lower than the optimal cost
of the problem.
\begin{figure}[!htpb]
\centering
  \includegraphics[scale=1]{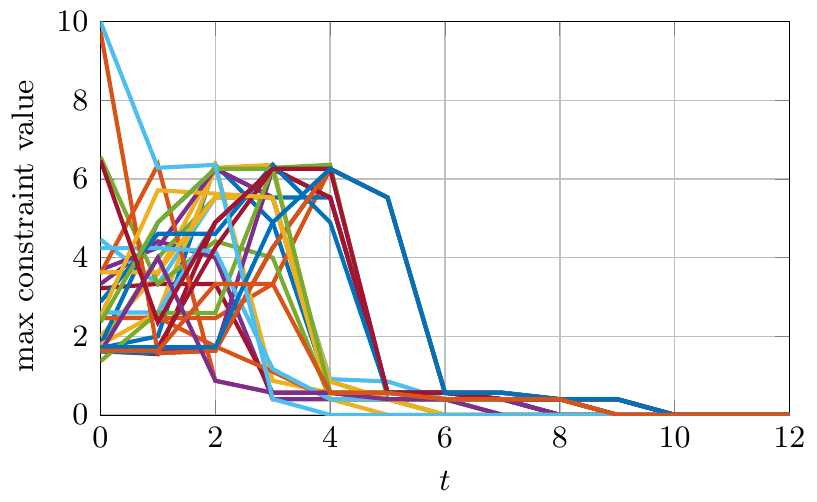}
\caption{
  Evolution of the maximum value of the constraints for the solution estimate
  $\bx_i^t$ computed by each agent in the \ConstrCons/ algorithm.
  Each line refers to an agent in the network.}
  \label{ce:fig_max_constraint}
\end{figure}

In Figure~\ref{ce:fig_solution_error} we show the distance of the local
solution estimates from the lex-optimal solution $\bx^\star$ of problem~\eqref{eq:soft_svm_problem},
i.e., $\|\bx_i^t - \bx^\star\|$, for all $i$.
\begin{figure}[!htpb]
\centering
  \includegraphics[scale=1]{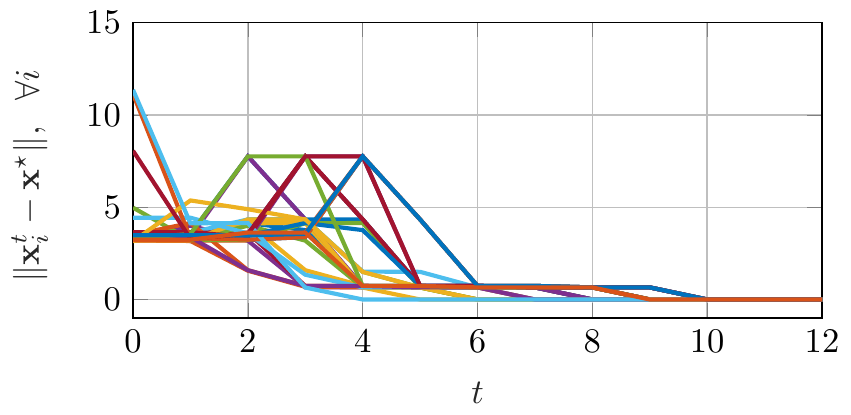}
\caption{
  Evolution of the distance $\|\bx_i^t - \bx^\star\|$ of the local solution estimates $\bx_i^t$
  from the lex-optimal solution $\bx^\star$ for the \ConstrCons/ algorithm. Each line refers to an agent in the network.}
  \label{ce:fig_solution_error}
\end{figure}

\fi

\iftrue

\chapter*{Concluding Remarks}
\addcontentsline{toc}{chapter}{Concluding Remarks}

In this survey, we considered a distributed optimization framework arising in
modern cyber-physical networks, in which computing units have only a partial
knowledge of a global optimization problem and must solve it through local
computation and communication without any central coordinator.
First, we introduced main optimization set-ups addressed in distributed
optimization (i.e., cost-coupled, common-cost, and constraint-coupled), and
motivated them with relevant estimation, learning, decision and control
applications arising in smart networks.
Then, we reviewed three main approaches to design distributed optimization
algorithms, namely (primal) consensus-based, duality-based and
constraint-exchange methods, and provided a theoretical analysis under
simplified communication assumptions and/or problem set-ups.
To highlight the behavior of the presented algorithms, the theoretical
results are also equipped with numerical examples.

\fi

\iftrue

\appendix
\chapter{Centralized Optimization Methods}
\label{app:basics}

\section{Gradient Method}
\label{app:gradient_method}

Consider the following unconstrained optimization problem
\begin{align}
  \min_{\bx \in \real^d} \: & \: f (\bx ),
  \label{app:gradient_problem}
\end{align}
where $\map{f}{\real^d}{\real}$.
The gradient method is an iterative algorithm given by
\begin{align}
  \bx^{t+1} = \bx^t - \gamma^t \nabla f( \bx^t ),
\label{app:gradient_iteration}
\end{align}
where $t\ge 0$ denotes the iteration counter and $\gamma^t $ is the
step-size.
The following result states the convergence of the gradient method for constant 
step-size.
\begin{proposition}[{\cite[Proposition 1.2.3]{bertsekas1999nonlinear}}]
\label{app:prop:gradient_method}
  Assume that $f$ is a $\CC^1$ function with Lipschitz continuous gradient $\nabla f$ with constant $L$.
  Let the step-size be constant, i.e., $\gamma^t = \gamma$, for all $t\ge0$, and such that
  $0 < \gamma < 2/L$. 
  Then, every limit point of the sequence $\{ \bx^t \}_{t\ge0}$ generated by the 
  gradient method~\eqref{app:gradient_iteration}, is a stationary point
  of problem~\eqref{app:gradient_problem}, i.e., there exists a subset of 
  indices $\KK \subseteq \natural$ such that
  \begin{align*}
  \lim_{ \KK \ni t \to \infty } 
  \| \bx^t - \bar{\bx} \| = 0,
  \end{align*}
  where $\bar{\bx}$ is a stationary point of~\eqref{app:gradient_problem}.\oprocend
\end{proposition}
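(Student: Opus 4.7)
The plan is to use the standard descent lemma argument, which quantifies how much the cost decreases at each gradient step. Specifically, I would first invoke the well-known consequence of Lipschitz continuity of $\nabla f$: for all $\bw, \bz \in \real^d$,
\begin{align*}
  f(\bw) \le f(\bz) + \nabla f(\bz)^\top (\bw - \bz) + \frac{L}{2}\|\bw - \bz\|^2.
\end{align*}
Applying this inequality with $\bw = \bx^{t+1}$ and $\bz = \bx^t$ and substituting the update rule $\bx^{t+1} - \bx^t = -\gamma \nabla f(\bx^t)$ yields
\begin{align*}
  f(\bx^{t+1}) \le f(\bx^t) - \gamma\Big(1 - \tfrac{L \gamma}{2}\Big) \|\nabla f(\bx^t)\|^2.
\end{align*}
The step-size condition $0 < \gamma < 2/L$ guarantees that the coefficient $\eta \triangleq \gamma(1 - L\gamma/2)$ is strictly positive, so the cost sequence $\{f(\bx^t)\}_{t \ge 0}$ is monotonically non-increasing.

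Next I would use the existence of a limit point $\bar{\bx}$ of $\{\bx^t\}_{t \ge 0}$, say along a subsequence indexed by $\KK \subseteq \natural$. By continuity of $f$, the values $f(\bx^t)$ along this subsequence converge to $f(\bar{\bx})$; combined with the monotonicity established above, this implies that the whole sequence $\{f(\bx^t)\}_{t \ge 0}$ converges to $f(\bar{\bx})$. Telescoping the descent inequality from $t=0$ to $T$ gives
\begin{align*}
  \eta \smallsum_{t=0}^{T} \|\nabla f(\bx^t)\|^2 \le f(\bx^0) - f(\bx^{T+1}),
\end{align*}
and the right-hand side is bounded (since $f(\bx^t) \to f(\bar{\bx})$). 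Therefore $\sum_{t=0}^\infty \|\nabla f(\bx^t)\|^2 < \infty$, which forces $\|\nabla f(\bx^t)\| \to 0$ as $t \to \infty$.

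Finally, restricting to the subsequence $\KK$ along which $\bx^t \to \bar{\bx}$, continuity of $\nabla f$ (implied by Lipschitz continuity) gives $\nabla f(\bar{\bx}) = \lim_{\KK \ni t \to \infty} \nabla f(\bx^t) = \0$, so $\bar{\bx}$ is a stationary point of~\eqref{app:gradient_problem}, concluding the proof. No step appears to be a significant obstacle; the main subtlety is simply ensuring that the lower boundedness of $\{f(\bx^t)\}$ needed to telescope the descent inequality is obtained from the existence of a limit point together with monotonicity, rather than assumed a priori.
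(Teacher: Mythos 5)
Your proof is correct. The paper itself gives no proof of this proposition, deferring entirely to the cited reference \cite[Proposition 1.2.3]{bertsekas1999nonlinear}, and your descent-lemma argument (sufficient decrease with coefficient $\gamma(1-L\gamma/2)>0$, convergence of $\{f(\bx^t)\}$ deduced from monotonicity plus a convergent subsequence, telescoping to get $\smallsum_t \|\nabla f(\bx^t)\|^2 < \infty$, and continuity of $\nabla f$ along the subsequence) is precisely the standard proof found there; your remark about obtaining lower boundedness of $\{f(\bx^t)\}$ from the existence of a limit point rather than assuming it is the right way to handle the only delicate point.
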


The previous result can be extended in several ways, e.g., with different step-size rules and
adapted to constrained problems. We refer the interested reader to 
\cite{bertsekas1999nonlinear} and references therein.

\section{Subgradient Method}
\label{app:subgradient_method}

Consider the following constrained optimization problem
\begin{align}
  \min_{\bx \in X} \: & \: f (\bx ),
  \label{app:subgradient_problem}
\end{align}
with $\map{f}{\real^d}{\real}$ a convex function and $X \subseteq \real^d$
a closed, convex set.

A vector $\widetilde{\nabla} f( \bx ) \in\real^d$ is called a subgradient of the 
convex function $f$ at $\bx \in\real^d$ if
\begin{align*}
  f(\by) 
  \ge 
  f(\bx) + \widetilde{\nabla} f( \bx ) ^\top  (\by - \bx)
\end{align*}
for all $\by \in \real^d$.
The (projected) subgradient method is the iterative algorithm
given by
\begin{align}
  \bx^{t+1} = \PP_X \Big( \bx^t - \gamma^t \widetilde{\nabla} f( \bx^t ) \Big) \: ,
\label{app:subgradient_iteration}
\end{align}
where $t\ge 0$ denotes the iteration counter, $\gamma^t $ is the
step-size, $\widetilde{\nabla} f( \bx^t )$ denotes a subgradient of $f$ at
$\bx^t $, and $\PP_X(\, \cdot\, )$ is the Euclidean projection onto $X$.

\begin{assumption}[Diminishing Step-size]
\label{app:stepsize_assumption}
The step-size sequence $\{ \gamma^t \}_{t\ge 0}$ is such that 
$\gamma^t \ge 0$ and satisfies
\begin{align*}
  \lim_{t\to\infty} \gamma^t = 0,
  \hspace{0.6cm}
  \smallsum_{t=0}^\infty \gamma^t = \infty,
  \hspace{0.6cm}
  \smallsum_{t=0}^\infty ( \gamma^t) ^2 < \infty.
  \eqoprocend
\end{align*}
\end{assumption}

The following proposition formally states the convergence of the subgradient 
method~\eqref{app:subgradient_iteration}. 
\begin{proposition}[{\cite[Proposition 3.2.6]{bertsekas2015convex}}]
\label{app:subgradient_convergence}
  Assume that all the subgradients of $f$ are bounded at each
  $\bx \in X$. Moreover, assume the optimal solution set of 
  problem~\eqref{app:subgradient_problem} is not empty.
  Let the step-size $\gamma^t$ satisfy Assumption~\ref{app:stepsize_assumption}. 
  Then, the sequence $\{ \bx^t \}_{t\ge0}$ generated by the 
  subgradient method~\eqref{app:subgradient_iteration} converges 
  to an optimal solution $\bx^\star$ of problem~\eqref{app:subgradient_problem}, i.e.,
  \begin{align*}
  \lim_{ t \to \infty } 
  \| \bx^t - \bx^\star\| = 0, \hspace{1cm} \lim_{ t \to \infty } \| f(\bx^t) - f^\star \| = 0.
  \eqoprocend
  \end{align*}
\end{proposition}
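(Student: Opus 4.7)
The plan is to follow a Fejér-monotone style argument with respect to the optimal solution set, combined with the supermartingale lemma already stated in the body of the paper (Lemma~\ref{lem:supermartingale}). Pick any $\bx^\star \in X^\star$ and define the candidate Lyapunov function $V^t \triangleq \|\bx^t - \bx^\star\|^2$. The first key step is to exploit the nonexpansiveness of the Euclidean projection onto the convex set $X$, which gives
\begin{align*}
  V^{t+1} = \| \PP_X( \bx^t - \gamma^t \widetilde{\nabla} f(\bx^t) ) - \PP_X(\bx^\star) \|^2
  \le \| \bx^t - \gamma^t \widetilde{\nabla} f(\bx^t) - \bx^\star \|^2.
\end{align*}
Expanding the right-hand side, applying the subgradient inequality $\widetilde{\nabla} f(\bx^t)^\top(\bx^t - \bx^\star) \ge f(\bx^t) - f^\star \ge 0$, and using the subgradient boundedness hypothesis $\|\widetilde{\nabla} f(\bx^t)\| \le C$ (for some $C > 0$), yields the recursive inequality
\begin{align*}
  V^{t+1} \le V^t - 2\gamma^t ( f(\bx^t) - f^\star ) + (\gamma^t)^2 C^2.
\end{align*}

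At this point I would invoke Lemma~\ref{lem:supermartingale} with $Y^t = V^t$, $W^t = 2\gamma^t(f(\bx^t) - f^\star) \ge 0$, and $Z^t = (\gamma^t)^2 C^2$. The square-summability of $\{\gamma^t\}$ in Assumption~\ref{app:stepsize_assumption} ensures $\sum_t Z^t < \infty$, and the lemma delivers two conclusions: $V^t$ converges to a finite value, and $\sum_{t=0}^\infty \gamma^t (f(\bx^t) - f^\star) < \infty$. Combining the latter with $\sum_t \gamma^t = \infty$ forces $\liminf_{t \to \infty} f(\bx^t) = f^\star$. Since $\{V^t\}$ converges, the sequence $\{\bx^t\}$ is bounded, so by continuity of $f$ (which is automatic as a convex function on $\real^d$) one can extract a subsequence $\{\bx^{t_k}\}$ converging to some limit $\bar{\bx}$ with $f(\bar{\bx}) = f^\star$, i.e., $\bar{\bx} \in X^\star$.

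The main obstacle, and the step requiring the most care, is upgrading subsequence convergence to convergence of the entire sequence. Here I would use the fact that the Lyapunov recursion above holds for \emph{every} fixed $\bx^\star \in X^\star$. Applying the argument with the particular choice $\bx^\star = \bar{\bx}$ shows that $\{\|\bx^t - \bar{\bx}\|^2\}_{t \ge 0}$ is itself convergent to a finite value; since the subsequence $\|\bx^{t_k} - \bar{\bx}\|^2 \to 0$, that limit must be $0$, so the full sequence $\bx^t$ converges to $\bar{\bx}$. Cost convergence $f(\bx^t) \to f^\star$ then follows by continuity of $f$ at $\bar{\bx}$. This ``Fejér-plus-subsequence'' trick is what turns a liminf statement on the cost into genuine iterate convergence, and it is the only nontrivial ingredient beyond the algebraic manipulation that produces the Lyapunov recursion.
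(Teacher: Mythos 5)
Your proof is correct and takes essentially the same approach the paper relies on: the paper itself never proves this proposition (it defers to the citation of Bertsekas), but your Fejér-monotone argument — the recursion $V^{t+1} \le V^t - 2\gamma^t\big(f(\bx^t)-f^\star\big) + (\gamma^t)^2 C^2$ obtained via projection nonexpansiveness and the subgradient inequality, the invocation of Lemma~\ref{lem:supermartingale}, the liminf step from $\sum_t \gamma^t = \infty$, and the re-application of the recursion at the subsequential limit $\bar{\bx}$ to upgrade to full-sequence convergence — is precisely the machinery the paper deploys in its own proof of the distributed subgradient theorem in Chapter~2, specialized to the centralized case. All the delicate points (nonnegativity of $W^t$, boundedness of the iterates from convergence of $V^t$, and closedness of $X$ together with continuity of $f$ to certify $\bar{\bx} \in X^\star$) are handled correctly, so there is no gap.
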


\section{Lagrangian Duality and Dual Subgradient Method}
\label{sec:appendix_duality}

Consider a constrained optimization problem, addressed as primal problem,
having the form
\begin{align}
\begin{split}
  \min_{ \bx \in X } \:& \: f(\bx)
  \\
  \subj \: & \: \bg (\bx) \le \0,
\end{split}
\label{eq:appendix_primal}
\end{align}
where $X \subseteq \real^d$ is a convex, compact set,
$\map{f}{\real^d}{\real}$ is a convex function and $\map{\bg}{\real^d}{\real^S}$
is such that each component $\map{\bg_s}{\real^d}{\real}$,
$s \in \until{S}$, is a convex (scalar) function.

The following optimization problem
\begin{align}
\begin{split}
  \max_{\bmu} \:& \: q(\bmu)
  \\
  \subj \: & \: \bmu \ge \0
\end{split}
\label{eq:appendix_dual}
\end{align}
is called the dual of problem~\eqref{eq:appendix_primal}, where
$\map{q}{\real^S}{\real}$ is obtained by minimizing with respect to $\bx \in X$
the Lagrangian function $\LL (\bx,\bmu) = f(\bx) + \bmu^\top \bg(\bx)$, i.e.,
$q(\bmu) = \min_{\bx \in X} \LL(\bx,\bmu)$.
It can be shown that the domain of $q$ (i.e., the set of $\bmu$ such that $q(\bmu) > -\infty$)
is convex and that $q$ is concave on its domain.
A vector $\bar{\bmu} \in\real^S$ is said to be a Lagrange multiplier if 
it holds $\bar{\bmu} \geq \0$ and
\begin{align*}
  \inf_{\bx \in X} \LL (\bx,\bar{\bmu}) 
  = \inf_{\bx\in X \: : \: \bg(\bx) \le \0} \, f(\bx).
\end{align*}

It can be shown that the following inequality holds \cite{bertsekas1999nonlinear}
\begin{align}
  \inf_{\bx \in X } \sup_{\bmu \geq \0} \LL(\bx, \bmu) \ge 
  \sup_{\bmu\geq \0} \inf_{\bx \in X } \LL(\bx,\bmu),
  \label{eq:appendix_weak_duality}
\end{align}
which is called weak duality.
When in~\eqref{eq:appendix_weak_duality} the equality holds, then we say
that strong duality holds and, thus, solving the primal
problem~\eqref{eq:appendix_primal} is equivalent to solving its dual
formulation~\eqref{eq:appendix_dual}. In this case the right-hand-side
problem in~\eqref{eq:appendix_weak_duality} is referred to as
\emph{saddle-point problem} of~\eqref{eq:appendix_primal}.

\begin{definition}
  A pair $(\bx^\star , \bmu^\star)$ is called a primal-dual optimal solution of
  problem~\eqref{eq:appendix_primal} if $\bx^\star \in X$ and
  $\bmu^\star \geq \0$, and $(\bx^\star , \bmu^\star)$ is a saddle point of the
  Lagrangian, i.e.,
  \begin{align*}
    \LL (\bx^\star , \bmu ) \le \LL (\bx^\star,\bmu^\star) \le \LL (\bx,\bmu^\star)
  \end{align*}
  for all $\bx \in X$ and $\bmu\geq \0$.\oprocend
\label{def:primal_dual_pair}
\end{definition}

Given the dual function $q$, an important property is as follows.
A subgradient of $-q$ at a given $\bar{\bmu}$ can be 
efficiently computed as $g ( \bar{\bx} )$,
where 
$\bar{\bx} = \argmin_{\bx \in X } \: f(\bx) + \bar{\bmu}^\top g(\bx)$
(see \cite[Section~6]{bertsekas1999nonlinear} for further details).
Then, a subgradient method to 
solve the dual problem~\eqref{eq:appendix_dual} reads
\begin{align*}
\begin{split}
  \bx^{t+1} & = \argmin_{\bx \in X } \: f(\bx) + ( \bmu^t)^\top g(\bx)
  \\
  \bmu^{t+1} & = \PP_{ \bmu \ge 0}  \Big( \bmu^t + \gamma^t g(\bx^{t+1}) \Big),
\end{split}
\end{align*}
where $\gamma^t$ is a suitable step-size and $\bmu^0 \ge 0$ is arbitrary.

\section{ADMM Algorithm}
\label{app:ADMM}

In this section, we review the Alternating Direction Method of Multipliers (ADMM) 
following~\cite[Section~3.4]{bertsekas1989parallel}.
Consider the following optimization problem
\begin{align}
\begin{split}
  \min_{\bx \in \real^d} \: & \:  G_1 (\bx) + G_2 (A \bx)
  \\
  \subj \: & \: \bx \in C_1, \: A\bx \in C_2,
\end{split}
\label{app:ADMM_problem_original}
\end{align}
where $\map{G_1}{\real^d}{\real}$ and $\map{G_2}{\real^S}{\real}$ are convex functions,
$A$ is a ${S \times d}$ matrix, and
$C_1 \subseteq \real^d$ and $C_2 \subseteq \real^S$ are nonempty, closed convex sets.
We assume that the optimal solution set $X^\star$ of problem~\eqref{app:ADMM_problem_original} 
is nonempty. Furthermore, either $C_1$ is bounded or else $A^\top A$ is invertible.

Problem~\eqref{app:ADMM_problem_original} can be equivalently rewritten as
\begin{align}
\label{app:ADMM_problem}
\begin{split}
  \min_{\bx \in \real^d, \bz\in\real^S} \: & \:  G_1 (\bx) + G_2 (\bz)
  \\
  \subj \: 
  & \: A \bx = \bz,
  \\
  & \: \bx \in C_1, \: \bz \in C_2.
\end{split}
\end{align}

Let $\blambda\in \real^S$ be a multiplier associated to the equality constraint $A \bx = \bz$
and introduce the \emph{augmented} Lagrangian of problem~\eqref{app:ADMM_problem}
\begin{align*}
  \LL_\rho ( \bx, \bz, \blambda) 
  = 
  G_1 (\bx) + G_2 (\bz) + \blambda^\top ( A \bx - \bz) + \frac{\rho}{2} \| A \bx - \bz \|^2
\end{align*}
where $\rho > 0$ is a penalty parameter.
The ADMM algorithm is an iterative procedure
in which at each iteration $t\ge0$, the following steps are performed
\begin{subequations}
\label{app:ADMM_algorithm}
\begin{align}
  \label{app:ADMM_x}
  \bx^{t+1} & = \argmin_{\bx \in C_1} \, \LL_\rho (\bx, \bz^t,\blambda^t)
  \\
  \label{app:ADMM_z}
  \bz^{t+1} & = \argmin_{\bz \in C_2} \, \LL_\rho (\bx^{t+1} , \bz,\blambda^t)
  \\
  \label{app:ADMM_lambda}
  \blambda^{t+1} & = \blambda^t + \rho \, ( A \bx^{t+1} - \bz^{t+1}),
\end{align}
\end{subequations}
where the initialization of the variables $\bz^0$ and $\blambda^0$ can be arbitrary.

The ADMM algorithm is very similar to dual ascent and to the Method of Multipliers (MM): it consists of 
an $\bx$-minimization, a $\bz$-minimization, and a dual variable update. As in the method of 
multipliers, the dual variable update uses a step-size equal to the augmented Lagrangian parameter $\rho$.
In the MM, the augmented Lagrangian $\LL_\rho$ is minimized jointly with respect to the two primal 
variables. In ADMM, on the other hand, $\bx$ and $\bz$ are updated in an alternating or sequential 
fashion, which accounts for the term \emph{alternating direction}.

\begin{proposition}[{\cite[Proposition~4.2]{bertsekas1989parallel}}]
\label{app:prop:ADMM_convergence}
  Consider a sequence $\{ \bx^t, \bz^t, \blambda^t \}_{t\ge0}$ generated by the ADMM 
  algorithm~\eqref{app:ADMM_algorithm}. Then, the generated sequence is bounded and
  every limit point of $\{ \bx^t\}_{t\ge0}$ is an optimal solution of problem~\eqref{app:ADMM_problem_original}.
  Furthermore, the sequence $\{\blambda^t \}_{t\ge0}$ converges to an optimal solution
  of the dual of problem~\eqref{app:ADMM_problem_original}.\oprocend
\end{proposition}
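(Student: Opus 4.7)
The plan is to fix a primal-dual optimal triple $(\bx^\star,\bz^\star,\blambda^\star)$ of problem~\eqref{app:ADMM_problem} (which exists by the assumed nonemptyness of $X^\star$ and standard Lagrangian theory for convex problems with linear equality constraints), and to track the algorithm's behavior through the Lyapunov-type quantity
\begin{align*}
  V^t \triangleq \frac{1}{\rho}\|\blambda^t - \blambda^\star\|^2 + \rho\|\bz^t - \bz^\star\|^2.
\end{align*}
The target is to prove the one-step inequality
\begin{align*}
  V^{t+1} \le V^t - \rho\|A\bx^{t+1} - \bz^{t+1}\|^2 - \rho\|\bz^{t+1} - \bz^t\|^2,
\end{align*}
which is the classical ``two-residual'' descent estimate for ADMM.

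To derive this, I would write the first-order optimality conditions for the two minimization steps~\eqref{app:ADMM_x}--\eqref{app:ADMM_z}. The $\bz$-update gives $\blambda^{t+1} \in \partial G_2(\bz^{t+1}) + N_{C_2}(\bz^{t+1})$ after using the dual update~\eqref{app:ADMM_lambda} to simplify, while the $\bx$-update gives $-A^\top\bigl(\blambda^{t+1} + \rho(\bz^{t+1}-\bz^t)\bigr) \in \partial G_1(\bx^{t+1}) + N_{C_1}(\bx^{t+1})$, again after substituting~\eqref{app:ADMM_lambda}. Pairing these subgradient inclusions against $\bx^{t+1}-\bx^\star$ and $\bz^{t+1}-\bz^\star$, using convexity of $G_1,G_2$ and the analogous inclusions at the optimum $(-A^\top\blambda^\star \in \partial G_1(\bx^\star)+N_{C_1}(\bx^\star)$ and $\blambda^\star \in \partial G_2(\bz^\star)+N_{C_2}(\bz^\star))$, produces a variational inequality whose cross-terms can be rearranged using $A\bx^\star = \bz^\star$ and $\blambda^{t+1}-\blambda^t = \rho(A\bx^{t+1}-\bz^{t+1})$. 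Completing squares then yields the descent inequality above. The main algebraic obstacle is bookkeeping the cross terms linking $A\bx^{t+1}$, $\bz^{t+1}-\bz^t$, $\blambda^{t+1}-\blambda^\star$, and $\bz^{t+1}-\bz^\star$ so that the quadratic residuals emerge with the correct signs; this is the place where the update order and the specific step-size $\rho$ in~\eqref{app:ADMM_lambda} matter crucially.

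From the descent inequality one immediately concludes that $\{V^t\}$ is nonincreasing and nonnegative, hence $\{\blambda^t\}$ and $\{\bz^t\}$ are bounded, and that
\begin{align*}
  \smallsum_{t=0}^\infty \bigl(\|A\bx^{t+1}-\bz^{t+1}\|^2 + \|\bz^{t+1}-\bz^t\|^2\bigr) < \infty,
\end{align*}
so in particular the primal residual $A\bx^{t+1}-\bz^{t+1}\to\0$ and the successive difference $\bz^{t+1}-\bz^t\to\0$. Boundedness of $\{\bx^t\}$ follows from either $C_1$ being bounded or, alternatively, from $A^\top A$ being invertible together with boundedness of $\{\bz^t\}$ and $\{A\bx^{t+1}-\bz^{t+1}\}$; this dichotomy is precisely why the problem excerpt imposes the standing assumption on $C_1$ and $A^\top A$.

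To finish, let $\bar{\bx}$ be any limit point of $\{\bx^t\}$, say along a subsequence along which $\bz^{t_k}\to\bar{\bz}$ and $\blambda^{t_k}\to\bar{\blambda}$ (extract further subsequences using boundedness). Passing to the limit in the optimality conditions of~\eqref{app:ADMM_x}--\eqref{app:ADMM_z} and using $A\bx^{t+1}-\bz^{t+1}\to\0$ and $\bz^{t+1}-\bz^t\to\0$, together with closedness of the subdifferentials and normal cones for closed proper convex functions and closed convex sets, yields $A\bar{\bx}=\bar{\bz}$, $-A^\top\bar{\blambda}\in \partial G_1(\bar{\bx})+N_{C_1}(\bar{\bx})$, $\bar{\blambda}\in \partial G_2(\bar{\bz})+N_{C_2}(\bar{\bz})$, which are the KKT conditions for~\eqref{app:ADMM_problem}. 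Hence $\bar{\bx}$ is optimal for~\eqref{app:ADMM_problem_original} and $\bar{\blambda}$ is dual optimal. The subtle final step — and the part I expect to be hardest — is upgrading the \emph{subsequential} limit of $\{\blambda^t\}$ to a genuine limit. The trick is to reapply the descent inequality using the new optimal pair $(\bar{\bx},\bar{\bz},\bar{\blambda})$ in place of $(\bx^\star,\bz^\star,\blambda^\star)$: this shows the corresponding $\bar{V}^t$ is monotonically decreasing, and since it has zero as a subsequential limit it must converge to zero, which forces $\blambda^t\to\bar{\blambda}$ and $\bz^t\to\bar{\bz}$.
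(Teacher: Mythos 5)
Your proposal is correct, but there is nothing in the paper to compare it against: the paper does not prove this proposition at all — it is quoted verbatim from Bertsekas and Tsitsiklis (Proposition 4.2 of the cited book) and used as a black box to establish convergence of the distributed ADMM algorithm. Your argument is the standard descent-function proof of ADMM convergence (the one in Boyd et al.'s monograph, and essentially the argument behind the cited result): the subgradient inclusions you extract from the two minimization steps are the right ones; the Lyapunov inequality $V^{t+1} \le V^t - \rho\|A\bx^{t+1}-\bz^{t+1}\|^2 - \rho\|\bz^{t+1}-\bz^t\|^2$ is the correct two-residual estimate for this formulation (the ``$B$'' matrix of the general form~\eqref{app:ADMM_problem_Boyd} being $-I$ here); the dichotomy ($C_1$ bounded, or $A^\top A$ invertible combined with boundedness of $\{\bz^t\}$ and of the residual) does yield boundedness of $\{\bx^t\}$; and the Opial-type trick of re-instantiating the descent inequality at a subsequential limit $(\bar{\bx},\bar{\bz},\bar{\blambda})$ is exactly what upgrades subsequential convergence of $\{\blambda^t\}$ to genuine convergence.

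Two caveats worth making explicit. First, your opening step --- ``fix a primal-dual optimal triple \dots\ which exists by standard Lagrangian theory'' --- does not follow from the stated assumptions alone: nonemptiness of $X^\star$ plus convexity does not guarantee existence of a Lagrange multiplier for the constraint $A\bx = \bz$; one needs a constraint qualification (a relative-interior/Slater-type condition) or must assume a saddle point outright, as most treatments do. Since the proposition's own conclusion asserts that $\{\blambda^t\}$ converges to a dual optimal solution, existence of such a multiplier is implicitly part of the hypotheses, but your proof should state that it is being assumed rather than derived. Second, when passing to the limit in the optimality inclusions along the subsequence, you need closedness of the graph of $\partial G_1 + N_{C_1}$ (and of $\partial G_2 + N_{C_2}$); this does hold here because $G_1$ and $G_2$ are finite-valued convex, hence their subdifferentials are locally bounded with closed graph, but the sum of two closed operators is not closed in general, so this one-line justification belongs in the proof.
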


In~\cite{boyd2011distributed} a more general problem set-up for ADMM is considered.
Specifically, let us consider a two-variable problem defined as
  \begin{align}
  \label{app:ADMM_problem_Boyd}
  \begin{split}
    \min_{\bx \in \real^d, \bz\in\real^S} \: & \:  G_1 (\bx) + G_2 (\bz)
    \\
    \subj \: 
    & \: A \bx + B \bz + c = 0
    \\
    & \: \bx \in C_1, \: \bz \in C_2.
  \end{split}
  \end{align}
  with $A\in\real^{p\times d}$, $B\in\real^{p\times S} $ and $c\in\real^{p\times 1}$. 
  Then, the ADMM algorithm applied to problem~\eqref{app:ADMM_problem_Boyd} 
  reads as follows
  \begin{subequations}
  	\label{app:ADMM_algorithm_Boyd}
	\begin{align}
	  \bx^{t+1} & = \argmin_{\bx \in C_1} \, \LL_\rho (\bx, \bz^t,\blambda^t)
	  \\
	  \bz^{t+1} & = \argmin_{\bz \in C_2} \, \LL_\rho (\bx^{t+1} , \bz,\blambda^t)
	  \\
	  \blambda^{t+1} & = \blambda^t + \rho \, ( A \bx^{t+1} + B \bz^{t+1} + c),
	\end{align}
  \end{subequations}
	where the augmented Lagrangian is defined as
	\begin{align*}
	  \LL_\rho (\bx, \bz,\blambda) 
	  & =
    G_1 (\bx) + G_2 (\bz) + \blambda^\top ( A \bx + B \bz + c) 
    + \frac{\rho}{2} \| A \bx + B \bz + c \|^2.
	\end{align*}

\chapter{Consensus Over Networks}
\label{sec:consensus_appendix}

Consensus and distributed averaging are fundamental building blocks in 
distributed optimization.

We introduce the consensus problem for a group of $N$ agents that considers conditions under 
which, using a certain message-passing protocol, the local variables of each agent
converge to the same value. There exist several
results related to the convergence of local variables to a common
value using various information exchange protocols among agents. 

\section{Average Consensus over Static Networks}
\label{sec:consensus_static_network}
One of the most used models for consensus is based on the following 
discrete-time iteration: to generate an estimate at iteration $t + 1$, 
agent $i$ forms a convex combination of its current estimate $\bz_i^t$ with the estimates 
received from other agents as
\begin{align}
\label{app:average_consensus}
  \bz_i^{t+1} = \smallsum_{j\in\nbrs_i} a_{ij} \, \bz_j^t,
\end{align}
where $a_{ij}$ denotes a (positive) weight that agent $i$ assigns to each
neighbor $j$, and we recall that $\nbrs_i$ is the set of neighbors of agent $i$ in
the (static) undirected communication graph.
The weights $a_{ij}$ are set to zero if $i$ and $j$ are not neighbors in the communication graph $\GG$ and are 
doubly stochastic, i.e., they satisfy $\sum_{j=1}^N a_{ij} = 1$, for all $i\in\until{N}$,
and $\sum_{i=1}^N a_{ij} = 1$, for all $j\in\until{N}$.

The consensus algorithm can be written in an aggregate form by stacking all the agents'
estimates in a single variable which evolves according to
\begin{align}
\label{app:average_consensus_aggregated}
  \bz^{t+1} 
  =
  \begin{bmatrix}
  \bz_1^{t+1} 
  \\
  \vdots
  \\
  \bz_N^{t+1} 
  \end{bmatrix}
  = A \bz^t,
\end{align}
where $A$ is a matrix whose $(i,j)$-th entry is $a_{ij}$ for all $i,j\in\until{N}$.

A useful property of doubly stochastic matrices is the following. Given $A$ 
doubly stochastic, it holds
\begin{align*}
  \| A \bz - \avgz \|
  \le
  \sigma_A 
  \| \bz - \avgz \|,
\end{align*}
where $\avgz \triangleq \frac{1}{N} \sum_{i=1}^N \bz_i$ and $\sigma_A$ is the 
spectral radius of $A - \1\1^\top/N$. It can be proven (see \cite{xiao2004fast})
that if the graph is connected and
$A$ is doubly stochastic, then $\sigma_A \in (0,1)$, and specifically
$\sigma_A = \max\{ |\lambda_2|, |\lambda_N| \}$,
where $\lambda_h$ denotes the $h$-th largest eigenvalue of $A$.

\begin{theorem}
  Let $\GG$ be a connected graph and let $a_{ij}$, $i,j\in\until{N}$
  be doubly stochastic weights matching the graph.
  Then, the sequences $\{\bz_i^t\}_{t\ge0}$, $i\in\until{N}$,
  generated by~\eqref{app:average_consensus} satisfy
  \begin{align*}
    \lim_{t\to\infty} \| \bz_i^t - \avgz^0 \| = 0,
  \end{align*}
  for all $i\in\until{N}$, where $\avgz^0 = \frac{1}{N} \sum_{i=1}^N \bz_i^0$.~\oprocend
\end{theorem}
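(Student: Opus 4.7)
The plan is to work with the aggregate form \eqref{app:average_consensus_aggregated} and combine two facts: (i) the network-wide average is a conserved quantity of the dynamics, and (ii) the deviation from this average contracts geometrically. Both facts are essentially handed to us by the double stochasticity of $A$ together with connectivity of $\GG$, so the proof is short once these ingredients are properly identified.

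First, I would show that $\avgz^t \triangleq \frac{1}{N}\sum_{i=1}^N \bz_i^t$ is invariant along the iteration. Stacking the local states into $\bz^t$, we have $\bz^{t+1} = A \bz^t$, and since $A$ is column-stochastic (i.e., $\1^\top A = \1^\top$), it follows that
\begin{align*}
  \1^\top \bz^{t+1} = \1^\top A \bz^t = \1^\top \bz^t,
\end{align*}
so $\avgz^t = \avgz^0$ for all $t \ge 0$. This identifies the only possible consensus value: if the agents converge to a common value, it must be $\avgz^0$.

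Next, I would invoke the contraction property stated just before the theorem: for any doubly stochastic $A$ associated with a connected graph, $\|A\bz - \avgz \1\| \le \sigma_A \|\bz - \avgz\1\|$ with $\sigma_A \in (0,1)$. Applying this to $\bz^t$ and using the invariance $\avgz^t = \avgz^0$, I would get
\begin{align*}
  \|\bz^{t+1} - \avgz^0 \1\| = \|A\bz^t - \avgz^t \1\| \le \sigma_A \|\bz^t - \avgz^0 \1\|.
\end{align*}
Iterating yields $\|\bz^t - \avgz^0 \1\| \le \sigma_A^t \|\bz^0 - \avgz^0 \1\|$, and since $\sigma_A \in (0,1)$, the right-hand side tends to zero. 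Because $|\bz_i^t - \avgz^0| \le \|\bz^t - \avgz^0 \1\|$ for each component $i$, the per-agent convergence $\lim_{t\to\infty} \|\bz_i^t - \avgz^0\| = 0$ follows immediately.

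There is no real obstacle here, since the only nontrivial ingredient, the strict bound $\sigma_A < 1$ under connectivity, is already quoted from \cite{xiao2004fast} in the appendix text. If I wanted to be fully self-contained the mild work would be establishing $\sigma_A < 1$: one writes $A - \1\1^\top/N$ and observes that its eigenvalues are exactly the eigenvalues of $A$ except that the simple Perron eigenvalue $1$ (guaranteed simple by connectivity and the positive diagonal convention typically assumed in consensus matrices) is shifted to $0$, leaving all remaining eigenvalues strictly inside the unit disk. But since this is stated as given, the proof reduces essentially to the invariance-plus-contraction argument above.
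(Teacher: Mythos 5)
Your proof is correct and follows exactly the route the paper sets up: the appendix states the contraction inequality $\|A\bz - \avgz\1\| \le \sigma_A\|\bz - \avgz\1\|$ with $\sigma_A \in (0,1)$ immediately before the theorem precisely so that the result follows from average invariance ($\1^\top A = \1^\top$) plus geometric contraction, which is what you do. The paper gives no further argument beyond this, so your write-up (including the remark that the only real content is $\sigma_A < 1$, quoted from the cited reference) matches the intended proof.
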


Several extensions of the basic consensus algorithm~\eqref{app:average_consensus} exist. 
For instance, one can consider time-varying networks that have some long-term 
connectivity properties. The consensus algorithm needs to be adapted to 
accommodate the time-varying network by considering time-varying weights 
$a_{ij}^t$. Also, it is possible to design a consensus algorithm that works under 
delays and is robust to packet losses. See~\cite{hadjicostis2018distributed} for a recent
survey on this topic.
Next, we describe another extension in which the consensus algorithm is tailored for directed 
networks.

\section{Push-sum Consensus over Directed Networks}
\label{sec:consensus_push-sum}
In this section we describe how the average consensus algorithm can be adapted 
to work on directed networks. This algorithm is known as push-sum algorithm 
and has been introduced in~\cite{benezit2010weighted}.

In directed networks is not always possible to construct a doubly stochastic 
matrix $A$, while a column stochastic matrix is always available. We use
$B$ to denote a column stochastic matrix, i.e., such that $\1^\top B = \1^\top$.
Formally, the push-sum consensus reads
\begin{subequations}
\label{app:push_sum_consensus}
\begin{align}
  \phi_i^{t+1} & = \smallsum_{j\in\nbrs_i} b_{ij} \, \phi_j^t
  \\
  \bs_i^{t+1} & = \smallsum_{j\in\nbrs_i} b_{ij} \, \bs_j^t
  \\
  \bz_i^{t+1} & = \frac{ \bs_i^{t+1}}{\phi_i^{t+1}},
\end{align}
\end{subequations}
with the initial values $\phi_i^0 = 1$ for all $i\in\until{N}$.

The convergence of this scheme has been proven in~\cite{benezit2010weighted}, 
i.e., the sequences $\{\bz_i^t\}_{t\ge0}$, $i\in\until{N}$, generated 
by~\eqref{app:push_sum_consensus} satisfy
\begin{align*}
  \lim_{t\to\infty}
  \| \bz_i^t - \avgz^0 \| = 0,
\end{align*}
for all $i\in\until{N}$, where $\avgz^0 = \frac{1}{N} \sum_{i=1}^N \bz_i^0$.

\section{Dynamic Average Consensus Algorithm}
\label{sec:dynamic_average_consensus}

In this section, we present a distributed algorithm to achieve dynamic average consensus
that has been proposed in~\cite{zhu2010discrete}. See also \cite{kia2018tutorial} for
a very recent tutorial. 

We consider a network of $N$ agents in which each agent $i$ 
is able to measure a local discrete-time signal $\{ \br_i^t \}_{t \ge 0}$.
The goal is to design a distributed algorithm that enables agents 
to eventually track the average of their signal $\br_i^t$, $i \in\until{N}$, 
by means of local communication only.

The dynamic consensus algorithm proposed in~\cite{zhu2010discrete} consists in a
consensus-based procedure in which each agent maintains a local estimate $\bz_i^t$ of the average.
The local estimate is iteratively updated according to
\begin{align}
\label{app:running_consensus_alg}
  \bz_i^{t + 1} 
  = 
  \smallsum_{j\in\nbrs_i} a_{ij} \, \bz_j^t + \big( \br_i^{t+1} - \br_i^t \big),
\end{align}
where $a_{ij}$ are entries of a doubly stochastic matrix.

If the input signals $\br_i^t$ asymptotically converge to a constant value, 
then the dynamic average consensus algorithm in~\eqref{app:running_consensus_alg} 
is guaranteed to converge, 
i.e., for all $i\in\until{N}$, it holds
\begin{align*}
  \lim_{t\to\infty} \| \bz_i^t - \bar{\br}^t \| = 0,
\end{align*}
where $\bar{\br}^t = \frac{1}{N} \sum_{i=1}^N \br_i^t$ for all $t\ge 0$.

The interested reader can find a rigorous treatment and a more
comprehensive discussion on this class of algorithms in~\cite{zhu2010discrete,kia2018tutorial}.

\chapter{Linear Programming}
\label{sec:LP}

A Linear Program (LP) is an optimization problem with linear cost function
and linear constraints:
\begin{align}
\begin{split}
  \min_\bx \: & \: c^\top \bx
  \\
  \subj \: & \: a_k^\top \bx \le b_k, \hspace{0.5cm} k \in \until{K},
\end{split}
\label{LP:eq:LP_general}
\end{align}
where $c \in \real^d$ is the cost vector and $a_k \in \real^d$ and
$b_k \in \real$ describe $K$ inequality constraints.
In the subsequent discussion, we assume that $d \le K$.
The feasible set $\XX$ of problem~\eqref{LP:eq:LP_general} is
the set of vectors satisfying all the constraints, i.e.,
\begin{align*}
  \XX \triangleq \{ \bx \in \real^d \mid a_k^\top \bx \le b_k \text{ for all } k \in \until{K} \}.
\end{align*}
Note that $\XX$ is a polyhedron, for which the following definition of
vertex can be given.
\begin{definition}
	A vector $\tilde{\bx} \in \real^d$ is a vertex of $\XX$ if there exists some
	$c \in \real^d$ 	such that $c^\top \tilde{\bx} < c^\top \bx$ for all
	$\bx \in \XX$ with $\bx \ne \tilde{\bx}$.
	\oprocend
\end{definition}
If problem~\eqref{LP:eq:LP_general} admits an optimal solution, it can be shown that
there exists an optimal vertex, i.e., a vertex which is an optimal solution of the problem
(see, e.g., \cite[Theorem 2.7]{bertsimas1997introduction}).
Let $\bx^\star$ be an optimal vertex of problem~\eqref{LP:eq:LP_general}.
Then, it is a standard result in linear programming theory that there exists an index set
$\{\ell_1, \ldots, \ell_d\} \subset \until{K}$, with cardinality $d$,
such that $\bx^\star$ is the unique optimal vertex of the problem
\begin{align*}
\begin{split}
  \min_\bx \: & \: c^\top \bx
  \\
  \subj \: & \: a_{\ell_h}^\top \bx \le b_{\ell_h}, \hspace{0.5cm} h \in \until{d},
\end{split}
\end{align*}
which is a relaxed version of problem~\eqref{LP:eq:LP_general} in which
only $d$ constraints are considered. In addition, the vectors $a_{\ell_h},
h \in \until{d}$ are linearly independent, so that they form a basis of $\real^d$.
By analogy, the constraints $a_{\ell_h}^\top \bx \le b_{\ell_h}, h \in \until{d}$
are called a \emph{basis} of the point $\bx^\star$. Due to the optimality of $\bx^\star$, we call it also a
basis of problem~\eqref{LP:eq:LP_general}.
To compactly denote such basis, we introduce a matrix $P \in \real^{d \times d}$,
obtained by stacking the row vectors $a_{\ell_h}^\top$, and a vector $q \in \real^d$,
obtained by stacking the scalars $b_{\ell_h}$, i.e.,
\begin{align*}
  P = \begin{bmatrix}
    a_{\ell_1}^\top
    \\
    \vdots
    \\
    a_{\ell_d}^\top
  \end{bmatrix},
  \hspace{0.5cm}
  q = \begin{bmatrix}
    b_{\ell_1}
    \\
    \vdots
    \\
    b_{\ell_d}
  \end{bmatrix}.
\end{align*}
Then, $\bx^\star = P^{-1} q$, and we say that the tuple $(P, q)$ is a basis of~\eqref{LP:eq:LP_general}.

If problem~\eqref{LP:eq:LP_general} has multiple optimal solutions,
we say that the LP is \emph{dual degenerate}.
In presence of dual degeneracy, it is not trivial to guarantee convergence
of distributed algorithms to the same optimal solution.
In order to overcome this issue, it is possible to rely on
the lexicographic ordering of vectors. We now give some definitions.
\begin{definition}
  A vector $\bv \in \real^n$ is said to be \emph{lexicographically positive}
  (or \emph{lex-positive}) if $\bv \ne \0$ and the first non-zero component
  of $\bv$ is positive. In symbols:
  \begin{align*}
    \bu \lexsucc \0.
  \end{align*}
  
  A vector $\bu \in \real^n$ is said to be \emph{lexicographically larger} (resp.
  \emph{smaller}) than another vector $\bv \in \real^n$ if $\bu - \bv$ is lex-positive
  (resp. $\bv - \bu$ is lex-positive), or, equivalently, if $\bu \ne \bv$ and the first nonzero
  component of $\bu - \bv$ is positive (resp., negative). In symbols:
  \begin{align*}
    \bu \lexsucc \bv
    \hspace{0.5cm}
    \text{or}
    \hspace{0.5cm}
    \bu \lexprec \bv.
  \end{align*}
  
  Given a set of vectors $\{ \bv_1, \ldots, \bv_r \}$, the lexicographic minimum
  is the element $\bv_i$ such that $\bv_j \lexsucc \bv_i$ for all $j \ne i$.
  In symbols:
  \begin{align*}
    \bv_i = \lexmin \{ \bv_1, \ldots, \bv_r \}.
    \eqoprocend
  \end{align*}
\end{definition}

Now, consider the optimal solution set of problem~\eqref{LP:eq:LP_general}, i.e.,
$\XX^\star \triangleq \{ \bx \in \XX \mid c^\top \bx \le c^\top \bx^\prime
\text{ for all } \bx^\prime \in \XX \} \subseteq \XX$,
where $\XX$ is the feasible set of problem~\eqref{LP:eq:LP_general}.
Among all the optimal solutions in $\XX^\star$, it is possible to
compute the lexicographically minimal one, i.e., $\lexmin (\SS^\star)$.
It turns out that finding $\lexmin (\SS^\star)$ is equivalent to finding the (unique)
optimal solution to a modified (non dual-degenerate) version of the original
problem~\eqref{LP:eq:LP_general}, where the cost vector $c$ is perturbed
to $c^\prime = c + \Delta$, with $\Delta$ a lexicographic perturbation vector:
\begin{align*}
  \Delta^\top = [ \Delta_0 \:\: \Delta_0^2 \:\: \ldots \:\: \Delta_0^d ],
\end{align*}
for a sufficiently small $\Delta_0 > 0$ (see \cite{jones2007lexicographic}).
Therefore, the lex-optimal solution of problem~\eqref{LP:eq:LP_general}
is the \emph{unique} optimal solution of the problem with perturbed cost
\begin{align}
\begin{split}
  \min_\bx \: & \: (c + \Delta)^\top \bx
  \\
  \subj \: & \: a_k^\top \bx \le b_k, \hspace{0.5cm} k \in \until{K}.
\end{split}
\label{LP:eq:LP_perturbed}
\end{align}
Thus, the lex-optimal solution
of problem~\eqref{LP:eq:LP_general} exists if and only if
problem~\eqref{LP:eq:LP_perturbed} admits an optimal solution.
Moreover, the optimal solution of~\eqref{LP:eq:LP_perturbed} is attained
at a vertex of~\eqref{LP:eq:LP_general}, therefore it is an
optimal vertex of problem~\eqref{LP:eq:LP_general}.

\bigskip \bigskip

\noindent\textbf{Acknowledgements}

This work is part of a project that has received funding from the European 
Research Council (ERC) under the European Union's Horizon 2020 research 
and innovation programme (grant agreement No 638992 - OPT4SMART)

\vspace{2.cm}

\centering
\begin{tikzpicture}
  \node[anchor=east] at(0,0) {\includegraphics[height=2.3cm]{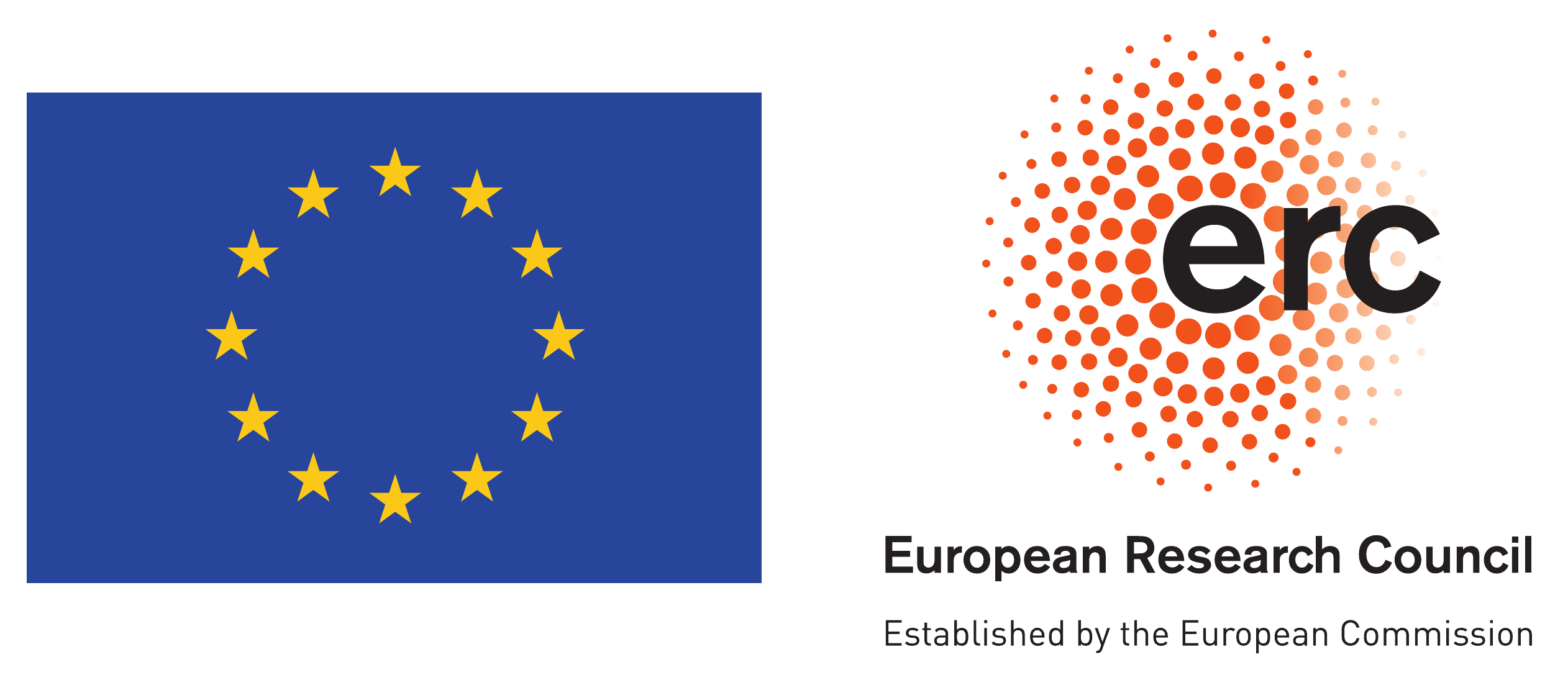}};
  \node[anchor=west] at(1,0.2) {\includegraphics[height=1.8cm]{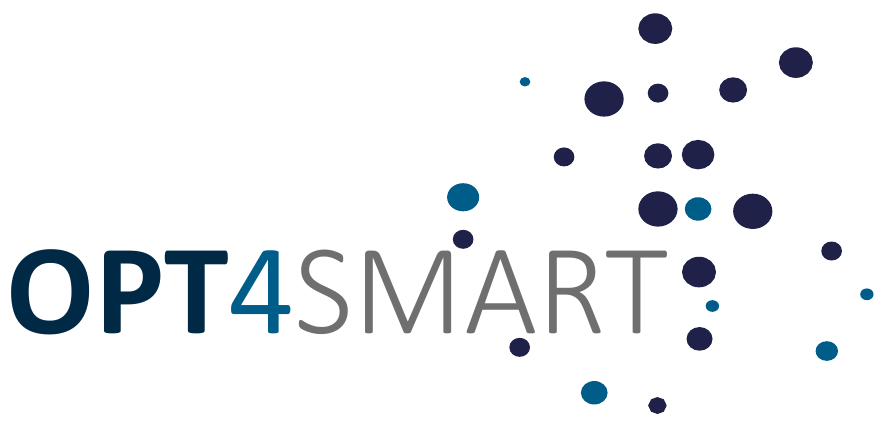}};
\end{tikzpicture}

\fi

\clearpage

\bibliographystyle{IEEEtran}
\bibliography{distributed_tutorial}

\end{document}